\def \ifempty#1{\def\temp{#1} \ifx\temp\empty }
\newcommand{\str}[1]{\textsc{#1}}
\newcommand{\var}[1]{\textit{#1}}
\newcommand{\op}[1]{\textsl{#1}}
\newcommand{\msg}[2]{\ensuremath{\ifempty{#2} [\str{#1}] \else [\str{#1}, {#2}] \fi}}
\newcommand{\false}{\textsc{false}\xspace}
\newcommand{\true}{\textsc{true}\xspace}
\newcommand{\etal}{\emph{et al.}\xspace}
\newcommand{\BF}{\ensuremath{\mathbb{F}}\xspace}
\newcommand{\CF}{\ensuremath{\mathcal{F}}\xspace}
\newcommand{\CG}{\ensuremath{\mathcal{G}}\xspace}
\newcommand{\CK}{\ensuremath{\mathcal{K}}\xspace}
\newcommand{\CP}{\ensuremath{\mathcal{P}}\xspace}
\newcommand{\CQ}{\ensuremath{\mathcal{Q}}\xspace}
\newcommand\xip{\var{$x_i$}\xspace}
\newcommand\pip{\var{$p_i$}\xspace}
\newcommand\pjp{\var{$p_j$}\xspace}
\newcommand\sjp{\var{$S_j$}\xspace}
\newcommand\tjp{\var{$T_j$}\xspace}
\newcommand\xjp{\var{$x_j$}\xspace}
\begin{document}

\definecolor{mycolor}{HTML}{46C5DD}
\def\scolor{blue}

\definecolor{mycolor2}{HTML}{EF559F}
\def\scolorf{mycolor2}

\def\scolort{red}
\def\sopacity{0.5}

\title{DAG-based Consensus with Asymmetric Trust \footnotesize{(Extended Version)}}

\acmConference{PODC'25}{June 16--20, 2025}{Huatulco, Mexico}

\author{Ignacio Amores-Sesar}
\affiliation{%
  \institution{Aarhus University}
  \country{Denmark}
}

\author{Christian Cachin}
\affiliation{%
  \institution{University of Bern}
  \country{Switzerland}
}

\author{Juan Villacis}
\affiliation{%
  \institution{University of Bern}
  \country{Switzerland}
}

\author{Luca Zanolini}
\affiliation{%
  \institution{Ethereum Foundation}
  \country{United Kingdom}
}

\begin{abstract}
  In protocols with asymmetric trust, each participant is free to make its own
  individual trust assumptions about others, captured by an asymmetric quorum
  system. This contrasts with ordinary, symmetric quorum systems and with
  threshold models, where all participants share the same trust assumption.
  It is already known how to realize reliable broadcasts, shared-memory
  emulations, and binary consensus with asymmetric quorums.  In this work, we
  introduce Directed Acyclic Graph (DAG)-based consensus protocols with
  asymmetric trust. To achieve this, we extend the key building-blocks of the
  well-known DAG-Rider protocol to the asymmetric model.  Counter to
  expectation, we find that replacing threshold quorums with their asymmetric
  counterparts in the existing constant-round gather protocol does \emph{not}
  result in a sound asymmetric gather primitive. This implies that asymmetric
  DAG-based consensus protocols, specifically those based on the existence of
  common-core primitives, need new ideas in an asymmetric-trust
  model. Consequently, we introduce the first asymmetric protocol for
  computing a common core, equivalent to that in the threshold model. This leads to
  the first randomized asynchronous DAG-based consensus protocol with
  asymmetric quorums.  It decides within an expected constant number of rounds
  after an input has been submitted, where the constant depends on the quorum
  system.
\end{abstract}

\maketitle

\section{Introduction}
\label{sec:intro}

In recent years, research on consensus protocols has moved beyond serially constructed blockchains and toward \emph{Directed Acyclic Graphs (DAGs)}, which enable parallelization of transaction dissemination and ordering. Several works~\cite{DBLP:conf/podc/KeidarKNS21,DBLP:conf/ccs/SpiegelmanGSK22,DBLP:conf/eurosys/DanezisKSS22, DBLP:conf/ndss/BabelCDKKKST25} leverage DAGs to improve throughput (by concurrently batching transactions) and latency (by reducing bottlenecks inherent in a single leader or chain). Broadly, these protocols separate the dissemination of transactions from their ordering, but they differ in how clear this separation is and in the network assumptions.

For instance, Narwhal and Tusk~\cite{DBLP:conf/eurosys/DanezisKSS22} explicitly split the protocol into a mempool layer (Narwhal) responsible for fast and reliable transaction batching and dissemination, and a separate consensus layer (Tusk) that orders the batches locally based on shared randomness without extra communication. Recent protocols such as Bullshark~\cite{DBLP:conf/ccs/SpiegelmanGSK22} and Shoal~\cite{DBLP:journals/corr/abs-2306-03058, DBLP:conf/nsdi/Arun0S0S25} follow a similar approach and refine the consensus logic on top of the Narwhal DAG-based mempool to achieve partially synchronous consensus. By contrast, the pioneering asynchronous and randomized DAG-Rider~\cite{DBLP:conf/podc/KeidarKNS21} protocol integrates the consensus logic with a DAG mechanism that also disseminates transactions.

Despite the clear performance benefits of DAG-based protocols, all existing proposals rely on \emph{symmetric}, \emph{threshold} quorums among the participants. In other words, they employ global adversarial thresholds (e.g., assuming that at most \(f\) out of \(n\) participants are faulty) that are uniform across the participants. This model may fail to capture the heterogeneous and subjective trust relations that naturally arise in open and decentralized networks.

Concurrently, practical cryptocurrency networks have adopted such heterogeneous trust assumptions in their consensus protocols. In particular, the Unique Node Lists (UNLs) of Ripple~\cite{chase2018analysis,DBLP:conf/opodis/Amores-SesarCM20} and Federated Byzantine Quorum Systems of Stellar~\cite{mazieres2015stellar,DBLP:conf/sosp/LokhavaLMHBGJMM19} introduce mechanisms for the participants to locally specify which participants they deem trustworthy. In Ripple, the reliance on a recommended UNL published by one entity introduces a certain degree of centralization, however. For the Stellar protocol, ensuring safety and liveness in the presence of multiple local quorum slices poses intricate challenges~\cite{DBLP:conf/wdag/LosaGM19}.

Asymmetric quorum systems were introduced by Damgård \etal~\cite{DBLP:conf/asiacrypt/DamgardDFN07} and formalized as such by Alpos \etal~\cite{DBLP:journals/dc/AlposCTZ24}; they represent subjective trust, where the participants may have differing views of the trustworthiness of others. Their formal properties can be directly mapped to those of standard Byzantine quorum systems~\cite{DBLP:journals/dc/MalkhiR98}.
In particular, asymmetric quorum systems allow each participant to define its own conditions under which it deems other participants faulty, thereby relaxing the need for a single global adversarial assumption. This flexibility more accurately represents the varied trust relationships inherent in real-world settings, where participants may have fundamentally different views about others, in terms of reliability, security, expertise, and other attributes such as location or jurisdictions.

Existing work~\cite{DBLP:journals/dc/AlposCTZ24} presents protocols for reliable broadcasts, shared-memory emulations, and binary randomized consensus based on asymmetric quorum systems.
The protocols of the Ripple and Stellar networks demonstrate that certain forms of subjective trust can be integrated into specific consensus protocols. Their applications are limited to single-chain protocols, and a benchmark of the Stellar protocol~\cite{DBLP:conf/sosp/LokhavaLMHBGJMM19} shows much smaller throughput than that of DAG-protocols~\cite{DBLP:conf/ccs/SpiegelmanGSK22, DBLP:conf/ndss/BabelCDKKKST25}. The interplay between asymmetric trust assumptions and the concurrent processing power available in DAG-based protocols remains uncharted. 

In this paper, we introduce the first \emph{DAG-based protocols with asymmetric quorums} for computing a \emph{common core} and for randomized, asynchronous \emph{consensus}.  They expand asymmetric quorum systems to practical consensus methods that permit concurrent processing and achieve high performance.  We build on the influential DAG-Rider protocol~\cite{DBLP:conf/podc/KeidarKNS21} and change its structure to accommodate locally defined quorums. The DAG structure is employed to order transactions despite these potentially divergent views.  Inherent in DAG-Rider is a \emph{gather protocol} used to compute a \emph{common core}~\cite{canetti1996studies, DBLP:conf/wdag/AbrahamAM10}.  This primitive is weaker than yet another primitive called \textit{agreement on a core set}, however, which is equivalent to consensus.

Note that the existing broadcast and consensus protocols with asymmetric quorums~\cite{DBLP:journals/dc/AlposCTZ24} are obtained by replacing the quorum system in the standard protocols from the literature with an asymmetric quorum system.  It turns out that this heuristic fails for the standard three-round gather protocol. The combinatorial argument based on counting votes in the gather protocol does not extend to asymmetric quorums, and we exhibit a quorum system, for which the correspondingly extended standard three-round gather protocol fails to output a common core.

This leads us to formulating a novel constant-round protocol for computing a common core in the asymmetric model, which introduces extra communication steps.  Moreover, we explore other variants of the gather protocol that more closely resemble the usual ones in the threshold-quorum model.

Using the insight from the gather protocol, we then formulate a DAG-based asynchronous, randomized, and asymmetric consensus protocol.  Like the asymmetric gather protocol, it is neither following the heuristic of replacing standard quorums with asymmetric ones, but it includes additional steps.  This asymmetric consensus protocol decides within an expected constant number of rounds after an input has been submitted.  The constant itself depends on the quorum system, specifically, it is proportional to the ratio of the total number of participants to the size of the smallest quorum of any participant.

\subsection{Related Work}
\label{sec:relatedwork}

Threshold and symmetric quorums are widely used tools in distributed computing to prove the safety and liveness properties of numerous protocols~\cite{DBLP:journals/siamcomp/NaorW98,DBLP:journals/dc/MalkhiR98,AttiyaW04,DBLP:books/daglib/0025983}. They are at the heart of many practical systems in cloud computing and cryptocurrencies. One of their significant limitations is that all participants use the same quorums. This restricts the freedom of participants to choose whom to trust and leads to so-called \emph{permissioned} blockchains. The asymmetric model, introduced by Damg{\aa}rd~\etal~\cite{DBLP:conf/asiacrypt/DamgardDFN07} and further developed by Alpos~\etal~\cite{DBLP:journals/dc/AlposCTZ24}, represents a novel approach that generalizes the symmetric paradigm. It lets the participants make trust choices of their own, reflecting social connections or other information available to them outside the protocol. Namely, each participant can specify its own quorums, giving it the freedom to have its own trust assumptions. Algorithms relying on symmetric quorums cannot be mechanically extended to the asymmetric model, as no such powerful composition method exits. In particular, it could be the case that certain properties do not hold if the underlying quorums are asymmetric, and all proofs must be reassessed in the more general model. It is also necessary to redefine certain protocol properties to fit the new model, as there might be correct participants who have chosen the ``wrong friends'' and assumed too much about them.

Existing work~\cite{DBLP:journals/dc/AlposCTZ24} introduces asymmetric versions of relevant primitives like reliable broadcast, shared memory, binary consensus and a common-coin primitive. Ripple~\cite{chase2018analysis} and Stellar~\cite{mazieres2015stellar,DBLP:conf/sosp/LokhavaLMHBGJMM19} represent the two largest blockchain projects that use asymmetric trust. In Ripple, each participant must declare its own assumptions, that is, a list of other participating nodes that it trusts and from which it will consider votes \cite{chase2018analysis,DBLP:conf/opodis/Amores-SesarCM20}. Stellar uses a similar approach, where each participant keeps a list of participants it considers important and waits for a suitable majority of them to agree on a transaction before it is considered settled~\cite{mazieres2015stellar,DBLP:conf/wdag/LosaGM19,DBLP:conf/sosp/LokhavaLMHBGJMM19}. The protocols in the Stellar network differ from the established BFT-consensus algorithms like PBFT and HotStuff~\cite{DBLP:journals/tocs/CastroL02,DBLP:conf/podc/YinMRGA19}.

Directed Acyclic Graph (DAG)-based techniques represent a new approach to solving the consensus problem in a leaderless setting. Longest-chain protocols have traditionally received the most attention, but the usage of DAGs presents an opportunity to improve performance. Their decoupling of the communication and consensus layers~\cite{DBLP:conf/podc/KeidarKNS21, DBLP:conf/eurosys/DanezisKSS22}, as well as the parallelization of message proposals, increase their efficiency. Amores-Sesar~and~Cachin~\cite{DBLP:conf/esorics/Amores-SesarC24} proved that for every chain protocol $\Pi$ there is a DAG protocol $\Pi'$ that is safe and live under the same assumptions as $\Pi$, with the same or better latency and better throughput. DAG-Rider~\cite{DBLP:conf/podc/KeidarKNS21} is the pioneering protocol that proposes the usage of a certified DAG, a safe commit rule based on it, and the reliance on a common core. Unfortunately, DAG-Rider requires unbounded memory to be able to provide fairness, making it unsuitable for practical applications. Subsequent protocols, like Narwhal-Tusk\cite{DBLP:conf/eurosys/DanezisKSS22} and Bullshark~\cite{DBLP:conf/ccs/SpiegelmanGSK22} build on it, address these issues, and provide practical implementations that have also been deployed in practice.

The Sui blockchain is one of the main cryptocurrency projects that has deployed DAG-based consensus~\cite{DBLP:conf/opodis/Sonnino24,DBLP:conf/ccs/BlackshearCDKKL24}. It initially used Bullshark and subsequently switched to the Mysticeti protocol~\cite{DBLP:conf/ndss/BabelCDKKKST25}. The latter achieves consistently higher throughput than most other blockchains, reaching over 200K transactions per second in a recent update~\cite{DBLP:conf/ndss/BabelCDKKKST25}. 

Current research has focused on decreasing the latency of DAG protocols. The reliable broadcast and the wave structure used by many algorithms~\cite{DBLP:conf/podc/KeidarKNS21, DBLP:conf/ccs/SpiegelmanGSK22, DBLP:conf/eurosys/DanezisKSS22} has negative impacts on the latency. Mysticeti proposes a solution to this issue by switching from certified to uncertified DAGs, where each vertex in the graph is instead delivered through consistent broadcast. With this approach, Mysticeti also improved the latency of the Sui blockchain by a factor of four~\cite{DBLP:conf/ndss/BabelCDKKKST25}. Cordial Miners~\cite{DBLP:conf/wdag/KeidarNPS23} builds upon DAG-Rider and improves its latency by forgoing the reliable broadcast mechanism. In DAG-Rider, reliable broadcast ensures two key properties: first, that all processes receive the same set of messages, and second, that Byzantine processes cannot generate conflicting blocks. To guarantee the first property, each process maintains a set of blocks received from every other process and includes any missing blocks when sending new ones. For the second property, the protocol extends the length of the wave, enabling honest processes to detect when a Byzantine process has sent multiple blocks in a given round. With these two relatively simple yet elegant modifications, Cordial Miners creates a DAG-based protocol that reduces the latency of Bullshark by half.

\subsection{Structure of the Paper}
The remainder of this paper is organized as follows. Section~\ref{sec:preliminaries} introduces the model and reviews key concepts that serve as the foundation for our work. Section~\ref{sec:aGather} highlights the limitations of existing common core primitives in the asymmetric setting and presents the first asymmetric gather. Section~\ref{sec:adagrider}, the main contribution of this paper, introduces our asymmetric DAG-based consensus protocol and proves its correctness. Finally, conclusions are drawn in Section~\ref{sec:conclusion}. Appendix~\ref{ap:gather-counterexample} contains a (mechanical) proof for the impossibility of extending the usual gather protocol to asymmetric quorums.

\section{Preliminaries}
\label{sec:preliminaries}
\subsection{Model}

 We will consider a system of $n$ \emph{processes} $\mathcal{P}=\{p_1, \dots, p_n\}$ that interact asynchronously with each other by exchanging messages. A protocol for $\mathcal{P}$ consists of a collection of programs with instructions for all processes. They are presented using the event-based notation of Cachin~\etal~\cite{DBLP:books/daglib/0025983}. An execution starts with all processes in a special initial state; subsequently the processes repeatedly trigger events, react to events, and change their state through computation steps.  A process that follows its protocol during an execution is called \textit{correct}. A \textit{faulty} process, also called \textit{Byzantine}, may crash or deviate arbitrarily from its specification. We will assume that there is a low-level functionality for sending messages over point-to-point links between each pair of processes. This functionality is accessed through the events of \textit{sending} and \textit{receiving} a message. Point-to-point messages are authenticated and delivered reliably among correct processes. Here and from now on, the notation $A^*$ for a system $A \subseteq 2^\mathcal{P}$ , denotes the collection of all subsets of the sets in $A$, that is, $A^* = \{A' | A' \subseteq A, A \in A\}$

\subsection{Review of Symmetric Trust}
\label{sec:symmetric-trust}

Byzantine quorum systems are an essential concept in distributed systems, particularly for settings involving Byzantine faults. These systems, first introduced by Malkhi and Reiter~\cite{DBLP:journals/dc/MalkhiR98}, extend traditional quorum systems~\cite{DBLP:journals/siamcomp/NaorW98} by accounting for adversarial or arbitrary behavior among faulty processes. They are defined concerning a \emph{(symmetric) fail-prone system}, a collection of subsets of processes (\emph{fail-prone sets)} that specifies all potential failure patterns a protocol can tolerate. This framework allows protocols to achieve resilience as long as the set of actually faulty processes is a subset of these predefined fail-prone sets. A \emph{(symmetric) Byzantine quorum system} comprises a collection of \emph{quorums} (sets of processes) that satisfy two main properties: \emph{consistency}, ensuring that any two quorums intersect in at least one non-faulty process, and \emph{availability}, guaranteeing that for any fail-prone set, there exists a quorum disjoint from it. These properties enable the design of protocols that maintain correctness despite Byzantine failures.

The feasibility of Byzantine quorum systems depends on certain conditions, such as the \emph{\( Q^3 \)-condition}~\cite{DBLP:journals/dc/MalkhiR98}, which requires that no three fail-prone sets collectively cover all processes in the system. This ensures that the system can function correctly under the specified fault tolerance assumptions. 

By generalizing threshold-based fault tolerance assumptions, Byzantine quorum systems provide a robust framework for building distributed protocols that are resilient to complex failure patterns. 

\subsection{Asymmetric Trust}

In the asymmetric setting~\cite{DBLP:journals/dc/AlposCTZ24} each process is free to make its own trust assumptions. This is expressed through an \emph{asymmetric fail-prone system} $\mathbb{F} = [\mathcal{F}_1, \dots, \mathcal{F}_n]$, where $\mathcal{F}_i$ represents the trust assumptions  of process $p_i$. Each $\mathcal{F}_i$ is a collection of subsets of $\mathcal{P}$ such that some $F \in \mathcal{F}_i$ with $F \subseteq \mathcal{P}$ is called a fail-prone set for $p_i$ and contains all processes that, according to $p_i$, may at most fail together in some execution \cite{DBLP:conf/asiacrypt/DamgardDFN07}. We can in turn proceed to define the asymmetric Byzantine quorum systems as follows \cite{DBLP:journals/dc/AlposCTZ24}. 

\begin{definition}
\label{def:abqs}
An \emph{asymmetric Byzantine quorum system} for $\mathds{F}$ is an array of collections of sets $\mathbb{Q} = [\mathcal{Q}_1, \cdots, \mathcal{Q}_n]$ where $\mathcal{Q}_i \subseteq 2^{\mathcal{P}}$ for $i \in [1, n]$. The set $\mathcal{Q}_i \subseteq 2^{\mathcal{P}}$ is a symmetric quorum system of $p_i$ and any set $Q_i \in \mathcal{Q}_i$ is called a quorum for $p_i$. The system $\mathbb{Q}$ must satisfy the following two properties.

\begin{description}
    \item[Consistency:] The intersection of two quorums for any two processes contains at least one process for which either process assumes that it is not faulty, i.e.,
    $$
    \forall i, j \in [1, n], \forall Q_i \in \mathcal{Q}_i, \forall Q_j \in \mathcal{Q}_j, \forall F_{ij} \in \mathcal{F}_i^* \cap \mathcal{F}_j^*: Q_i \cap Q_j \nsubseteq F_{ij}.
    $$

    \item[Availability:] For any process $p_i$ and any set of processes that may fail together according to $p_i$, there exists a disjoint quorum for $p_i$ in $\mathcal{Q}_i$, i.e.,
    $$
        \forall i \in [1, n], \forall F_i \in \mathcal{F}_i : \exists Q_i \in \mathcal{Q}_i : F_i \cap Q_i = \emptyset.
    $$
\end{description}    
\end{definition}

Given an asymmetric quorum system $\mathbb{Q}$ for $\mathbb{F}$, an asymmetric kernel system for $\mathbb{Q}$ is defined analogously as the array $\mathbb{K} = [\mathcal{K}_1, \dots, \mathcal{K}_n]$ that consists of the kernel systems for all processes in $\mathcal{P}$. A set $K_i \in \mathcal{K}_i$ is called a \emph{kernel} for $p_i$ and for each $K_i$ it holds that $\forall Q_i \in \mathcal{Q}_i, \ K_i \cap Q_i \neq \emptyset $, that is, a kernel intersects all quorum sets of a process.

The set of processes that fail is denoted by $F$. The members of $F$ are unknown to the processes and can only be identified by an outside observer and the adversary.  A process $p_i$ correctly foresees $F$ if $F \in \mathcal{F}_i^* = \{ F'' \ | \  F'' \subseteq F', F' \in \mathcal{F} \}$, that is, $F$ is contained in one of its fail-prone sets. Based on this information it is possible to classify processes in three categories.

\begin{description}
    \item[Faulty:] a process $p_i \in F$;
    \item[Naive:] a correct process $p_i$ where $F \notin \mathcal{F}^*_i$; and
    \item[Wise:] a correct process $p_i$ where $F \in \mathcal{F}^*_i$
\end{description}

Alpos~\etal~\cite{DBLP:journals/dc/AlposCTZ24} show that naive processes might affect the safety and liveness guarantees of some protocols. In order to formalize this notion, they introduced the concept of a \textit{guild}. 
\begin{definition}
    
A guild is a set of wise processes that contains one quorum for each member. Formally, a guild $\mathcal{G}$ for $\mathds{F}$ and $\mathds{Q}$ is a set of processes that satisfies the following two properties.
\begin{description}
    \item[Wisdom:] $\mathcal{G}$ is a set of wise processes, that is,
    $$
    \forall p_i \in \mathcal{G}: F \in \mathcal{F}_i^*.
    $$

    \item[Closure:] $\mathcal{G}$ contains a quorum for each of its members, that is,
    $$
    \forall p_i \in \mathcal{G} : \exists Q_i \in \mathcal{Q}_i: Q_i \subseteq \mathcal{G}.
    $$
\end{description}
\end{definition}

The existence of asymmetric quorum systems can be characterized with a
property that generalizes the $Q^3$-condition for the underlying asymmetric
fail-prone systems as follows.
  
\begin{definition}[$B^3$-condition]
\label{def:b3}
  An asymmetric fail-prone system \BF satisfies the
  \emph{$B^3$-condition}, abbreviated as $B^3(\BF)$, whenever it holds that
  \[
    \forall i,j \in [1,n],
    \forall F_i \in \CF_i, \forall F_j\in\CF_j,
    \forall F_{ij} \in {\CF_i}^*\cap{\CF_j}^*: \,
    \CP \not\subseteq F_i \cup F_j \cup F_{ij} 
  \]
\end{definition}

The $B^3$ property of a fail-prone system $\mathbb{F}$ is intimately connected to the existence of an asymmetric quorum system for $\mathbb{F}$, as shown by the following theorem.
\begin{theorem}\label{thm:asymcanon}[Alpos et al.~\cite{DBLP:journals/dc/AlposCTZ24}]
  An asymmetric fail-prone system \BF satisfies $B^3(\BF)$ if and only if
  there exists an asymmetric quorum system for \BF.
\end{theorem}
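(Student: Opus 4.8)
The plan is to prove the two implications separately, following the classical argument that the $Q^3$-condition characterizes the existence of symmetric Byzantine quorum systems, but carefully tracking the per-process collections $\mathcal{F}_i$, $\mathcal{Q}_i$ and, crucially, the downward closures $\mathcal{F}_i^{*}$.

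For the ``only if'' direction I would assume that some asymmetric quorum system $\mathbb{Q}$ for $\mathbb{F}$ exists and fix arbitrary $i,j \in [1,n]$, $F_i \in \mathcal{F}_i$, $F_j \in \mathcal{F}_j$, and $F_{ij} \in \mathcal{F}_i^{*} \cap \mathcal{F}_j^{*}$. Applying Availability to $p_i$ with $F_i$ yields a quorum $Q_i \in \mathcal{Q}_i$ with $Q_i \cap F_i = \emptyset$, and likewise a quorum $Q_j \in \mathcal{Q}_j$ with $Q_j \cap F_j = \emptyset$. Consistency then provides a process $p_k \in Q_i \cap Q_j$ with $p_k \notin F_{ij}$; since $p_k \in Q_i$ and $p_k \in Q_j$, it also avoids $F_i$ and $F_j$, so $p_k \notin F_i \cup F_j \cup F_{ij}$, which witnesses $\mathcal{P} \not\subseteq F_i \cup F_j \cup F_{ij}$. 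As the indices and sets were arbitrary, $B^3(\mathbb{F})$ holds.

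For the ``if'' direction I would assume $B^3(\mathbb{F})$ and exhibit the canonical (``bar'') construction applied componentwise: set $\mathcal{Q}_i := \{\, \mathcal{P} \setminus F \mid F \in \mathcal{F}_i \,\}$ for each $i$. Availability is immediate, because $\mathcal{P} \setminus F_i \in \mathcal{Q}_i$ is disjoint from $F_i$. For Consistency, take any $Q_i = \mathcal{P} \setminus F_i$ in $\mathcal{Q}_i$, any $Q_j = \mathcal{P} \setminus F_j$ in $\mathcal{Q}_j$, and any $F_{ij} \in \mathcal{F}_i^{*} \cap \mathcal{F}_j^{*}$; then $Q_i \cap Q_j = \mathcal{P} \setminus (F_i \cup F_j)$, so $Q_i \cap Q_j \subseteq F_{ij}$ would force $\mathcal{P} \subseteq F_i \cup F_j \cup F_{ij}$, contradicting $B^3(\mathbb{F})$. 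Hence $Q_i \cap Q_j \nsubseteq F_{ij}$, and $\mathbb{Q}$ is an asymmetric quorum system for $\mathbb{F}$.

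Both halves are short quantifier chases, and I expect no substantive obstacle. The one point that needs care — and the place where a careless asymmetric generalization could break — is that the $F_{ij}$ in the Consistency property ranges over $\mathcal{F}_i^{*} \cap \mathcal{F}_j^{*}$, exactly as in the $B^3$-condition, so the downward closure must appear on both sides for the bar construction to match neither over- nor under-shoot; one also has to remember that in the ``only if'' direction $F_i$ and $F_j$ need only be genuine fail-prone sets (members of $\mathcal{F}_i$, $\mathcal{F}_j$) for Availability to apply, which is precisely how $B^3$ is quantified.
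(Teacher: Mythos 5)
Your proof is correct and is essentially the standard argument: the paper itself states this theorem only by citation to Alpos et al., and their proof uses exactly your two steps — the quantifier chase via Availability plus Consistency for the ``only if'' direction, and the canonical construction $\mathcal{Q}_i = \{\mathcal{P}\setminus F \mid F\in\mathcal{F}_i\}$ for the ``if'' direction. No gaps; your remark about $F_{ij}$ ranging over $\mathcal{F}_i^{*}\cap\mathcal{F}_j^{*}$ is indeed the one place where care is needed, and you handle it correctly.
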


\subsection{Gather Protocol}
\label{sec:symGather}

A \emph{gather} protocol, initially introduced by Canetti and Rabin~\cite{DBLP:conf/stoc/CanettiR93}, is used to obtain a \emph{common core set}  of values among the processes in a system. Definition~\ref{def:gather} describes it. 
\begin{definition}[Gather]
\label{def:gather}
In the gather protocol, each process \(p_i\) \textit{g-proposes}  an input, and each process must \textit{g-deliver} a set of received values and their corresponding senders, represented as pairs \((p_j, x)\), where \(x\)  is the value input by process $p_j$. The protocol satisfies the following properties:
\begin{description}
    \item[Common Core:] There exists a core set \(S^+\) of size at least \(n - f\) process-output pairs, with $f$ the failure parameter, such that all non-faulty processes include \(S^+\) in their output set.
    \item[Validity:] If a non-faulty process includes \((j, x_j)\) in its output set, where \(j\) is a non-faulty process, then \(x_j\) must be \(j\)'s input.
    \item[Agreement:] All correct processes that include a pair \((j, x)\) for any process \(j\) in their output set agree on the value \(x\). More precisely, if two non-faulty processes include the pairs \((j, x)\) and \((j, x')\) in their outputs, then \(x = x'\).
\end{description}
\end{definition}

What differentiates it from \emph{agreement on a core set (ACS)}~\cite{DBLP:conf/stoc/Ben-OrCG93}, which enforces agreement on an identical output set for all processes, is that the {gather protocol} only asks that the common core is included in the output set of each process. This distinction makes gather more efficient, operating in a constant number of rounds~\cite{DBLP:journals/dc/Ben-OrE03}, whereas ACS, being a consensus equivalent, can only provide an expected constant execution time.

Algorithm~\ref{alg:symgather} shows a simple implementation of gather taken from Abraham~\etal~\cite{DBLP:conf/wdag/AbrahamAM10}. It consists of three rounds of broadcasting and collecting values. Each process starts the execution with an initial value $x_i$, and during each round, a process collects messages coming from $n-f$ other processes. Once it has received such messages, it combines them into a new set that it will broadcast during the next round. After the third round, the process delivers the resulting set. Thanks to combinatorial properties, all processes will share at least $n-f$ common elements in their delivered sets \cite{DBLP:conf/stoc/CanettiR93}. 

\begin{algo}
  \vbox{
  \small
  \begin{numbertabbing}
    xx\=xx\=xx\=xx\=xx\=xx\=MMMMMMMMMMMMMMMMMMM\=\kill
    \textbf{state} \label{}\\
    
    \> \(S \gets \emptyset\) \label{}\\
    \> \(T \gets \emptyset\) \label{}\\
    \> \(U \gets \emptyset\) \label{}\\

    \\
    \textbf{upon}  \op{g-propose($x_i$)}  \textbf{do} \label{}\\  
        \> \op{rb-broadcast(\((\pip, \xip)\))} \label{line:symgather-init}\\
    \\

    \textbf{upon}  \op{rb-deliver((\pjp, \xjp))}  \textbf{do} \label{}\\  
        \> \( S \gets S \cup \{ (\pjp, \xjp) \}  \) \label{line:symgather-combine1}\\
    \\

    \textbf{upon} having received $n-f$ broadcasts  \textbf{do}  \label{line:symgather-round1}\\  
        \> send \( \msg{DistributeS}{\pip, S}\) to all $p \in \mathcal{P}$ \label{line:symgather-broadcasts}\\
    \\

    \textbf{upon} receiving \( \msg{DistributeS}{\pjp, \sjp}\) \textbf{do} \label{}\\  
        \> \( T \gets T \cup \sjp  \) \label{line:symgather-combine2}\\
    \\

    \textbf{upon} having received $n-f$ \(\str{DistributeS}\) messages   \textbf{do}  \label{line:symgather-round2}\\  
        \> send \( \msg{DistributeT}{\pip, T}\) to all $p_j \in \mathcal{P}$ \label{}\\
    \\

    \textbf{upon} receiving \( \msg{DistributeT}{\pjp, \tjp}\) from  \(  p_j\) \textbf{do} \label{}\\  
        \> \(U \gets U \cup \tjp\) \label{line:symgather-combine3}\\   
    \\

    \textbf{upon} having received $n-f$ \(\str{DistributeT}\) messages \textbf{do} \label{line:symgather-round3}\\  
        \> \op{g-deliver(U)} \label{line:symgather-deliver}
     \end{numbertabbing}

  }
  \caption{Symmetric Gather (process $p_i$).}
  \label{alg:symgather}
\end{algo}

Gather has many applications in distributed computing, including areas such as distributed key generation~\cite{DBLP:conf/podc/AbrahamJMMST21} and atomic broadcast protocols~\cite{DBLP:conf/podc/KeidarKNS21, DBLP:conf/ccs/SpiegelmanGSK22}. We focus on the latter, specifically on its use by the DAG-Rider protocol, which heavily relies on it to prove its liveness.

An additional property that is relevant in relation to DAG-based consensus protocols is \emph{binding common core}. A \emph{binding} gather protocol is one in which, once the first correct process has delivered the final set, the common core is fixed and cannot change. Shoup~\cite{DBLP:journals/iacr/Shoup24a} showed how an adversary can take advantage of a non-binding common core in the Tusk protocol~\cite{DBLP:conf/eurosys/DanezisKSS22} to prevent processes from delivering messages. The common core protocol of Algorithm~\ref{alg:symgather}, which is used by DAG-Rider, is not binding. To achieve this property, one extra round of communication is needed~\cite{DBLP:conf/wdag/AbrahamAM10}. DAG-Rider overcomes this issue by only revealing the common coin value once at least $2f-1$ processes have finished the gather execution.

\section{Asymmetric Gather}
\label{sec:aGather}

In this section, we introduce the concept of \emph{asymmetric gather}, a generalization of the approach discussed in Section~\ref{sec:symGather}. We present a special constant-round algorithm for the asymmetric setting in Section~\ref{sec:asymmetric-gather-constant-round}. This is motivated by two aspects. First, a constant execution time is necessary to not negatively impact the latency of any protocol building on top of the gather. Second, the counterexample presented in Section~\ref{subsec:direct-translation} shows that direct translations of constant-round common core primitives are not suitable for the asymmetric world. 

By using the quorum consistency property, it is possible to prove that the exact same structure of the threshold gather~\cite{DBLP:conf/wdag/AbrahamAM10} can be maintained in the asymmetric setting (i.e., witching rounds after receiving messages from a quorum). This comes in detriment of the number of rounds. Such an algorithm would require $\log n$ rounds to achieve the common core. This negatively impacts the latency of any protocol built on top of it, highlighting the need for a new approach  to solve the problem when using asymmetric trust.

\subsection{Protocol Definition}

In the asymmetric setting, the cardinality of sets does not inherently provide guarantees, requiring us to look beyond size to identify properties that can characterize the asymmetric gather protocol. The \emph{common core} property must be adapted, as having a set of size \( n-f \) no longer has any meaning. Furthermore, the \emph{validity} and \emph{agreement} properties must be redefined to reflect the revised characterizations of naive and wise processes.

An asymmetric gather protocol is accessed through the interfaces \textit{ag-propose} and \textit{ag-deliver}. When a process invokes \(\textit{ag-propose}(x)\), it uses \(x\) as its input to the protocol. By calling \textit{ag-deliver}, a process obtains the output set produced by the execution of the protocol. This set contains tuples \((p_j, x_j)\), indicating that party \(p_j\) has \emph{ag-proposed} $x_j$. Definition~\ref{def:agather} formally defines an asymmetric gather protocol.

\begin{definition}[Asymmetric Gather]
\label{def:agather}
A protocol for \textit{asymmetric gather} satisfies the following properties:

\begin{description}
    \item[Common Core:]
    In any execution with a guild, there exists a common set \( S^+ \) composed of the values \textit{ag-proposed} by the processes belonging to a quorum \( Q_i \in \mathcal{Q}_i \) for some process \( p_i \) in the maximal guild. For every process in the maximal guild that \textit{ag-delivers} a set \( U \), it holds that \( S^+ \subseteq U \).

    \item[Validity:]
    In any execution with a guild, if a process in the maximal guild includes \( (p_j, x_j) \) in the set it \textit{ag-delivers}, and \( p_j \) is a wise process, then \( p_j \) must have \textit{ag-proposed} \( x_j \).

    \item[Agreement:]
    In any execution with a guild, for any two sets \( U \) and \( U' \) \textit{ag-delivered} by processes in the maximal guild, if \( (p_j, x) \in U \) and \( (p_j, x') \in U' \), then \( x = x' \).
\end{description}

\end{definition}

\subsection{Common Core Primitives in the Asymmetric World}
\label{subsec:direct-translation}
Many round-based DAG consensus protocols \cite{DBLP:conf/podc/KeidarKNS21, DBLP:conf/eurosys/DanezisKSS22, DBLP:conf/ccs/SpiegelmanGSK22} rely on common core primitives like gather for their commit rules. Their usage guarantees that after several rounds of receiving and broadcasting vertices there will exist a common set of vertices in the local DAGs of all processes. DAG-Rider and Bullshark employ the gather protocol~\cite{DBLP:conf/stoc/CanettiR93}, which requires 3 rounds, while Tusk uses a simpler 2 round common core primitive. 

In order to obtain asymmetric equivalents of such consensus algorithms, the development of an asymmetric common core primitive is a necessary step. The standard approach to obtain asymmetric translations of symmetric (threshold) algorithms is to replace threshold quorums (i.e., sets of size more than $\frac{n+f}{2}$) with asymmetric quorums and threshold kernels with asymmetric kernels. This technique is employed by Alpos \etal~\cite{DBLP:journals/dc/AlposCTZ24} to obtain asymmetric versions of consistent broadcast, reliable broadcast, common coin and binary consensus.

For example, they demonstrate how Bracha broadcast~\cite{DBLP:journals/iandc/Bracha87} can be seamlessly extended to an asymmetric setting. Bracha's protocol ensures reliable broadcast under Byzantine faults by utilizing symmetric trust. In this protocol, a process delivers a message \( m \) after receiving \( 2f + 1 \) \str{Ready} messages containing \( m \), in line with the symmetric model’s threshold Byzantine quorum system. Furthermore, if a process receives \( f + 1 \) \str{Ready} messages for \( m \) but has not yet sent a \str{Ready} message, it does so to ensure totality. For asymmetric quorums, these conditions are generalized to require the message to be received from a quorum and a kernel for \( p_i \) respectively.

A natural starting step to obtain an asymmetric version of gather is to follow the same approach. Instead of each process waiting to receive messages from $n-f$ processes, as in the original algorithm \cite{DBLP:conf/wdag/AbrahamAM10}, each process would wait to receive messages from a set of processes $\mathcal{P} \setminus  F_i$ for one of its fail-prone sets $F_i \in \mathcal{F}_i$. If this approach worked, a process $p_i$ would have a common core set $S^+$ with all other processes after four rounds. This attempt to obtain an asymmetric gather is shown in Algorithm~\ref{alg:asymgather-try}. Unfortunately, as shown by Lemma~\ref{lem:agather-no-common-core}, the nature of asymmetric quorums hinders this outcome. The combinatorial arguments used to prove the protocol in the threshold setting are not applicable in the asymmetric world. The lemma is proven in Appendix~\ref{ap:gather-counterexample}. By using the quorum consistency property it can be shown that such approach requires a logarithmic number of rounds to reach a common core, which is undesirable in our setting. Until this point all asymmetric protocols have been obtained by replacing quorums, this is the first protocol where the standard approach fails.

\begin{algo}
  \vbox{
  \small
  \begin{numbertabbing}
    xx\=xx\=xx\=xx\=xx\=xx\=MMMMMMMMMMMMMMMMMMM\=\kill
    \textbf{state} \label{}\\
    
    \> \(S \gets \emptyset\) \label{}\\
    \> \(T \gets \emptyset\) \label{}\\
    \> \(U \gets \emptyset\) \label{}\\

    \\
    \textbf{upon}  \op{ag-propose($x_i$)}  \textbf{do} \label{}\\  
         \op{arb-broadcast(($\pip, \xip$))} \` // Broadcast interface of an asymmetric reliable             broadcast protocol \label{}\\
    \\

     \textbf{upon}  \op{arb-deliver((\pjp, \xjp))}  \textbf{do} \` // Deliver interface of an asymmetric reliable broadcast protocol \label{}\\  
        \> \( S \gets S \cup \{ (\pjp, \xjp) \}  \) \label{line:asymgathertry-combine1}\\
    \\

    \textbf{upon} \( \exists Q_i \in \CQ_i, \forall \pjp \in Q_i:  \) \op{arb-delivered} from \(\pjp\) \textbf{do}  \label{line:asymgathertry-round1}\\  
        \> send \( \msg{DistributeS}{\pip, S}\) to all $p \in \mathcal{P}$ \label{}\\
    \\

    \textbf{upon} receiving \( \msg{DistributeS}{\pjp, \sjp}\) \textbf{do} \label{}\\  
        \> \( T \gets T \cup \sjp  \) \label{line:asymgathertry-combine2}\\
    \\

    \textbf{upon} \( \exists Q_i \in \CQ_i, \forall \pjp \in Q_i:  \) received $\str{DistributeS}$ message from \(\pjp\)  \textbf{do} \label{line:asymgathertry-round2}\\  
        \> send \( \msg{DistributeT}{\pip, T}\) to all $p_j \in \mathcal{P}$ \label{}\\
    \\

    \textbf{upon} receiving \( \msg{DistributeT}{\pjp, \tjp}\) from  \(  p_j\) \textbf{do} \label{}\\  
        \> \(U \gets U \cup \tjp\) \label{line:asymgathertry-combine3}\\   
    \\

     \textbf{upon} \( \exists Q_i \in \CQ_i, \forall \pjp \in Q_i:  \) received $\str{DistributeT}$ message from \(\pjp\) \textbf{do} \label{line:asymgathertry-round3}\\  
        \> \op{ag-deliver($U$)} \label{line:asymgathertry-deliver}
     \end{numbertabbing}

  }
  \caption{Asymmetric Gather attempt using quorum replacements (process $p_i$).}
  \label{alg:asymgather-try}
\end{algo}

\begin{lemma}
    \label{lem:agather-no-common-core}
    Algorithm~\ref{alg:asymgather-try} does not satisfy the common core property
\end{lemma}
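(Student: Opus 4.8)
The plan is to prove the lemma by exhibiting an explicit counterexample: a concrete asymmetric fail-prone system $\BF$ together with an asymmetric quorum system $\BQ$ for it, and an adversarial asynchronous schedule under which Algorithm~\ref{alg:asymgather-try} produces outputs that violate the common-core property of Definition~\ref{def:agather}. The reason to expect such a counterexample to exist is that the correctness proof of the symmetric three-round gather is a pure counting argument: with $n=3f+1$, the first $f+1$ processes to complete a round form a set that every later quorum of size $n-f\ge 2f+1$ must intersect (since $(f+1)+(2f+1)>n$), and chaining this observation across the three rounds pins down a single round-one snapshot that ends up inside every delivered set. In the asymmetric model this chain breaks at its first link: the only intersection guarantee available is the quorum-consistency property of Definition~\ref{def:abqs}, namely $Q_i\cap Q_j\nsubseteq F_{ij}$ for common fail-prone sets $F_{ij}$ — far too weak to force a ``first few finishers'' set to be hit by all subsequent quorums, especially when different processes trust very different, and possibly small, quorums.

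Concretely, I would proceed as follows. \textbf{(1)} Fix a small process set $\CP$ (a handful of processes suffices) and define each $\CF_i$ so that the $B^3$-condition of Definition~\ref{def:b3} holds — this guarantees, via Theorem~\ref{thm:asymcanon}, that an asymmetric quorum system for $\BF$ exists — and choose $\BQ$ (for instance the canonical one) so that some processes have quorums supported on one subset of $\CP$ while others have quorums supported on an as-disjoint-as-possible subset; then, for a suitable actual failure set $F$, exhibit a guild $\CG$ that contains wise processes of both kinds, so that the common-core property is genuinely required to hold. \textbf{(2)} Describe the asynchronous execution: the adversary schedules the underlying asymmetric reliable broadcast and the point-to-point \str{DistributeS}/\str{DistributeT} messages so that each guild member fires its round-switching rules \ref{line:asymgathertry-round1}, \ref{line:asymgathertry-round2}, \ref{line:asymgathertry-round3} exactly when its own smallest quorum has reported and no sooner, and so that the two kinds of guild members see essentially disjoint information during rounds one and two. \textbf{(3)} Track the sets $S$, $T$, $U$ of each guild member along this schedule, showing that the set $U$ delivered by a guild member of one kind omits at least one pair $(p_j,x_j)$ with $p_j$ belonging to every quorum of every guild member of the other kind. \textbf{(4)} Conclude that no set of the form $\{(p_j,x_j): p_j\in Q_i\}$ for $Q_i\in\CQ_i$ and $p_i$ in the maximal guild is contained in all delivered sets, i.e., the common-core property of Definition~\ref{def:agather} fails. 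This finite verification is the mechanical argument relegated to Appendix~\ref{ap:gather-counterexample}.

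I expect the main obstacle to be steps (1)--(2) taken jointly: one must simultaneously meet three competing demands — the quorum system must be \emph{valid} ($B^3$ must hold, and consistency and availability must be checkable), there must be an actual \emph{guild} so that the property is in force, and yet the guild members' quorums must be ``orthogonal'' enough that the adversary can route information so that no single quorum's worth of proposals survives into every output. Since quorum consistency does force \emph{some} overlap between any two quorums, the counterexample cannot make the guild members fully independent; the delicate part is arranging that the forced overlaps consist only of processes whose proposals the adversary can keep \emph{out} of the relevant round-one snapshots (for example by delaying their reliable-broadcast deliveries at the appropriate processes), while the proposals that \emph{do} propagate fail to cover any whole quorum. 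Once a candidate $\BF$, $\BQ$, and schedule are fixed, confirming the violation is a tedious but routine case check, matching the paper's description of the argument as mechanical.
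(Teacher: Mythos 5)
Your overall strategy --- exhibit a concrete asymmetric fail-prone system satisfying the $B^3$-condition together with a schedule in which every process advances as soon as one of its quorums has reported, then verify mechanically that no quorum's worth of round-one proposals survives into every delivered set $U$ --- is exactly the paper's strategy (Appendix~\ref{ap:gather-counterexample}). However, there is a genuine gap in step (1) as you state it: the claim that ``a handful of processes suffices'' is false, and a search over small systems would never terminate successfully. Quorum consistency forces any two quorums of any two processes to intersect in at least one process, and iterating this intersection argument across the rounds of Algorithm~\ref{alg:asymgather-try} shows that every system with fewer than $16$ processes \emph{always} satisfies the common-core property after the three communication rounds (more generally, an $n$-round variant is immune below $2^{n}$ processes). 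This is precisely why the paper's counterexample (Figure~\ref{fig:ce-failprone}) uses $30$ processes, each with a single fail-prone set and its canonical quorum; the size is not incidental but a structural necessity your plan does not account for. Without recognizing this lower bound, the delicate ``orthogonality'' you are trying to engineer in steps (1)--(2) simply cannot be achieved at the scale you propose.

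A secondary point: your plan also leans on an actual failure set $F$, a naive/wise split, and delayed reliable-broadcast deliveries to keep the forced quorum overlaps ``out of the relevant round-one snapshots.'' None of this is needed. In the paper's counterexample all processes are correct (hence wise, so the maximal guild is all of $\CP$) and the only adversarial freedom exercised is message timing: each process hears from exactly its one quorum in each round before switching. The violation then follows from pure bookkeeping of the $S$, $T$, $U$ sets --- every quorum contains some process in the range $[16,30]$, while every delivered $U$ set misses at least one proposal from that range --- which is what makes the verification genuinely mechanical. So the extra machinery you introduce is not wrong, but it obscures the one ingredient your proposal is missing: the explicit, necessarily large, quorum system itself, which is the substance of the lemma rather than a routine afterthought.
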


Consider the fail-prone system defined in Figure~\ref{fig:ce-failprone}. It represents a system with 30 processes, each of them having only one fail-prone set. Each process is represented by a row, where the processes in its fail-prone set are those colored in striped red in the row. The associated canonical quorums (i.e., any process not in the fail-prone set is contained in the quorum) is represented by processes colored blue. This fail-prone system satisfies the $B^3$ condition (Definition~\ref{def:b3}). It is not a realistic example, but it shows how a careful selection of asymmetric quorums can lead to an execution of Algorithm~\ref{alg:asymgather-try} where no common core is achieved. The complete execution is displayed in Appendix~\ref{ap:gather-counterexample}. We also provide a small code snippet to verify the results. 

The counterexample requires 30 processes. This high amount is necessary as a consequence of the quorum consistency property. After executing Algorithm~\ref{alg:asymgather-try} any system having less than 16 processes will always satisfy the common core property. Since any two quorums of any two processes will intersect in at least one process, after executing the 3 rounds of Algorithm~\ref{alg:asymgather-try}, there will be a common core among all processes. If the algorithm were modified to have $n$ rounds, then by the same reasoning any system with less than $2^n$ processes would have a common core after the execution. This justifies the need for such a large and complex counterexample. 

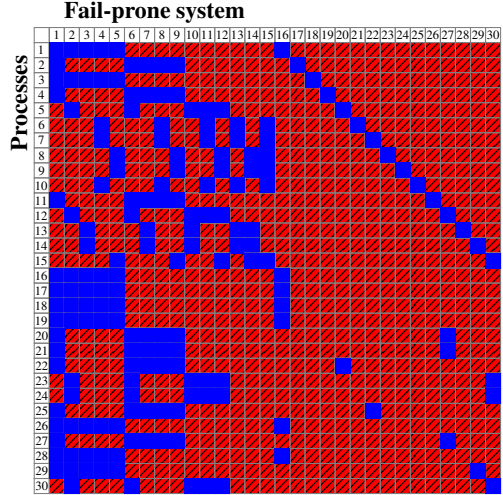
\begin{figure}[ht]
\centering
\begin{tikzpicture}[scale=0.20]
    \def\rows{30}
    \def\cols{30}

    \foreach \x in {1,...,\cols} {
        \foreach \y in {0,...,29} {
            \fill[preaction={fill, red, fill opacity=\sopacity}, fill opacity=\sopacity, pattern=north east lines] (\x, \y) rectangle (\x+1, \y+1);
        }
    }

    \foreach \x in {0,...,\cols} {
        \foreach \y in {0,...,\rows} {
            \draw[gray] (\x, \y) rectangle (\x+1, \y+1);
        }
    }

    \foreach \x in {1,...,30} {
        \pgfmathsetmacro{\result}{int(31- \x)} 
        \node[align=center, outer sep=0pt] at (\x + 0.5, 30.5) { \fontsize{6}{0}\selectfont \textcolor{black}{\scalebox{.8}{\x}}};
    }
    \foreach \y in {1,...,30} {
        \pgfmathsetmacro{\result}{int(31- \y)} 
        \node[align=center, text width=1cm] at (0.5, \y - 0.5) {\fontsize{6}{0}\selectfont \textcolor{black}{\scalebox{.8}{\result}}};
    }

\def\yy{1}
\pgfmathsetmacro{\y}{int(30- \yy)}
\foreach \x in {16, 1, 2, 3, 4, 5} {
    \fill[white] (\x, \y) rectangle (\x+1, \y+1);
    \draw[gray] (\x, \y) rectangle (\x+1, \y+1);
    \fill[\scolor, fill opacity=\sopacity] (\x, \y) rectangle (\x+1, \y+1);
}
\def\yy{2}
\pgfmathsetmacro{\y}{int(30- \yy)}
\foreach \x in {1, 17, 6, 7, 8, 9} {
    \fill[white] (\x, \y) rectangle (\x+1, \y+1);
    \draw[gray] (\x, \y) rectangle (\x+1, \y+1);
    \fill[\scolor, fill opacity=\sopacity] (\x, \y) rectangle (\x+1, \y+1);
}
\def\yy{3}
\pgfmathsetmacro{\y}{int(30- \yy)}
\foreach \x in {1, 2, 3, 4, 5, 18} {
    \fill[white] (\x, \y) rectangle (\x+1, \y+1);
    \draw[gray] (\x, \y) rectangle (\x+1, \y+1);
    \fill[\scolor, fill opacity=\sopacity] (\x, \y) rectangle (\x+1, \y+1);
}
\def\yy{4}
\pgfmathsetmacro{\y}{int(30- \yy)}
\foreach \x in {1, 19, 6, 7, 8, 9} {
    \fill[white] (\x, \y) rectangle (\x+1, \y+1);
    \draw[gray] (\x, \y) rectangle (\x+1, \y+1);
    \fill[\scolor, fill opacity=\sopacity] (\x, \y) rectangle (\x+1, \y+1);
}
\def\yy{5}
\pgfmathsetmacro{\y}{int(30- \yy)}
\foreach \x in {2, 20, 6, 10, 11, 12} {
    \fill[white] (\x, \y) rectangle (\x+1, \y+1);
    \draw[gray] (\x, \y) rectangle (\x+1, \y+1);
    \fill[\scolor, fill opacity=\sopacity] (\x, \y) rectangle (\x+1, \y+1);
}
\def\yy{6}
\pgfmathsetmacro{\y}{int(30- \yy)}
\foreach \x in {4, 21, 8, 11, 13, 15} {
    \fill[white] (\x, \y) rectangle (\x+1, \y+1);
    \draw[gray] (\x, \y) rectangle (\x+1, \y+1);
    \fill[\scolor, fill opacity=\sopacity] (\x, \y) rectangle (\x+1, \y+1);
}
\def\yy{7}
\pgfmathsetmacro{\y}{int(30- \yy)}
\foreach \x in {4, 22, 8, 11, 13, 15} {
    \fill[white] (\x, \y) rectangle (\x+1, \y+1);
    \draw[gray] (\x, \y) rectangle (\x+1, \y+1);
    \fill[\scolor, fill opacity=\sopacity] (\x, \y) rectangle (\x+1, \y+1);
}
\def\yy{8}
\pgfmathsetmacro{\y}{int(30- \yy)}
\foreach \x in {5, 23, 9, 12, 14, 15} {
    \fill[white] (\x, \y) rectangle (\x+1, \y+1);
    \draw[gray] (\x, \y) rectangle (\x+1, \y+1);
    \fill[\scolor, fill opacity=\sopacity] (\x, \y) rectangle (\x+1, \y+1);
}
\def\yy{9}
\pgfmathsetmacro{\y}{int(30- \yy)}
\foreach \x in {5, 24, 9, 12, 14, 15} {
    \fill[white] (\x, \y) rectangle (\x+1, \y+1);
    \draw[gray] (\x, \y) rectangle (\x+1, \y+1);
    \fill[\scolor, fill opacity=\sopacity] (\x, \y) rectangle (\x+1, \y+1);
}
\def\yy{10}
\pgfmathsetmacro{\y}{int(30- \yy)}
\foreach \x in {4, 8, 25, 11, 13, 15} {
    \fill[white] (\x, \y) rectangle (\x+1, \y+1);
    \draw[gray] (\x, \y) rectangle (\x+1, \y+1);
    \fill[\scolor, fill opacity=\sopacity] (\x, \y) rectangle (\x+1, \y+1);
}
\def\yy{11}
\pgfmathsetmacro{\y}{int(30- \yy)}
\foreach \x in {1, 6, 7, 8, 9, 26} {
    \fill[white] (\x, \y) rectangle (\x+1, \y+1);
    \draw[gray] (\x, \y) rectangle (\x+1, \y+1);
    \fill[\scolor, fill opacity=\sopacity] (\x, \y) rectangle (\x+1, \y+1);
}
\def\yy{12}
\pgfmathsetmacro{\y}{int(30- \yy)}
\foreach \x in {2, 6, 27, 10, 11, 12} {
    \fill[white] (\x, \y) rectangle (\x+1, \y+1);
    \draw[gray] (\x, \y) rectangle (\x+1, \y+1);
    \fill[\scolor, fill opacity=\sopacity] (\x, \y) rectangle (\x+1, \y+1);
}
\def\yy{13}
\pgfmathsetmacro{\y}{int(30- \yy)}
\foreach \x in {3, 7, 10, 28, 13, 14} {
    \fill[white] (\x, \y) rectangle (\x+1, \y+1);
    \draw[gray] (\x, \y) rectangle (\x+1, \y+1);
    \fill[\scolor, fill opacity=\sopacity] (\x, \y) rectangle (\x+1, \y+1);
}
\def\yy{14}
\pgfmathsetmacro{\y}{int(30- \yy)}
\foreach \x in {29, 3, 7, 10, 13, 14} {
    \fill[white] (\x, \y) rectangle (\x+1, \y+1);
    \draw[gray] (\x, \y) rectangle (\x+1, \y+1);
    \fill[\scolor, fill opacity=\sopacity] (\x, \y) rectangle (\x+1, \y+1);
}
\def\yy{15}
\pgfmathsetmacro{\y}{int(30- \yy)}
\foreach \x in {5, 30, 9, 12, 14, 15} {
    \fill[white] (\x, \y) rectangle (\x+1, \y+1);
    \draw[gray] (\x, \y) rectangle (\x+1, \y+1);
    \fill[\scolor, fill opacity=\sopacity] (\x, \y) rectangle (\x+1, \y+1);
}
\def\yy{16}
\pgfmathsetmacro{\y}{int(30- \yy)}
\foreach \x in {16, 1, 2, 3, 4, 5} {
    \fill[white] (\x, \y) rectangle (\x+1, \y+1);
    \draw[gray] (\x, \y) rectangle (\x+1, \y+1);
    \fill[\scolor, fill opacity=\sopacity] (\x, \y) rectangle (\x+1, \y+1);
}
\def\yy{17}
\pgfmathsetmacro{\y}{int(30- \yy)}
\foreach \x in {16, 1, 2, 3, 4, 5} {
    \fill[white] (\x, \y) rectangle (\x+1, \y+1);
    \draw[gray] (\x, \y) rectangle (\x+1, \y+1);
    \fill[\scolor, fill opacity=\sopacity] (\x, \y) rectangle (\x+1, \y+1);
}
\def\yy{18}
\pgfmathsetmacro{\y}{int(30- \yy)}
\foreach \x in {16, 1, 2, 3, 4, 5} {
    \fill[white] (\x, \y) rectangle (\x+1, \y+1);
    \draw[gray] (\x, \y) rectangle (\x+1, \y+1);
    \fill[\scolor, fill opacity=\sopacity] (\x, \y) rectangle (\x+1, \y+1);
}
\def\yy{19}
\pgfmathsetmacro{\y}{int(30- \yy)}
\foreach \x in {16, 1, 2, 3, 4, 5} {
    \fill[white] (\x, \y) rectangle (\x+1, \y+1);
    \draw[gray] (\x, \y) rectangle (\x+1, \y+1);
    \fill[\scolor, fill opacity=\sopacity] (\x, \y) rectangle (\x+1, \y+1);
}
\def\yy{20}
\pgfmathsetmacro{\y}{int(30- \yy)}
\foreach \x in {1, 6, 7, 8, 9, 27} {
    \fill[white] (\x, \y) rectangle (\x+1, \y+1);
    \draw[gray] (\x, \y) rectangle (\x+1, \y+1);
    \fill[\scolor, fill opacity=\sopacity] (\x, \y) rectangle (\x+1, \y+1);
}
\def\yy{21}
\pgfmathsetmacro{\y}{int(30- \yy)}
\foreach \x in {1, 6, 7, 8, 9, 27} {
    \fill[white] (\x, \y) rectangle (\x+1, \y+1);
    \draw[gray] (\x, \y) rectangle (\x+1, \y+1);
    \fill[\scolor, fill opacity=\sopacity] (\x, \y) rectangle (\x+1, \y+1);
}
\def\yy{22}
\pgfmathsetmacro{\y}{int(30- \yy)}
\foreach \x in {1, 20, 6, 7, 8, 9} {
    \fill[white] (\x, \y) rectangle (\x+1, \y+1);
    \draw[gray] (\x, \y) rectangle (\x+1, \y+1);
    \fill[\scolor, fill opacity=\sopacity] (\x, \y) rectangle (\x+1, \y+1);
}
\def\yy{23}
\pgfmathsetmacro{\y}{int(30- \yy)}
\foreach \x in {2, 6, 10, 11, 12, 30} {
    \fill[white] (\x, \y) rectangle (\x+1, \y+1);
    \draw[gray] (\x, \y) rectangle (\x+1, \y+1);
    \fill[\scolor, fill opacity=\sopacity] (\x, \y) rectangle (\x+1, \y+1);
}
\def\yy{24}
\pgfmathsetmacro{\y}{int(30- \yy)}
\foreach \x in {2, 6, 10, 11, 12, 30} {
    \fill[white] (\x, \y) rectangle (\x+1, \y+1);
    \draw[gray] (\x, \y) rectangle (\x+1, \y+1);
    \fill[\scolor, fill opacity=\sopacity] (\x, \y) rectangle (\x+1, \y+1);
}
\def\yy{25}
\pgfmathsetmacro{\y}{int(30- \yy)}
\foreach \x in {1, 6, 7, 8, 9, 22} {
    \fill[white] (\x, \y) rectangle (\x+1, \y+1);
    \draw[gray] (\x, \y) rectangle (\x+1, \y+1);
    \fill[\scolor, fill opacity=\sopacity] (\x, \y) rectangle (\x+1, \y+1);
}
\def\yy{26}
\pgfmathsetmacro{\y}{int(30- \yy)}
\foreach \x in {16, 1, 2, 3, 4, 5} {
    \fill[white] (\x, \y) rectangle (\x+1, \y+1);
    \draw[gray] (\x, \y) rectangle (\x+1, \y+1);
    \fill[\scolor, fill opacity=\sopacity] (\x, \y) rectangle (\x+1, \y+1);
}
\def\yy{27}
\pgfmathsetmacro{\y}{int(30- \yy)}
\foreach \x in {1, 6, 7, 8, 9, 27} {
    \fill[white] (\x, \y) rectangle (\x+1, \y+1);
    \draw[gray] (\x, \y) rectangle (\x+1, \y+1);
    \fill[\scolor, fill opacity=\sopacity] (\x, \y) rectangle (\x+1, \y+1);
}
\def\yy{28}
\pgfmathsetmacro{\y}{int(30- \yy)}
\foreach \x in {16, 1, 2, 3, 4, 5} {
    \fill[white] (\x, \y) rectangle (\x+1, \y+1);
    \draw[gray] (\x, \y) rectangle (\x+1, \y+1);
    \fill[\scolor, fill opacity=\sopacity] (\x, \y) rectangle (\x+1, \y+1);
}
\def\yy{29}
\pgfmathsetmacro{\y}{int(30- \yy)}
\foreach \x in {1, 2, 3, 4, 5, 29} {
    \fill[white] (\x, \y) rectangle (\x+1, \y+1);
    \draw[gray] (\x, \y) rectangle (\x+1, \y+1);
    \fill[\scolor, fill opacity=\sopacity] (\x, \y) rectangle (\x+1, \y+1);
}
\def\yy{30}
\pgfmathsetmacro{\y}{int(30- \yy)}
\foreach \x in {2, 6, 10, 11, 12, 30} {
    \fill[white] (\x, \y) rectangle (\x+1, \y+1);
    \draw[gray] (\x, \y) rectangle (\x+1, \y+1);
    \fill[\scolor, fill opacity=\sopacity] (\x, \y) rectangle (\x+1, \y+1);
}

\node at (8, 32) {\textbf{Fail-prone system}}; %
\node[rotate=90] at (-1, 26) {\textbf{Processes}}; %

\end{tikzpicture}
    \caption{Fail-prone system that leads to no common core in an asymmetric execution of gather. Each process (row) has only one fail-prone set, which is represented by the processes in striped red on its row. Processes in blue represent the canonical quorum associated to said fail-prone set. }
    \label{fig:ce-failprone}
\end{figure}

The same counterexample can be used to show how an asymmetric translation of Tusk~\cite{DBLP:conf/eurosys/DanezisKSS22} following the quorum replacement idea reaches no common core. The failure of existing common core primitives to work in the asymmetric world, as portrayed by these two cases, serves as motivation for the development of new equivalent asymmetric protocols. We now show a new algorithm that overcomes the problems portrayed here and obtains an asymmetric gather.

\subsection{Algorithm}
\label{sec:asymmetric-gather-constant-round}
Algorithm~\ref{alg:core} presents a novel solution to the asymmetric gather problem in a constant number of rounds. It circumvents the issues shown in Section~\ref{subsec:direct-translation} by slightly modifying the round-based structure of the symmetric gather protocol. This is done by introducing a series of control messages that prevent processes from proceeding with the algorithm until other processes satisfy certain conditions. It follows a structure similar to Algorithm~\ref{alg:symgather}, where each party builds sets $S, T$, and $U$ in progressive rounds and $U$ is output as the common core. The proof of correctness of Algorithm~\ref{alg:core} is presented in Section~\ref{sec:agather-proof}.

Each process $p_i$ starts by creating a candidate common core set $S_i$ containing initial values broadcast by one of its quorums~(line~\ref{line:creates}). After the execution of the algorithm, the $S_i$ set of one of the processes will be the common core. 

Once the $S_i$ set is created, each process will try to distribute its $S_i$ set to all other participants. If a process distributes its $S_i$ set among all members of one of its quorums $Q$, it becomes much easier to distribute it to the whole system later. Suppose $p_i$ successfully distributed its $S_i$ set to one of its quorums $Q$. Now, if any other process queries one of its quorums $Q'$ for the candidate $S$ sets they have received, at least one process in $Q'$ is in possession of $S_i$. This follows from the quorum consistency property which guarantees that $Q'\cap Q \neq \emptyset$.

The $S_i$ sets are sent to all other processes using a $\str{DistributeS}$ message~(line~\ref{line:sends}).
When a party $p_i$ receives an $S$ set, it first waits until it has \textit{arb-delivered} the component messages included in $S$. This is of importance for the validity and agreement properties. Then it adds its contents to a set used to collect all received $S$ sets, denoted $T_i$ set (line~\ref{line:createt}). At a later point the $T$ sets will be distributed to all other processes, in a manner similar to Algorithm~\ref{alg:symgather}. When receiving an $S$ set the process also sends an acknowledgment back to the sender confirming that it has received the set and added it to $T_i$ (line~\ref{line:sendack}). A process receives an acknowledgment  only if its $S$ set is actually included in the $T$ set of the receiver. If a party receives a $S$ set but its $T$ set has already been distributed, then the received set is ignored and no acknowledgment is sent back. We want to be sure that at least one process in the maximal guild has successfully distributed its $S$ set among one of its quorums before processes send their $T$ sets. If this is the case, then we can be sure that all processes will receive this candidate $S$ set when they receive $T$ sets coming from one of their quorums in the next round. 

For this purpose, the processes send a $\str{Ready}$ message once they have received acknowledgments back from one of their quorums. The purpose of a $\str{Ready}$ message is to inform other processes that the sender has successfully distributed its $S$ set among one of its quorums. If a wise process receives $\str{Ready}$ messages from one of its quorums then it can be sure that at least one process in the maximal guild has distributed its $S$ set. Ideally, once a process has received $\str{Ready}$ messages from a quorum, it would send its $T$ set through a $\str{DistributeT}$ message. This would in turn cause this process to stop sending acknowledgments in response to received $\str{DistributeS}$ messages. Unfortunately, if the process is a member of many quorums in the maximal guild and if not enough processes have distributed their $S$ sets in one of their quorums yet, stopping sending acknowledgments could affect the liveness of other processes. In order to address this issue it is necessary to delay the sending of $\str{DistribtueT}$ messages until it can be guaranteed that all processes in the maximal guild will have received enough messages to send their $\str{DistributeT}$ messages. 

We use the quorum amplification technique from Bracha Broadcast~\cite{DBLP:journals/iandc/Bracha87} to ensure that enough processes send their $T$ sets. Instead of a process distributing its $T$ set after receiving $\str{Ready}$ messages from a kernel, it sends a $\str{Confirm}$ message. It is not until a process has received such messages from a quorum (line~\ref{line:send_ready2}) that it distributes its $T$ set and stops acknowledging received $\str{DistributeS}$ messages. Processes can also send $\str{DistributeT}$ messages after receiving $\str{Confirm}$ messages from a kernel. We prove in Lemma~\ref{lem:distribute_t} that this is enough to prove that all processes in the maximal guild will send their $T$ sets, which contain candidate $S$ sets received from some processes. We know that this will guarantee that at least one candidate $S$ set will be possessed by all processes belonging to the maximal guild. This $S$ set is the common core we are looking for.

\begin{algo}
  \vbox{
  \small
  \begin{numbertabbing}
    xx\=xx\=xx\=xx\=xx\=xx\=MMMMMMMMMMMMMMMMMMM\=\kill
    \textbf{state} \label{}\\
    
    \> \(S_i \gets \emptyset\) \label{}\\
    \> \(T_i \gets \emptyset\) \label{}\\
    \> \(U_i \gets \emptyset\) \label{}\\
    \> \(\var{sentT} \gets \false\) \label{}\\

    \\
    \textbf{upon}  \op{ag-propose($x_i$)}  \textbf{do} \label{line:init}\\  
        \> \op{arb-broadcast((\(\pip, \xip\)))} \label{}\\
    \\

    \textbf{upon}  \op{arb-deliver((\pjp, \xjp))}  \textbf{do} \label{line:deliver}\\  
        \> \( S_i \gets S_i \cup \{ \text{($\pjp, \xjp$)} \}  \) \label{line:creates}\\
    \\

    \textbf{upon} \( \exists Q_i \in \CQ_i, \forall \pjp \in Q_i:  \pjp \in S_i  \)   \textbf{do}  \label{line:init_quorums}\\  
        \> send \( \msg{DistributeS}{\pip, S_i}\) to all $p_j \in \mathcal{P}$ \label{line:sends}\\
    \\

    \textbf{upon} receiving \( \msg{DistributeS}{\pjp, \sjp}\) such that $\sjp \subseteq S_i$ and $\neg \var{sentT}$ \textbf{do} \label{line:receives}\\  
        \> \( T_i \gets T_i \cup \sjp  \) \label{line:createt}\\
        \> send \( \msg{\str{Ack}}{\pip}\) to \(\pjp\) \label{line:sendack}\\  
    \\

    \textbf{upon} \( \exists Q_i \in \CQ_i, \forall \pjp \in Q_i:  \) received \( \msg{\str{Ack}}{\pjp}\) from \(\pjp\) \textbf{do} \label{line:sendready}\\  
        \> send \( \msg{Ready}{}\) to all $p_j \in \mathcal{P}$ \label{}\\ 
    \\

    \textbf{upon} \( \exists Q_i \in \CQ_i, \forall \pjp \in Q_i:  \) received \( \msg{\str{Ready}}{}\) from \(\pjp\) \textbf{do} \label{line:send_ready2}\\  
        \> send \( \msg{Confirm}{}\) to all $p_j \in \mathcal{P}$ \label{line:sendready2}\\
    \\

    \textbf{upon} \( \exists K_i \in \CK_i, \forall \pjp \in K_i:  \) received \( \msg{Confirm}{}\) from \(\pjp\) \textbf{do} \label{line:send_ready2_kernel}\\  
        \> send \( \msg{Confirm}{}\) to all $p_j \in \mathcal{P}$ \label{line:sendready22}\\
    \\

    \textbf{upon} \( \exists Q_i \in \CQ_i, \forall \pjp \in Q_i:  \) received \( \msg{Confirm}{}\) from \(\pjp\) \textbf{do} \label{line:sendt}\\  
        \> send \( \msg{DistributeT}{\pip, T}\) to all $p_j \in \mathcal{P}$ \label{}\\
        \> $\var{sentT} \gets \true$ \label{line:sentt_true}\\
    \\

    \textbf{upon} receiving \( \msg{DistributeT}{\pjp, \tjp}\) from  \(  p_j\) such that $\tjp \subseteq S_i$ \textbf{do}  \label{line:core-receivet}\\  
        \> \(U \gets U \cup \tjp\) \label{}\\   
    \\

    \textbf{upon}  receiving \( \msg{DistributeT}{\pip, T}\) from a quorum \(Q_i \in \CQ_i\) \textbf{do}  \label{line:finish}\\  
        \> \op{ag-deliver(U)} \label{}
     \end{numbertabbing}

  }
  \caption{Constant-round asymmetric gather (process $p_i$).}
  \label{alg:core}
\end{algo}

\subsection{Proof of Correctness of Algorithm~\ref{alg:core}}
\label{sec:agather-proof}
We begin by proving that in all executions with a guild at least one process in the maximal guild will send its $T$ set.

\begin{lemma}
    In any execution with a guild, a process in the maximal guild will send a $\str{DistributeT}$ message.
    \label{lem:wise_sendt}
\end{lemma}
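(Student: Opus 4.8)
The plan is to trace the flow of control messages through the algorithm, starting from the guaranteed availability of quorums for wise processes and showing that the protocol cannot deadlock before some maximal-guild member reaches line~\ref{line:sentt_true}. First I would establish the base case: every wise process $p_i$ in the maximal guild $\CG$ eventually fires the trigger at line~\ref{line:init_quorums}. Since $p_i$ is wise, $F \in \CF_i^*$, and by the availability property of $\mathbb{Q}$ there is a quorum $Q_i \in \CQ_i$ disjoint from a fail-prone set covering $F$; every process in $Q_i$ is correct and hence \op{arb-broadcast}s its input, so by the properties of asymmetric reliable broadcast $p_i$ eventually \op{arb-deliver}s $(p_j, x_j)$ for all $p_j \in Q_i$, i.e.\ $Q_i \subseteq S_i$. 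Thus every member of $\CG$ sends a $\str{DistributeS}$ message (line~\ref{line:sends}).

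Next I would push this forward through the $\str{Ack}$/$\str{Ready}$/$\str{Confirm}$ chain, and here is where the main obstacle lies: a member of $\CG$ only sends $\str{Ack}$ at line~\ref{line:sendack} while $\neg\var{sentT}$, so a priori it is conceivable that \emph{every} maximal-guild process sets $\var{sentT}$ before acknowledging enough $\str{DistributeS}$ messages, leaving some member stuck waiting at line~\ref{line:sendready}. I would rule this out by a minimality/contradiction argument: assume toward a contradiction that \emph{no} process in $\CG$ ever sends a $\str{DistributeT}$ message. Then no process in $\CG$ ever sets $\var{sentT} \gets \true$ (setting it requires reaching line~\ref{line:sentt_true}, which is preceded by sending $\str{DistributeT}$). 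Consequently, every $p_i \in \CG$ keeps $\neg\var{sentT}$ forever, and therefore whenever $p_i$ receives $\msg{DistributeS}{p_j, S_j}$ with $S_j \subseteq S_i$ it responds with $\str{Ack}$.

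Now I would close the loop. Fix any $p_i \in \CG$; by closure there is a quorum $Q_i \subseteq \CG$. Every $p_j \in Q_i$ has sent $\str{DistributeS}$ with its set $S_j$; moreover $S_j$ consists of \op{arb-delivered} pairs, and since $p_i \in \CG$ and by the totality of asymmetric reliable broadcast $p_i$ eventually \op{arb-delivers} each such pair, so eventually $S_j \subseteq S_i$ and $p_i$ replies with $\str{Ack}$ to $p_j$. Thus every $p_j \in Q_i \subseteq \CG$ eventually receives $\str{Ack}$ from all of $Q_i$ — a quorum for it (reusing closure, or noting $Q_i$ contains a quorum for $p_j$) — wait: more carefully, I would argue that each $p_j \in \CG$ receives $\str{Ack}$ from the members of some quorum $Q_j \subseteq \CG$ of its own, by the same delivery argument applied with $p_j$ in the role of $p_i$; hence every $p_j \in \CG$ sends $\str{Ready}$ (line~\ref{line:sendready}). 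Iterating once more: every $p_i \in \CG$ receives $\str{Ready}$ from a quorum $\subseteq \CG$, so sends $\str{Confirm}$ (line~\ref{line:sendready2}); and then every $p_i \in \CG$ receives $\str{Confirm}$ from a quorum $\subseteq \CG$, so fires line~\ref{line:sendt} and sends $\str{DistributeT}$, contradicting the assumption. Therefore at least one — in fact every — process in the maximal guild sends a $\str{DistributeT}$ message. The one subtlety to handle carefully is that the $\str{Ack}$-stopping condition is genuinely needed only for \emph{liveness of the $T$-collection phase} of non-guild or naive processes, not for guild members, which is exactly why the contradiction hypothesis ($\var{sentT}$ never set in $\CG$) is self-defeating: it forces the acknowledgments to keep flowing among guild members, which in turn forces a $\str{DistributeT}$.
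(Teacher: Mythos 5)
Your proof is correct and takes essentially the same route as the paper's: assume no maximal-guild member ever sends \str{DistributeT}, conclude that \var{sentT} stays \false at every guild member so acknowledgments keep flowing, and then push the \str{Ack}/\str{Ready}/\str{Confirm} chain through the guild's closure property until every member is forced to send \str{DistributeT}, a contradiction. The only difference is that you spell out two steps the paper leaves implicit (that every guild member actually sends \str{DistributeS}, and that the condition $S_j \subseteq S_i$ needed for an \str{Ack} eventually holds by totality of the asymmetric reliable broadcast), which is harmless extra care.
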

\begin{proof}
We proceed by contradiction. Suppose no process in the maximal guild ever sends a $\str{DistributeT}$ message. 
If this happens then it follows that no process will ever receive a quorum of $\str{Confirm}$ messages (line \ref{line:sendt}). Consequently, the \var{sentT} variable will always remain false (line \ref{line:sentt_true}). As a result, whenever a process in the maximal guild receives a $\str{DistributeS}$ message, it will always respond with an $\str{Ack}$. Given that the maximal guild includes quorums for all its members, every process in the guild will receive $\str{Ack}$ messages from a quorum, prompting them to subsequently send a $\str{Ready}$ message (line \ref{line:sendready}).

Now, since all processes in the maximal guild have sent $\str{Ready}$ messages, each process will also receive $\str{Ready}$ messages from a quorum and proceed to send $\str{Confirm}$ messages (line \ref{line:send_ready2}). Consequently, every process in the maximal guild will receive $\str{Confirm}$ messages from a quorum, causing them to send a $\str{DistributeT}$ message. This scenario leads to a contradiction. Therefore, at least one process in the maximal guild must send a $\str{DistributeT}$ message.
\end{proof}

We now prove an auxiliary lemma with respect to the sending of $\str{Confirm}$ messages. It will be useful later to prove that all processes in the maximal guild will eventually send a $T$ set.

\begin{lemma}
In all executions with a guild, if a process $p_i$ in the maximal guild sends a $\str{Confirm}$ message, then there exists a process in the maximal guild that has received $\str{Ready}$ messages from at least one of its quorums.
    \label{lem:ready_from_quorum}
\end{lemma}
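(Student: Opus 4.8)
The plan is to use a \textbf{first-offender} argument over the maximal guild $\mathcal{G}$, which is a finite set of processes. In Algorithm~\ref{alg:core} there are exactly two rules under which a process emits a $\str{Confirm}$ message: the quorum-of-$\str{Ready}$ rule (line~\ref{line:send_ready2}) and the kernel-amplification rule (line~\ref{line:send_ready2_kernel}). Since an execution is a sequence of events, I would consider the first event in the whole execution in which some process of $\mathcal{G}$ sends a $\str{Confirm}$ message, and call that process $p_k$. It then suffices to show that $p_k$ triggered the quorum-of-$\str{Ready}$ rule: if so, $p_k$ is itself a process in the maximal guild that received $\str{Ready}$ messages from one of its quorums, which is exactly what the lemma asserts.

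To show $p_k$ used the quorum-of-$\str{Ready}$ rule, I would rule out the amplification rule by contradiction. Suppose $p_k$ sent $\str{Confirm}$ via line~\ref{line:send_ready2_kernel}; then $p_k$ received $\str{Confirm}$ messages from every process in some kernel $K_k \in \CK_k$. By the closure property of a guild there is a quorum $Q_k \in \CQ_k$ with $Q_k \subseteq \mathcal{G}$, and by the defining property of a kernel $K_k \cap Q_k \neq \emptyset$, so there is a process $p_m \in K_k \cap \mathcal{G}$. Since $p_m \in \mathcal{G}$ it is wise, hence correct, so the $\str{Confirm}$ message that $p_k$ received from $p_m$ over the authenticated point-to-point link was genuinely sent by $p_m$, and necessarily sent (and delivered) before $p_k$ reacted and sent its own $\str{Confirm}$. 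This exhibits a process of $\mathcal{G}$ that sent $\str{Confirm}$ strictly earlier than $p_k$, contradicting the choice of $p_k$. Hence $p_k$ used the quorum-of-$\str{Ready}$ rule, which completes the argument.

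The main obstacle is the amplification rule: kernels may contain faulty processes, so one cannot directly conclude that a $\str{Confirm}$ "received from a kernel" originated from an honest, guild-internal process. The observation that unlocks the argument is that intersecting the kernel $K_k$ with a quorum lying entirely inside the guild (available by guild closure) forces the witness $p_m$ to be simultaneously correct and in $\mathcal{G}$; combined with message authenticity this yields the strictly-earlier $\str{Confirm}$ send needed for the minimality contradiction. A minor point to phrase carefully is that ``the first $\str{Confirm}$ sent by a guild member'' is well-defined because the execution linearizes all events, and that a single process may send $\str{Confirm}$ under both rules, so the minimality must be taken over send-events rather than over processes.
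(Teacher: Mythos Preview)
Your proposal is correct and follows essentially the same approach as the paper: a first-offender argument on guild members sending $\str{Confirm}$, ruling out the kernel-amplification case by intersecting the triggering kernel with a guild-internal quorum (available by guild closure) to exhibit an earlier guild-internal $\str{Confirm}$ sender. Your write-up is in fact more careful than the paper's on a couple of points (authenticity of messages, minimality over send-events rather than processes), but the structure and key idea are identical.
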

\begin{proof}
Without loss of generality, consider the first process $p_i$ in the maximal guild that sends a $\str{Confirm}$ message. According to Algorithm~\ref{alg:core}, this occurs either after receiving a quorum of $\str{Ready}$ messages (line~\ref{line:send_ready2}) or after receiving $\str{Confirm}$ messages from a kernel (line \ref{line:send_ready2_kernel}). If the first case holds, we have already proven what we aim to show.

Otherwise, it means that a kernel of the process has sent $\str{Confirm}$ messages. Since a process in the maximal guild has at least one quorum consisting solely of members of the maximal guild, we can deduce that at least one process in the kernel that sent a $\str{Confirm}$ message is also a member of the maximal guild. However, since no process in the maximal guild has yet sent a $\str{Confirm}$ message at this point in time, we can rule out this case for $p_i$. Therefore, $p_i$ must have received $\str{Ready}$ messages from one of its quorums.
\end{proof}

The following lemma proves that there exists a candidate $S$ set for a process in the maximal guild that will be received by all members of a quorum for that process.

\begin{lemma}
In any execution with a guild, the processes in one of the quorums $Q_i \in \CQ_i$ for a process $p_i$ in the maximal guild will receive a $\str{DistributeS}$ message from $p_i$ before they send a $\str{DistributeT}$ message.
    \label{lem:distributes}
\end{lemma}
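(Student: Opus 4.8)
The plan is to show that at least one process $p_i$ in the maximal guild succeeds in distributing its $S_i$ set to all members of one of its own quorums $Q_i$, and that this distribution completes \emph{before} any such member has sent its $\str{DistributeT}$ message. The hypothesis we exploit is that $p_i$ sends $\str{DistributeS}$ (line~\ref{line:sends}) as soon as it has \textit{arb-delivered} the component messages of one of its quorums, and that a recipient $p_j$ ignores a $\str{DistributeS}$ message and sends no $\str{Ack}$ only once its $\var{sentT}$ flag is set, i.e., once $p_j$ has itself sent its $\str{DistributeT}$ message (lines~\ref{line:receives}--\ref{line:sendack}, \ref{line:sentt_true}).

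First I would pin down which process $p_i$ to use. By Lemma~\ref{lem:wise_sendt}, some process in the maximal guild sends a $\str{DistributeT}$ message; consider a chain of triggering events and trace it back via Lemma~\ref{lem:ready_from_quorum}: the first process in the maximal guild to send $\str{Confirm}$ did so because it received $\str{Ready}$ from a quorum. A $\str{Ready}$ message is only sent after receiving $\str{Ack}$ from a quorum (line~\ref{line:sendready}), and an $\str{Ack}$ from $p_j$ is only sent after $p_j$ added a candidate $S$ set to $T_j$ in response to a $\str{DistributeS}$ message (lines~\ref{line:createt}--\ref{line:sendack}). So I would identify $p_i$ as a process whose $\str{DistributeS}$ message was acknowledged by a full quorum $Q_i \in \mathcal{Q}_i$ — such a $p_i$ exists because the first maximal-guild process to send $\str{Ready}$ must itself have received $\str{Ack}$ from a quorum, and I can choose $p_i$ to be one of those acknowledgers that lies in the maximal guild (using closure: a maximal-guild member has a quorum inside the maximal guild, so at least one acknowledger is wise and in the maximal guild), iterating if necessary to reach a \emph{first} such event.

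Next, the key timing argument: fix the process $p_i$ whose $S_i$ set was first (in the execution order) acknowledged by a full quorum $Q_i$. Each $p_j \in Q_i$ sent its $\str{Ack}$ to $p_i$, and by the guard on line~\ref{line:receives} this $\str{Ack}$ was sent while $\neg\var{sentT}$ held at $p_j$, i.e., strictly \emph{before} $p_j$ executed line~\ref{line:sentt_true}, which is the same step in which $p_j$ sends its $\str{DistributeT}$ message. Hence for every $p_j \in Q_i$, the receipt-and-acknowledgment of $p_i$'s $\str{DistributeS}$ message strictly precedes $p_j$'s own $\str{DistributeT}$ send. That is exactly the claimed statement, with this $Q_i$ as the witnessing quorum.

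The main obstacle I anticipate is the well-foundedness of the backward chase: I need to argue that a $\str{DistributeT}$-send event, traced back through $\str{Confirm}$/$\str{Ready}$/$\str{Ack}$ dependencies, terminates at a $\str{DistributeS}$ message from a maximal-guild process whose quorum acknowledged it \emph{before} any of those quorum members set $\var{sentT}$. The subtlety is the $\str{Confirm}$-from-a-kernel shortcut (line~\ref{line:send_ready2_kernel}), which could in principle let $\str{Confirm}$ propagate without anyone receiving a quorum of $\str{Ready}$s; Lemma~\ref{lem:ready_from_quorum} is precisely what rules this out for the \emph{first} maximal-guild process, so the chase should be anchored at that first event. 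Care is also needed to invoke guild closure correctly: when I say ``at least one acknowledger is in the maximal guild,'' I am using that $p_i$'s quorum $Q_i$ can be taken inside the maximal guild, but the process being acknowledged need not be the same; the cleanest route is to take $p_i$ to be the \emph{earliest} maximal-guild process whose $S$ set gets a quorum of acks, and then the quorum members, acting before their own $\var{sentT}$, give the result. I would also note that $\var{sentT}$ is monotone (once true, stays true) and that the $\str{Ack}$ guard is checked at the moment of receipt, so no later state change at $p_j$ can retroactively invalidate the ordering.
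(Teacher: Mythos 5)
Your proposal is correct and follows essentially the same route as the paper: use Lemma~\ref{lem:wise_sendt} and Lemma~\ref{lem:ready_from_quorum} to trace the first $\str{DistributeT}$ back to a maximal-guild process $p_i$ that sent $\str{Ready}$, hence received $\str{Ack}$ messages from a full quorum $Q_i$, and then exploit the guard on line~\ref{line:receives} (acknowledgments only while $\neg\var{sentT}$) to conclude that every member of $Q_i$ received $p_i$'s $\str{DistributeS}$ before sending its own $\str{DistributeT}$. Your initial phrasing about choosing $p_i$ among the ``acknowledgers'' is a slip, but your own correction --- taking $p_i$ to be the maximal-guild process whose $S$ set is acknowledged by one of its quorums, i.e.\ the $\str{Ready}$ sender --- is exactly the identification the paper makes.
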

\begin{proof}
From Lemma~\ref{lem:wise_sendt}, we know that at least one process in the maximal guild will send a $\str{DistributeT}$ message. This occurs after receiving $\str{Confirm}$ messages from a quorum. Since all quorums contain at least one wise process (Lemma 7~\cite{DBLP:journals/dc/AlposCTZ24}), we can deduce that at least one wise process has sent a $\str{Confirm}$ message. Applying Lemma~\ref{lem:ready_from_quorum}, we conclude that at least one process $p_j \in \CG_{max}$ has received $\str{Ready}$ messages from the processes in one of its quorums $Q_j \in \CQ_j$.

Thus, at least one process $p_i$ in the maximal guild has sent a $\str{Ready}$ message, which means that it received $\str{Ack}$ messages in response to its $\str{DistributeS}$ messages from one of its quorums. Since $\str{Ack}$ messages are only sent if the $\str{DistributeT}$ messages have not yet been sent (line \ref{line:receives}), we can infer that at least all processes in one of the quorums $Q_i$ of $p_i$ will receive a $\str{DistributeS}$ message from $p$ before they send a $\str{DistributeT}$ message.

The $T$ sets sent together with the $\str{DistributeT}$ messages include all the $S$ sets received up to that point. Therefore, we know that when the processes in $Q_i$ send their $\str{DistributeT}$ messages, they will contain the $S$ set sent by $p_i$.
\end{proof}

Finally, we prove that all processes in the maximal guild will send their $T$ set. This helps prove the liveness of the algorithm.
\begin{lemma}
    In any execution with a guild, all processes in the maximal guild will send a $\str{DistributeT}$ message.
    \label{lem:distribute_t}
\end{lemma}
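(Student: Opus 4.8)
The plan is to argue by contradiction: suppose some process in the maximal guild $\CG_{\max}$ never sends a $\str{DistributeT}$ message. Let $W \subseteq \CG_{\max}$ be the (nonempty, by assumption) set of guild members that never send $\str{DistributeT}$, and let $\CG_{\max} \setminus W$ be those that do. By Lemma~\ref{lem:wise_sendt} we know $\CG_{\max} \setminus W \neq \emptyset$, so $W$ is a proper subset. The key structural fact I want to exploit is the one isolated in Lemma~\ref{lem:distributes}: at least one guild member $p_i$ gets a $\str{DistributeS}$ message out to a whole quorum $Q_i \in \CQ_i$ before any of those recipients sends $\str{DistributeT}$; moreover, by the guild's \textbf{Closure} property, $p_i$ has a quorum $Q_i' \subseteq \CG_{\max}$, and combining with the idea behind Lemma~\ref{lem:distributes} one should be able to take the relevant quorum to lie inside the guild.

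The heart of the argument is the quorum-amplification (Bracha-style) chain on $\str{Confirm}$ messages, which I would run as follows. First, observe that since a process in $W$ never sends $\str{DistributeT}$, it never sets $\var{sentT}$ to $\true$, so it keeps answering every $\str{DistributeS}$ message with an $\str{Ack}$ (lines~\ref{line:receives}--\ref{line:sendack}). Now here is where I would use the result of Lemma~\ref{lem:distributes} more carefully: I claim that \emph{every} process in $\CG_{\max}$ eventually receives $\str{Ack}$ messages from one of its quorums and hence sends a $\str{Ready}$ message. Indeed, consider any $p_j \in \CG_{\max}$; by Closure it has a quorum $Q_j \subseteq \CG_{\max}$, and it will distribute its own $S_j$ set to $Q_j$ once $S_j$ contains a quorum (which happens because $\CG_{\max}$ contains a quorum of $p_j$ whose members all eventually \op{arb-deliver}, triggering line~\ref{line:init_quorums}). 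Each recipient in $Q_j$ is either in $W$ — and so never stops acking — or in $\CG_{\max}\setminus W$; for the latter I need that it acks \emph{before} it sent its $\str{DistributeT}$, and this is exactly the content of Lemma~\ref{lem:distributes} applied appropriately (the $\str{DistributeS}$ from $p_j$ reaches $Q_j$ before those members send $\str{DistributeT}$). Granting this, every $p_j \in \CG_{\max}$ gets a quorum of $\str{Ack}$s and sends $\str{Ready}$ (line~\ref{line:sendready}). Then every $p_j \in \CG_{\max}$ receives a quorum ($\subseteq \CG_{\max}$, by Closure) of $\str{Ready}$ messages and sends $\str{Confirm}$ (line~\ref{line:send_ready2}). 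Finally every $p_j \in \CG_{\max}$ receives a quorum of $\str{Confirm}$ messages and therefore sends $\str{DistributeT}$ (line~\ref{line:sendt}) — contradicting $p_j \in W$.

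The main obstacle is the circularity in the claim that each $\CG_{\max}\setminus W$ member acks $p_j$'s $\str{DistributeS}$ \emph{before} it stopped acking: Lemma~\ref{lem:distributes} only guarantees this timing for the members of \emph{one particular} quorum $Q_i$ of \emph{one particular} guild member $p_i$, not for an arbitrary $p_j$ and its quorum $Q_j$. I expect the fix is to not route through an arbitrary $p_j$ at all, but to run the amplification starting from the single $p_i$ and $Q_i$ supplied by Lemma~\ref{lem:distributes}: those members of $Q_i$ (which we may take inside $\CG_{\max}$) genuinely receive $p_i$'s $S$ set, add it to $T$, and ack; since $Q_i$ is a quorum, $p_i$ then fires $\str{Ready}$; more carefully, I should show a growing wave — first the processes that never stopped acking plus the $Q_i$-members force enough $\str{Ready}$s, then enough $\str{Confirm}$s via the kernel-amplification step (line~\ref{line:send_ready2_kernel}), which only needs a \emph{kernel}'s worth of $\str{Confirm}$s and a kernel meets every quorum — until a full quorum of $\str{Confirm}$s reaches a $W$-member, contradiction. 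Spelling out that the kernel step lets the $\str{Confirm}$ wave propagate through the whole guild, and checking that no $W$-member's refusal to send $\str{DistributeT}$ can stall it, is the delicate part; the rest is bookkeeping with the \textbf{Closure} property and Lemma~7 of~\cite{DBLP:journals/dc/AlposCTZ24}.
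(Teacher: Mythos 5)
There is a genuine gap, and it sits exactly where you flag ``the delicate part.'' Your plan needs every process in the maximal guild to eventually receive a kernel's (or quorum's) worth of \str{Confirm} messages, but the machinery you propose to seed this---re-running the \str{Ack}/\str{Ready} stage from the process $p_i$ and quorum $Q_i$ given by Lemma~\ref{lem:distributes}---does not deliver it: getting the single process $p_i$ to send \str{Ready} gives no other process a \emph{quorum} of \str{Ready} messages, the guild members outside your set $W$ may already have set \var{sentT} and stopped acknowledging (the circularity you yourself identify), and the phrase ``a kernel meets every quorum'' is used in the wrong direction---what you must \emph{construct} is a set of correct processes that have already sent \str{Confirm} and that intersects every quorum of an arbitrary wise $p_j$, so that it \emph{contains} a kernel for $p_j$ and fires line~\ref{line:send_ready2_kernel}. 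Your sketch never exhibits such a set, so the \str{Confirm} wave never provably starts, and the contradiction is not reached.

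The paper's proof supplies precisely this missing construction, and it needs neither contradiction nor the set $W$ nor any revisiting of the \str{Ack}/\str{Ready} phase. By Lemma~\ref{lem:wise_sendt} some $p_i \in \CG_{max}$ sends \str{DistributeT}, hence $p_i$ has received \str{Confirm} from a full quorum $Q_i \in \CQ_i$. Take $K = Q_i \setminus F$: these are correct processes that sent \str{Confirm} to \emph{everyone}. For any other $p_j \in \CG_{max}$, both $p_i$ and $p_j$ are wise, so $F \in \CF_i^* \cap \CF_j^*$, and quorum consistency then forces $Q_i \cap Q_j \not\subseteq F$ for every $Q_j \in \CQ_j$; that is, $K$ intersects every quorum of $p_j$ and therefore contains a kernel for $p_j$. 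Hence every guild member triggers line~\ref{line:send_ready2_kernel} and sends \str{Confirm}; since the maximal guild contains a quorum for each of its members, every guild member then receives \str{Confirm} from a whole quorum and sends \str{DistributeT} (line~\ref{line:sendt}). If you want to salvage your write-up, replace the \str{Ack}/\str{Ready} wave and the case split on $W$ by this single kernel argument; your instinct that the kernel-amplification rule is the engine of propagation is correct, but the proof must exhibit the concrete kernel $Q_i \setminus F$ via quorum consistency rather than assume the wave gets started.
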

\begin{proof}
    From Lemma~\ref{lem:wise_sendt} we know that at least one process $p_i$ in the maximal guild will send a $\str{DistributeT}$ message. Therefore this process received $\str{Confirm}$ messages from one of its quorums $Q_i \in \CQ_i$. Consider any other process $p_j \in \CG_{max}$. Since  $p_i$ and $p_j$ are both wise, it holds $F \in \CF_i^*$ and $F \in \CF_j^*$, which implies $F \in \CF_i^* \cap \CF_j^*$. Then, the set $K=Q_i \setminus F$ intersects every quorum of $p_j$ by quorum consistency and therefore contains a kernel for $p_j$. Since $K$ consists only of correct processes all of them have sent $\str{Confirm}$ messages also to $p_j$ and $p_j$ eventually sends $\str{Confirm}$ messages as well (line~\ref{line:send_ready2_kernel}). 

    We now know that all processes in the maximal guild will send a $\str{Confirm}$ message. Therefore all processes in the maximal guild are guaranteed to receive $\str{Confirm}$ messages from at least one of their quorums (since the maximal guild contains a quorum for every member). Following Algorithm~\ref{alg:core} this means that all processes in the maximal guild will send a $\str{DistributeT}$ message (line~\ref{line:sendt}). 
\end{proof}

We prove in the next lemma that the common core property will be achieved for all processes in the maximal guild. We do this using the previously proved Lemmas \ref{lem:distributes} and \ref{lem:distribute_t}. 
\begin{lemma}
    In any execution with a guild, after the execution of Algorithm~\ref{alg:core} all processes in the maximal guild will have a set $S_i$ created by a process $p_i$ in the maximal guild within their $U$ sets. 
    \label{lem:commoncore}
\end{lemma}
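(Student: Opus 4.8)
The plan is to glue together Lemma~\ref{lem:distributes} and Lemma~\ref{lem:distribute_t} by means of quorum consistency. Lemma~\ref{lem:distributes} produces a distinguished process $p_i \in \CG_{max}$ and a quorum $Q_i \in \CQ_i$ such that every member of $Q_i$ already has $S_i$ inside its $T$ set by the time it broadcasts its \str{DistributeT} message; Lemma~\ref{lem:distribute_t} guarantees that \emph{every} process of the maximal guild broadcasts a \str{DistributeT} message. What is left is a purely combinatorial step: show that an arbitrary $p_k \in \CG_{max}$ cannot assemble the quorum of \str{DistributeT} messages it waits for (line~\ref{line:finish}) without hearing from at least one correct member of $Q_i$, and hence that $S_i$ is forced into $U_k$.

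Concretely, I would fix an arbitrary $p_k \in \CG_{max}$. By Lemma~\ref{lem:distribute_t} all members of $\CG_{max}$ send \str{DistributeT}, and by closure of the guild $\CG_{max}$ contains some quorum $Q_k' \in \CQ_k$ with $Q_k' \subseteq \CG_{max}$; thus $p_k$ eventually receives \str{DistributeT} from every process of $Q_k'$ (so line~\ref{line:finish} fires and $p_k$ ag-delivers). Both $p_i$ and $p_k$ are wise, so $F \in \CF_i^* \cap \CF_k^*$, and applying consistency of $\BQ$ to $Q_i$ and $Q_k'$ with $F_{ik} = F$ gives $Q_i \cap Q_k' \not\subseteq F$: there is a correct process $p_\ell \in Q_i \cap Q_k'$; in fact $p_\ell \in Q_k' \subseteq \CG_{max}$, so $p_\ell$ is wise. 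Since $p_\ell \in Q_i$ and $p_\ell$ does send a \str{DistributeT} message (being in $\CG_{max}$), Lemma~\ref{lem:distributes} gives $S_i \subseteq T_\ell$ in the message $\msg{DistributeT}{p_\ell, T_\ell}$, and since $p_\ell \in Q_k'$ the process $p_k$ receives exactly this message.

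It then remains to verify that $p_k$ actually merges $T_\ell$ into $U_k$, i.e.\ that the guard $T_\ell \subseteq S_k$ on line~\ref{line:core-receivet} is eventually met. Here one observes that $p_\ell$ only ever adds to $T_\ell$ an $S$ set that is already contained in its own $S_\ell$ (lines~\ref{line:receives}--\ref{line:createt}), so $T_\ell \subseteq S_\ell$, and $S_\ell$ consists solely of pairs that $p_\ell$ has arb-delivered. Since $p_\ell$ and $p_k$ are both wise, totality of the underlying asymmetric reliable broadcast ensures $p_k$ eventually arb-delivers every such pair, so eventually $S_\ell \subseteq S_k$ and hence $T_\ell \subseteq S_k$. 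At that point the guarded command of line~\ref{line:core-receivet} becomes enabled and executes $U_k \gets U_k \cup T_\ell$, so $S_i \subseteq U_k$. As $p_k \in \CG_{max}$ was arbitrary, every process of the maximal guild ends up with $S_i$ inside its $U$ set; and $S_i$ is the $S$ set of the guild member $p_i$ --- precisely the set of initial values broadcast by one of $p_i$'s quorums (lines~\ref{line:creates},~\ref{line:init_quorums}) --- which is the shape required of the core $S^+$ in the Common Core property.

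The step I expect to be the main obstacle is this last one: matching the subset guard of line~\ref{line:core-receivet} against asynchrony. A process can receive $p_\ell$'s \str{DistributeT} long before it has arb-delivered every component message sitting inside $T_\ell$, so one cannot claim $S_i \in U_k$ at the very instant $p_k$ ag-delivers. The resolution is to invoke totality of the reliable broadcast together with the re-evaluation semantics of the guarded commands --- $U_k$ is monotone and the guard eventually becomes true --- which means the statement (and the Common Core property it feeds) is best read as a claim about the eventual content of $U_k$. A secondary subtlety worth flagging is that the intersection argument must yield a genuinely correct process $p_\ell$, not one merely assumed correct by one side; this is why it is run with $F_{ik} = F$ and why the containment $p_\ell \in Q_k' \subseteq \CG_{max}$ is used to obtain the wisdom of $p_\ell$ for free.
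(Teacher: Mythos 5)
Your proof is correct and follows essentially the same route as the paper's own argument: combine Lemma~\ref{lem:distributes} and Lemma~\ref{lem:distribute_t} via quorum consistency to force $S_i$ into the $U$ set of every member of $\CG_{max}$. You are in fact somewhat more careful than the paper at two points it glosses over --- you extract a genuinely \emph{correct} (indeed wise) process from $Q_i \cap Q_k'$ by instantiating consistency with $F_{ik}=F$ rather than merely claiming a nonempty intersection, and you explicitly discharge the guard $T_\ell \subseteq S_k$ of line~\ref{line:core-receivet} using totality of the asymmetric reliable broadcast and the eventual, monotone growth of $U_k$.
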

\begin{proof}
    From Lemma~\ref{lem:distributes} we know that a process $p_i \in \CG_{max}$ will distribute its $S$ set $S_i$ among one of its quorums $Q_i \in \CQ_i$.

    Consider any process $p_j \in \CG_{max}$.  From Lemma~\ref{lem:distribute_t} we know that all processes in the maximal guild will send a $\str{DistributeT}$ message and therefore it is guaranteed that $p_j$ will receive $\str{DistributeT}$ messages from at least one quorum $Q_j \in \CQ_j$. 

    By quorum consistency we know that $Q_i \cap Q_j \neq \emptyset$ and therefore one of the processes $p$ from which $p_j$ receives a $\str{DistribtuteT}$ message will also belong to $Q_i$. Since all processes in $Q_i$ contain the set $S_i$ within their $T$ sets $p_j$ will also receive $S_i$ when it receives the $\str{DistributeT}$ message from $p$. Therefore, for all processes $p_k$ in the maximal guild, it will hold $S_i \subseteq U_k$.

    This shows that $S_i$ will be the common core. 
\end{proof}

Finally, we show that Algorithm~\ref{alg:core} satisfies all properties needed for an asymmetric gather protocol (Definition~\ref{def:agather}).
\begin{lemma}
    Algorithm~\ref{alg:core} implements an asymmetric gather protocol. 
\end{lemma}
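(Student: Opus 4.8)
The plan is to verify the three properties of Definition~\ref{def:agather} --- \emph{Common Core}, \emph{Validity}, and \emph{Agreement} --- together with termination, assembling them from the lemmas already proved and from the guarantees of the underlying asymmetric reliable broadcast (arb). Throughout, fix an execution with a guild and let $\CG_{max}$ denote its maximal guild. The arb properties I expect to invoke are: \emph{integrity} (a correct process arb-delivers from a correct sender only what that sender arb-broadcast); \emph{consistency} (two wise processes never arb-deliver different messages from the same, possibly Byzantine, sender); and \emph{totality} (every message arb-delivered by some wise process is eventually arb-delivered by every process of the maximal guild).

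\emph{Common Core and termination.} By Lemma~\ref{lem:commoncore} there is a process $p_i \in \CG_{max}$ whose distributed set $S_i$ satisfies $S_i \subseteq U_k$ for every $p_k \in \CG_{max}$. The set $S_i$ sent in line~\ref{line:sends} is emitted only once the guard of line~\ref{line:init_quorums} holds, so it contains a pair $(p_j, x_j)$ for every $p_j$ in some quorum $Q_i \in \CQ_i$; taking $S^+$ to be exactly that collection of pairs yields a set of the shape required by the Common Core property, and $S^+ \subseteq S_i \subseteq U_k$ for all $p_k \in \CG_{max}$. For termination, Lemma~\ref{lem:distribute_t} gives that every process in $\CG_{max}$ sends a $\str{DistributeT}$ message, and since the maximal guild contains a quorum for each of its members, each such process eventually receives $\str{DistributeT}$ messages from one of its quorums and ag-delivers at line~\ref{line:finish}. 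The point that needs care --- and the step I expect to be the main obstacle --- is that the guards $T_j \subseteq S_i$ in line~\ref{line:core-receivet} (and $S_j \subseteq S_i$ in line~\ref{line:createt}) be eventually satisfied for the pertinent messages, so that the delivered set $U_k$ actually contains $S^+$ rather than merely being slated to; this is precisely where totality of arb is used, propagating to all of $\CG_{max}$ every component arb-delivered by a wise process, in particular the pairs that form $S^+$.

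\emph{Validity and Agreement.} A pair $(p_j, x_j)$ is added to $T_i$ (line~\ref{line:createt}) or to $U$ (line~\ref{line:core-receivet}) only once the respective guard forces the receiving process to have itself arb-delivered $(p_j, x_j)$ via line~\ref{line:creates}. Hence if a process in $\CG_{max}$ ag-delivers a set containing $(p_j, x_j)$ with $p_j$ wise, then that process arb-delivered $(p_j, x_j)$; by integrity of arb and the fact that a correct $p_j$ arb-broadcasts exactly $(p_j, x_j)$ with $x_j$ its ag-proposed input (line~\ref{line:init}), $x_j$ must be $p_j$'s input, which is Validity. Likewise, if $p_a, p_b \in \CG_{max}$ ag-deliver sets containing $(p_j, x)$ and $(p_j, x')$, both arb-delivered the corresponding pair from $p_j$, and since both are wise, consistency of arb forces $x = x'$, which is Agreement. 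Collecting these three properties with termination completes the proof that Algorithm~\ref{alg:core} implements an asymmetric gather.
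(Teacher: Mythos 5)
Your proof is correct and takes essentially the same route as the paper: the Common Core property is obtained from Lemma~\ref{lem:commoncore}, and Validity and Agreement are derived from the guards at lines~\ref{line:receives} and~\ref{line:core-receivet} together with the guarantees of the asymmetric reliable broadcast. Your write-up is merely more explicit than the paper's (naming integrity, consistency and totality of the broadcast and adding a termination discussion not required by Definition~\ref{def:agather}), but the underlying argument is the same.
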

\begin{proof}
    From Lemma~\ref{lem:commoncore} we know that Algorithm~\ref{alg:core} satisfies the common core property.

    Validity and agreement follow from the use of the asymmetric reliable broadcast primitive to distribute the initial messages (line \ref{line:deliver}). Since all correct processes wait to deliver the original input messages before accepting any $S$ or $T$ set that contains them (Lines~\ref{line:receives}~and~\ref{line:core-receivet}), it is guaranteed that all processes in the maximal guild receive the same messages with the same values.
\end{proof}

\section{Asymmetric DAG-based Consensus}
\label{sec:adagrider}

The ideas of the asymmetric gather primitive presented in Section~\ref{sec:aGather} can be used to obtain an asymmetric version of the DAG-Rider algorithm \cite{DBLP:conf/podc/KeidarKNS21}. 

\subsection{DAG-based Consensus}

Directed Acyclic Graph (DAG)-based consensus protocols represent a more efficient approach to achieving consensus than traditional chain-based protocols~\cite{DBLP:conf/esorics/Amores-SesarC24}. In these protocols, the total ordering of delivered messages emerges naturally from the structure of the directed acyclic graphs on which they are based, with the aid of a common coin but without the need for extra communication. 

DAG-Rider, proposed by Keidar~\etal~\cite{DBLP:conf/podc/KeidarKNS21}, is a pioneering algorithm to incorporate these techniques. Its elegant wave structure has influenced other protocols such as Bullshark~\cite{DBLP:conf/ccs/SpiegelmanGSK22}, and Tusk~\cite{DBLP:conf/eurosys/DanezisKSS22}. For these reasons, DAG-Rider is the obvious starting point for creating an asymmetric DAG-based protocol. 

Its authors propose an atomic broadcast implementation based on three primitives: a common coin, reliable broadcast, and a gather protocol. They divide time into rounds. During a round $r$, each process creates a vertex containing a block of new messages and invokes $\op{aa-broadcast}$ by reliably broadcasting it. Each new vertex also contains edges pointing to the vertices that the process received from other processes during round $r-1$. The fact that a vertex for round $r$ references the vertices received during the previous round leads to the emergence of the DAG structure. 

Once the vertex has been created, the process reliably broadcasts it to the other processes and waits to deliver $2f+1$ vertices for round $r$ broadcast by other processes. Upon delivering such vertices, the process considers the round as finished and moves forward to the next round, where it performs equivalent actions. After four rounds of exchanging vertices, referred to as a \emph{wave} in the protocol, one process is chosen as the leader using a common coin. The vertex created by the leader process during the first round of the wave is considered the starting point for message delivery. The algorithm then takes this vertex and atomically delivers all the messages contained within it in a deterministic manner. The algorithm proceeds similarly for all vertices reachable through directed paths starting at the leader, following a deterministic order. Once a wave is reached, where all messages have already been atomically delivered (i.e., ``committed"), the algorithm stops.

This algorithm is proven~\cite{DBLP:conf/podc/KeidarKNS21} to satisfy the properties of Byzantine atomic broadcast. Since during a round, each process only waits to deliver $2f+1$ vertices before proceeding to the next, it is possible that the local DAG built by different processes does not have the same vertices. To ensure all participating processes deliver the same set of messages, the leader vertex must be contained in the local DAG built by each process. The algorithm uses the gather protocol to lower-bound the probability that this condition holds. By construction of DAG-Rider, each wave corresponds to a gather execution. The vertex of round 1 corresponds to the proposed value and each vertex in rounds 2, 3, and 4 corresponds to the sets created by the processes during the respective rounds in the gather protocol.

After the execution of a gather, we know that all correct processes will have a (unknown to them) common core within their output sets. In the DAG-Rider algorithm, we achieve a similar guarantee. Specifically, there will be a set of round 1 vertices (the ``common core") such that all processes that completed the wave have them in their local DAG. This ensures that if the common coin selects a vertex in the common coin, all processes have the vertex in their local DAG and can commit the messages for this wave. Since the common core has size $n-f$, the selected vertex is in the common core with probability at least $\frac{2}{3}$. If the chosen vertex does not belong to the common core, the algorithm delays committing the messages until a successful wave.

\subsection{Asymmetric DAG-based Consensus}
In the following, we present an adaptation of DAG-Rider~\cite{DBLP:conf/podc/KeidarKNS21} for the asymmetric world. To derive this algorithm, we rely on the asymmetric gather protocol introduced in Section~\ref{sec:aGather}, as well as the common coin and reliable broadcast primitives presented by Alpos~\etal~\cite{DBLP:journals/dc/AlposCTZ24}. 

Atomic Broadcast is a powerful consensus primitive \cite{DBLP:journals/iandc/CristianASD95}, and many approaches to solving the consensus problem involve developing an algorithm to address the atomic broadcast problem \cite{DBLP:conf/podc/KeidarKNS21, DBLP:conf/osdi/CamaioniGMRVV24}. We will follow the same approach in the asymmetric setting.

\begin{definition}[Asymmetric Atomic Broadcast]
\label{def:core}
A protocol for \textit{asymmetric} atomic broadcast is defined through two events \textit{aa-broadcast} and \textit{aa-deliver} and satisfies the following properties
\begin{description}
    \item[Agreement:] In all executions with a non-empty guild, if one process in the maximal guild delivers a message then all other processes in the maximal guild will eventually deliver it with probability 1
    \item[Validity:] In all executions with a non-empty guild, if a correct process broadcasts a message then all processes in the maximal guild must eventually deliver it with probability 1
    \item[Total order:] In all executions with a non-empty guild, if any process in the maximal guild receives message $m$ before message $m'$ then every other process in the maximal guild must receive message $m$ before $m'$
    \item[Integrity:] In all executions with a non-empty guild, any process in the maximal guild delivers a message at most once
\end{description}
\end{definition}

\subsection{Algorithm}
\label{sec:asymdagrider_alg}

Algorithms~\ref{alg:adag}, \ref{alg:adag2}~and~\ref{alg:adag3} presents the first DAG-based consensus algorithm using asymmetric trust. It is built by adapting the main building blocks of DAG-Rider to the asymmetric setting. We use the asymmetric reliable broadcast and common coin introduced by Alpos~\etal~\cite{DBLP:journals/dc/AlposCTZ24} and employ the asymmetric gather introduced in Section~\ref{sec:aGather}. The gather is not used in a black-box manner, as a sub-protocol called by the algorithm, but instead each wave of Algorithm~\ref{alg:adag} is built following the same structure as Algorithm~\ref{alg:core}. That is, each wave is an asymmetric gather execution. This guarantees that in the last round of the wave, all processes have a common core of first-round vertices. This is used to prove liveness, in the same way as in DAG-Rider. The proof of correctness of Algorithm~\ref{alg:adag} is presented in Appendix~\ref{ap:adag-proof}.

As is the case in the original DAG-Rider, the algorithm consists of continuous repetitions of gather-like waves. Each wave is composed of four rounds, each of which follows the same structure as Algorithm~\ref{alg:core} and where participants create a DAG data structure. The symmetric and asymmetric versions differ primarily in two key aspects: the rule governing round changes and the commit rule.

In {DAG-Rider}, a round concludes once a process receives {\(2f+1\) vertices} broadcast by other processes. However, adapting this to the {asymmetric setting}, where no intrinsic properties arise from set cardinalities, requires a new rule:

\begin{quote}
{A round is considered complete when a process \(p_i\) has delivered vertices from all members of at least one of its quorums \(Q_i \in \CQ_i\).}
\end{quote}

This idea is the same as in Algorithm~\ref{alg:core}, where processes complete $S$ and $U$ sets after receiving previous round messages from one of their quorums. It will apply to round 1 and round 3 sets. In addition to this condition, round 2 sets will be treated slightly differently, as they are the $T$ set equivalent. In Algorithm~\ref{alg:core} they are distributed once a quorum of $\str{Confirm}$ messages has been received. Algorithm~\ref{alg:adag} will follow the same approach and processes, in addition to waiting for the round-change rule, will also wait to receive $\str{Confirm}$ messages from a quorum. 

In what follows, we will denote by $Q_i \models arr$ the fact that array $arr$ contains messages sent by all processes belonging to a quorum $Q_i$, e.g., $Q_i \models DAG[r]$ if $DAG[r]$ contains vertices broadcast by all processes in $Q_i$. We will also use the term \emph{message} or \emph{transaction} to refer  to the blocks sent in the algorithm.

A strong path between two vertices in the DAG is a path where all edges are between vertices in consecutive rounds. In the symmetric DAG-Rider the commit rule establishes that there must be strong paths from at least $2f+1$ vertices belonging to the fourth round of a wave towards the leader for the wave to be committed. As is proven in their paper, this guarantees that this vertex will be reachable from any future leader through a strong path. In the case of {asymmetric DAG-Rider}, this rule is modified to adapt it to the asymmetric setting. Specifically:

\begin{quote}
A process \(p_i\) commits a leader if there exists a quorum \(Q_i \in \CQ_i\) for $p_i$ such that all vertices in \(DAG_i[\text{round}(w, 4)]\) proposed by members of \(Q_i\) have a strong path to the leader.
\end{quote}

We prove in Lemma~\ref{lem:adag-leaderpath} that this new rule guarantees that any future leader will have a strong path to the leader of the current wave.

As the algorithm is a succession of identical waves, it is enough to explain how one wave develops to grasp the main ideas behind it. In what follows we explain a wave $w$ composed of rounds 1, 2, 3 and 4. Each wave follows the structure of an asymmetric gather. During round 1 each process creates a new vertex containing messages that should be atomically broadcast (line~\ref{line:newv1}). It then reliably broadcasts the vertex and waits until $Q_i \models DAG[1]$. This is equivalent to creating a candidate $S$ set in Algorithm~\ref{alg:core}. Afterwards, it moves on to the second round.

In round 2 the process again creates a new vertex, which has edges to all vertices in the previous round on its local DAG, and reliably broadcast it. This is equivalent to sending the $\str{DistributeS}$ messages in Algorithm~\ref{alg:core} with edges representing the inclusion of values in the set. Upon broadcasting the vertex the process will also deliver vertices broadcast by other processes. Following the outline of Algorithm~\ref{alg:core}, the process will be ready to move to the next round after it has received $\str{Confirm}$ messages from a quorum. Every time the process delivers a vertex for round 2, it replies with an acknowledgment to the sender (line~\ref{line:ack_vertex}). Once a process has received $\str{Ack}$ messages from a quorum, it sends a $\str{Ready}$ message to all other processes (line~\ref{line:ack_quorum}). Upon receiving $\str{Ready}$ messages from a quorum the process sends $\str{Confirm}$ messages to all processes (line~\ref{line:ready_quorum}). Finally, upon receiving $\str{Confirm}$ messages from a quorum (line~\ref{line:ready2_quorum}) the process is ready to continue moving forward to round 3. This is signalized by setting the value of variable $tReady$ to $\true$. Setting this variable allows the process to create a new vertex, which references all vertices received during the previous round, and reliably broadcast it to all other processes (line~\ref{line:round2}). This follows the flow of control messages sent in Algorithm~\ref{alg:core} and has the same intention of preventing moving to the next round until the second round messages have been distributed to sufficient processes. In the next round, the process reliably broadcasts the vertex it created and waits to receive round 3 vertices from one of its quorums. Now that we have finished executing all the rounds, and because we have followed the same structure as that of Algorithm~\ref{alg:core}, we know that there exists a common core among the vertices in the first round of the $DAG$. Knowing this, we can attempt the delivery of the messages

This process starts by selecting a leader vertex from the first round of the wave using a common coin (line~\ref{line:getleader}), which guarantees that all processes in the maximal guild will select the same leader. Given that a process changes rounds after having delivered vertices coming from a quorum, it is possible that it does not have all vertices sent during a round in its local DAG. This implies that a process might not yet have delivered the leader of a round. If this is the case, the process \textit{skips} this wave and does not commit any vertex. If the process has the leader vertex in its local DAG it checks if the commit rule is satisfied. If it is, then the wave is committed. This implies starting a graph traversal at the leader, visiting all reachable vertices, and delivering the blocks of messages they contain. This routine continues until reaching vertices in a wave that has already been committed. Lemma~\ref{lem:adag-leaderpath} shows that if some process commits a leader, then all other processes, even those that do not currently have the leader in their local DAG (i.e., they will deliver this vertex later), will be able to commit this vertex and all other uncommitted vertices reachable from it. Therefore if some process in the maximal guild commits a wave then eventually all other processes will commit it and atomically deliver its associated messages in the same order.

In the symmetric DAG-Rider, a process commits, in expectation, every $\frac{3}{2}$ waves. Since each wave has a constant size, due to the constant-round duration of the gather, the number of time units between commits is expected constant. Lemma~\ref{lem:rider-expected} shows that in Algorithm~\ref{alg:adag}, a commit occurs every $\frac{|\mathcal{P}|}{c(\mathbb{Q})}$ waves, where $c(\mathbb{Q})$ is the size of the smallest quorum in the system. Since quorums are fixed throughout an execution, this value is constant. Each wave in Algorithm~\ref{alg:adag} also has constant duration, that of Algorithm~\ref{alg:core}. As a result, the number of time units between commits in Algorithm~\ref{alg:adag} is also expected constant. This emphasizes the importance of using a constant-round asymmetric gather.

\begin{algo}
  \vbox{
  \small
  \begin{numbertabbing}
    xx\=xx\=xx\=xx\=xx\=xx\=MMMMMMMMMMMMMMMMMMM\=\kill
    \textbf{state} \label{}\\
    \> \(r \gets 0\) \` // current round counter \label{}\\
    \>  \(\var{DAG} \gets \{\}\) \` // array of set of vertices \label{}\\
    \> \> \(\var{DAG}[0] \gets \text{hardcoded quorum of vertices for $p_i$}\) \label{}\\
    \> \> \(\var{DAG}[j] \gets \{\}, \text{for all $j>0$} \) \label{}\\
    \> \var{blocksToPropose} \( \gets \{\} \)  // stores valid trans. from clients to broadcast \label{}\\
    \> \var{deliveredVertices} \( \gets \{\}\) \` // stores vertices already delivered \label{}\\
    \> \var{leadersStack} \( \gets \{\}\) \`  \label{}\\
    \> \var{tReady} \( \gets \false\) \label{}\\
    \> \var{buffer}\( \gets \{\}\) \` // store vertices until they are processed \label{}\\
    \> \var{ack}\( \gets [\bot]^N\) \` // \str{Ack} msgs. from other processes \label{}\\
    \> \var{ready}\( \gets [\bot]^N\) \` // \str{Ready} msgs. from other processes \label{}\\
    \> \var{confirm} \( \gets [\bot]^N\) \` // \str{Confirm} msgs. from processes \label{}\\
    \> \var{decidedWave} \( \gets 0\) \` // last committed wave \label{}\\

    \\

    \textbf{procedure}  \op{createNewVertex}(\var{round}) \label{}\\  
    \> \textbf{wait until} \(\neg \var{blocksToPropose.empty()}\) \label{}\\
    \> \(\var{v.block} \gets \var{blocksToPropose.dequeue()}\) \label{}\\
    \> \(\var{v.strongEdges} \gets \var{DAG[round-1]}\) \label{adag:set-strong-edges}\\
    \> \(\op{setWeakEdges(\var{v}, \var{round})}\) \label{}\\
    \> \textbf{return} \var{v} \label{}\\
    \\

    \textbf{procedure}  \op{setWeakEdges(\var{v}, \var{round})}  \label{}\\  
    \> \(\var{v.weakEdges} \gets \{\}\) \label{}\\
    \> \textbf{for} $\var{r}=\var{round}-2, \dots, 1$ \textbf{do} \label{}\\
    \> \> \textbf{for} $\var{u} \in \var{DAG[r]}$ s.t. $\neg \op{path(\var{v}, \var{u})}$ \textbf{do} \label{}\\
    \> \> \> \(\var{v.weakEdges} \gets \var{v.weakEdges} \cup \{\var{u}\}\) \label{}\\
    \\

    \textbf{procedure}  \op{getWaveVertexLeader(\var{w})}  \label{}\\  
    \> \(p_j \gets \op{chooseLeader}_i(w)\) \label{}\\
    \> \textbf{if} \(\exists v \in \op{DAG}[\op{round}(w, 1)]\) : \(v.source=p_j\) \textbf{then} \label{}\\
    \> \> \textbf{return} $v$ \label{}\\
    \> \textbf{return} \(\bot\)  \label{}\\
    \\

    \textbf{while}  \true \textbf{do}  \label{line:adag-while}\\  
        \> \textbf{for} \(\var{v} \in \var{buffer}: \var{v.round} \leq \var{r}\) \textbf{do} \label{}\\
        \> \> \textbf{if} \(\forall \var{v}' \in \var{v.strongEdges} \cup \var{v.weakEdges}: \var{v}' \in \bigcup_{k\geq 1} \op{\var{DAG[k]}} \) \textbf{then} \label{adag:allreferenced}\\
        \> \> \> \( \var{DAG[v.round]} \gets \var{DAG[v.round]} \cup \{\var{v}\} \) \label{adag:add-new-vertex}\\
        \> \> \> \( \var{buffer} \gets \var{buffer} \setminus \{\var{v}\} \) \label{}\\

        \> \textbf{if} \( \exists Q \in \CQ_i : Q \models \var{DAG[r]} \) \textbf{then} \label{line:receive_quorum}\\
        \> \> \textbf{if} \(\var{r} \mod 4 = 0\) \textbf{then} \label{line:round4}\\
        \> \> \> \(\op{waveReady(\var{r}/4)}\) \label{}\\
        \> \> \> \(\var{r} \gets \var{r} + 1\) \label{}\\
        \> \> \> \(\var{v} \gets \op{createNewVertex(\var{r})} \) \label{line:newv1}\\
        \> \> \> \(\op{arb-broadcast((\var{v}, \var{r}))} \) \label{}\\
        \> \> \textbf{if} \(\var{r} \mod 4 = 1\) \textbf{then} \label{line:round1}\\
        \> \> \> \(\var{r} \gets \var{r} + 1\) \label{}\\
        \> \> \> \(\var{v} \gets \op{createNewVertex(\var{r})} \) \label{line:newv2}\\
        \> \> \> \(\op{arb-broadcast((\var{v}, \var{r}))} \) \label{}\\
        \> \> \textbf{if} \(\var{r} \mod 4 = 2 \wedge \var{tReady}\) \textbf{then} \label{line:round2}\\
        \> \> \> \(\var{r} \gets \var{r} + 1\) \label{}\\
        \> \> \> \(\var{v} \gets \op{createNewVertex(\var{r})} \) \label{line:newv3}\\
        \> \> \> \(\var{tReady} \gets \false\) \label{}\\
        \> \> \> \(\var{ack} \gets [\bot]^N\) \label{}\\
        \> \> \> \(\var{ready} \gets [\bot]^N\) \label{}\\
        \> \> \> \(\var{confirm} \gets [\bot]^N\) \label{}\\
        \> \> \> \(\op{arb-broadcast((\var{v}, \var{r}))} \) \label{}\\
        \> \> \textbf{if} \(\var{r} \mod 4 = 3\) \textbf{then} \label{line:round3}\\
        \> \> \> \(\var{r} \gets \var{r} + 1\) \label{}\\
        \> \> \> \(\var{v} \gets \op{createNewVertex(\var{r})} \) \label{line:newv4}\\
        \> \> \> \(\op{arb-broadcast((\var{v}, \var{r}))} \) \label{}\\

         \end{numbertabbing}

  }
  \caption{Asymmetric DAG-based consensus, part 1 (proc.~$p_i$). Algorithms~\ref{alg:adag2}~and~\ref{alg:adag3} contain the rest of the protocol. }
  \label{alg:adag}
\end{algo}

\begin{algo}
  \vbox{
  \small
  \begin{numbertabbing}
    xx\=xx\=xx\=xx\=xx\=xx\=MMMMMMMMMMMMMMMMMMM\=\kill

    \textbf{upon}  receiving $\msg{Ack}{p_j}$ from $p_j$ \textbf{do} \label{}\\  
        \> $\op{\var{ack[j]}} \gets \true$ \label{}\\
    \\

    \textbf{upon} $\exists Q \in \CQ_i : Q \models ack$ \textbf{do} \label{line:ack_quorum}\\  
        \> send $\msg{Ready}{}$ to all \label{}\\
    \\

    \textbf{upon}  receiving $\msg{Ready}{}$ from $p_j$ \textbf{do} \label{}\\  
        \> $\var{ready[j]} \gets \true$ \label{}\\
    \\

    \textbf{upon} $\exists Q \in \CQ_i : Q \models \var{ready}$ \textbf{do} \label{line:ready_quorum}\\  
        \> send $\msg{Confirm}{}$ to all \label{}\\
    \\

    \textbf{upon}  receiving $\msg{Confirm}{}$ from $p_j$ \textbf{do} \label{}\\  
        \> $\var{confirm[j]} \gets \true$ \label{}\\
    \\

    \textbf{upon} $\exists K \in \CK_i : K \models \var{confirm}$ \textbf{do} \label{}\\  
        \> send $\msg{Confirm}{}$ to all \label{}\\
    \\

    \textbf{upon} $\exists Q \in \CQ_i : Q \models \var{confirm}$ \textbf{do} \label{}\\  
        \> send $\msg{Confirm}{}$ to all \label{}\\
    \\

    \textbf{upon}  receiving $\msg{Confirm}{}$ from a quorum $Q \in \CQ_i$ \textbf{do} \label{line:ready2_quorum}\\  
        \> $\var{tReady} \gets \true$ \label{}\\

         \end{numbertabbing}

  }
  \caption{Asymmetric DAG-based consensus, part 2 (proc.~$p_i$)}
  \label{alg:adag2}
\end{algo}

\begin{algo}
  \vbox{
  \small
  \begin{numbertabbing}
    xx\=xx\=xx\=xx\=xx\=xx\=MMMMMMMMMMMMMMMMMMM\=\kill

    \textbf{upon}  \op{arb-deliver((\var{v}, \var{round}, $p_j$))  \textbf{do}  // asymmetric reliable broadcast delivery interface} \label{adag:deliver-vertex}\\  
        \> \(\var{v.source} \gets p_j\) \label{adag:v.source}\\
        \> \(\var{v.round} \gets \var{round}\) \label{adag:v.round}\\ 
        \> \textbf{if} \( \exists Q \in \CQ_j \text{ for some }\CQ_j \in \mathbb{Q} : Q \models  \var{v.strongEdges} \) \textbf{then} \label{adag:q-previous-round}\\
        \> \> \(\var{buffer} \gets \var{buffer} \cup \{\var{v}\}\) \label{}\\
        \> \textbf{if} \(\var{round} \mod 4 = 2\) \textbf{then} \label{}\\
        \> \> send $\msg{Ack}{p_i}$ to $p_j$ \label{line:ack_vertex}\\
    \\

     \textbf{upon}  \op{aa-broadcast(\var{b})} \textbf{do} // asymmetric atomic broadcast interface  \label{}\\  
        \> \(\var{blocksToPropose.enqueue(b)}\) \label{}\\
    \\

    \textbf{upon}  \op{waveReady(\var{w})}  \textbf{do} \label{}\\  
        \> \(\var{v} \gets \op{getWaveVertexLeader(\var{w})}\) \label{line:getleader}\\ 
        \> \textbf{if} \( (\var{v} = \bot) \vee (\nexists Q \in \CQ_j \text{ for some }\CQ_j \in \mathbb{Q}, \forall v' \in Q: \op{strongPath(\var{v}, 
        \var{v}')}) \) \textbf{then}  \label{line:commitrule}\\
        \> \> \textbf{return} \label{}\\
        \> \(\var{leadersStack.push(v)}\) \label{}\\
        \> \textbf{for} \( w' =(w-1),\dots,\var{decidedWave}+1\) \textbf{do} \label{adag:committed-in-order}\\
        \> \> \( \var{v}' \gets \op{getWaveVertexLeader(\var{w}')}\) \label{}\\
        \> \> \textbf{if } \( v' \neq \bot \wedge \op{strongPath(\var{v}, \var{v}')} \) \textbf{then} \label{}\\
        \> \> \> \( \var{leadersStack.push(v$'$)}\) \label{}\\
        \> \> \> \( \var{v} \gets \var{v}'\) \label{}\\
        \> \(\var{decidedWave} \gets w\) \label{}\\
        \> \(\op{orderVertices(\var{leaderStack})}\) \label{}\\
    \\

    \textbf{procedure}  \op{getWaveVertexLeader(\var{w})}   \label{}\\  
        \> \(p_j \gets \op{chooseLeader(\var{w})}\) \label{}\\
        \> \textbf{if } \( \exists v \in \var{DAG[\op{waveToRound(\var{w}, 1)}]} \text{ : } \var{v.source}=p_j \)  \textbf{then} \label{}\\
        \> \> \textbf{return }\(\var{v}\) \label{}\\
        \> \textbf{return }\(\bot\) \label{}\\
    \\

    \textbf{procedure}  \op{orderVertices(\var{leadersStack}, \var{w})} \label{adag:order-vertices}\\  
        \> \textbf{while} \(\neg \var{leadersStack.isEmpty}()\) \textbf{do} \label{}\\  
        \> \> \( \var{v} \gets \var{leadersStack.pop()}\) \label{}\\  
        \> \>  \(\var{verticesToDeliver} \gets \{ \var{v}' \in \cup_{r>0} \var{DAG[r]} \text{ s.t. } \op{path(\var{v}, \var{v}')} \wedge \var{v}' \notin \var{deliveredVertices}  \}\)  \label{}\\  
        \> \> \textbf{for}   \( v' \in \var{verticesToDeliver}\)  in some deterministic order \textbf{do}\label{}\\  
        \> \> \>  \textbf{output} \( \op{aa-deliver(\var{v}'\var{.block})} \) \label{adag:deliver-block}\\  
        \> \> \>  \(\var{deliveredVertices} \gets \var{deliveredVertices} \cup \{\var{v}'\}\) \label{}\\  

         \end{numbertabbing}

  }
  \caption{Asymmetric DAG-based consensus, part 3 (proc.~$p_i$)}
  \label{alg:adag3}
\end{algo}

\subsection{Proof of Correctness of Algorithms~\ref{alg:adag},~~\ref{alg:adag2}~and~~\ref{alg:adag3}}
\label{ap:adag-proof}

In Lemma~\ref{lem:adag-leaderpath} we prove that if a process in the maximal guild commits a leader vertex $v$ then eventually all other processes in the maximal guild will be able to commit this leader as well. This is done by proving that there will exist a strong path from any future leader vertex $u$ towards $v$. Since a vertex is committed if it is reachable from a leader, this proves that all other processes will commit $v$ eventually. When a future leader $u$ is chosen and committed, a graph traversal begins that includes all reachable vertices, which, according to Lemma \ref{lem:adag-leaderpath}, would include $v$.
\begin{lemma}
    \label{lem:adag-leaderpath}
    If some process in the maximal guild $p_i$ commits the leader vertex $v$ of a wave $w$, then for every leader vertex $u$ of a wave $w'>w$ and for every process $p_j$, if $u \in DAG_j[round(w', 1)]$, then \op{strong\_path(u, v)} returns true in wave $w'$
\end{lemma}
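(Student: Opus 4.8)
\textbf{Overall approach.} The plan is to mimic the classical DAG-Rider argument for the analogous lemma, but replacing threshold counting with the quorum-consistency property of the asymmetric quorum system. The key structural fact to exploit is that the commit rule used by $p_i$ guarantees that \emph{every} vertex in $DAG_i[\mathrm{round}(w,4)]$ proposed by members of some quorum $Q_i\in\CQ_i$ has a strong path to $v$. On the other side, when a future leader vertex $u$ of wave $w'>w$ is created by a correct process in the maximal guild, the round-change rule forces $u$'s strong edges to cover a quorum of round-$(w'-1)$ (more precisely round $\mathrm{round}(w',1)-1 = \mathrm{round}(w-1,4)$ in the first step of the induction) vertices. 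I would first prove the statement for the immediately following wave $w'=w+1$, and then lift it to arbitrary $w'>w$ by induction on $w'$.

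\textbf{Base case $w'=w+1$.} Here I would argue as follows. The leader vertex $u$ of wave $w+1$ lives in round $\mathrm{round}(w+1,1) = \mathrm{round}(w,4)+1$. By the definition of \op{createNewVertex} and the round-change rule on line~\ref{line:receive_quorum}, when the creator of $u$ advanced from round $\mathrm{round}(w,4)$ it had $Q \models DAG[\mathrm{round}(w,4)]$ for some quorum $Q\in\CQ_j$ of the creator $p_j$, and $u.\var{strongEdges}$ contains exactly the round-$\mathrm{round}(w,4)$ vertices in that creator's DAG at that time; in particular $u$ has strong edges to all vertices in some quorum $Q$ of its source. Meanwhile $p_i$ committed $v$, so all vertices of some quorum $Q_i\in\CQ_i$ in round $\mathrm{round}(w,4)$ have a strong path to $v$. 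By the consistency property of the asymmetric quorum system, $Q\cap Q_i$ contains a process $p_k$ that is not in $F$ (more precisely, not in any commonly-foreseen fail-prone set of $p_i$ and $p_j$), hence $p_k$ is correct and the round-$\mathrm{round}(w,4)$ vertex $x$ broadcast by $p_k$ is a single well-defined vertex (reliable broadcast guarantees no equivocation among correct senders). This $x$ is a strong edge of $u$ and has a strong path to $v$, so concatenating gives \op{strongPath}$(u,v)=\true$.

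\textbf{Inductive step and bookkeeping.} For $w'>w+1$, I would apply the induction hypothesis to wave $w'-1$: every leader vertex of wave $w'-1$ that appears in some process's DAG has a strong path to $v$; more usefully, I would strengthen the inductive statement to say that \emph{every} vertex in round $\mathrm{round}(w'-1,4)$ (not only the leader) that a committing process would see has a strong path to $v$ — this is the right invariant because the commit of $v$ by $p_i$ gives it for the quorum $Q_i$ in round $\mathrm{round}(w,4)$, and strong-path reachability propagates forward: any round-$r$ vertex whose strong edges cover a quorum and all of whose quorum-neighbors in round $r-1$ reach $v$, itself reaches $v$, again by quorum consistency picking a correct intersection vertex. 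Then $u$ in round $\mathrm{round}(w',1)$ has strong edges covering a quorum of round $\mathrm{round}(w'-1,4)$ vertices (by the round-change rule applied at the creator of $u$), each of which reaches $v$ by the strengthened hypothesis, so $u$ reaches $v$. I would carry the induction over all four rounds between consecutive waves using this ``quorum-neighbor reachability is preserved along strong edges'' step repeatedly.

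\textbf{Main obstacle.} The delicate point is the asymmetry of the intersection guarantee: quorum consistency only yields a vertex in $Q\cap Q_i$ outside $F_{ij}\in\CF_i^*\cap\CF_j^*$, not unconditionally a correct process. I expect the main work to be confirming that this suffices — namely, that in any execution with a guild, for the pairs of processes that actually matter here (the committing process $p_i$ and the creator $p_j$ of $u$, both relevant via the maximal guild), the intersection process can be taken to be correct and non-equivocating, so that ``the'' round-$r$ vertex it broadcast is unambiguous and its strong path to $v$ is inherited. One must also handle the case $u \in DAG_j[\mathrm{round}(w',1)]$ for an arbitrary (possibly naive or even the point is only about \emph{some}) process $p_j$: since $u$ was produced via asymmetric reliable broadcast by its source, and the source being in a later leader position only matters through \op{chooseLeader}, the strong-edge structure of $u$ is fixed at creation regardless of who holds it, so the argument about $u$'s strong edges covering a quorum of its source is independent of $p_j$. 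I would make sure to invoke Lemma~7 of~\cite{DBLP:journals/dc/AlposCTZ24} (every quorum contains a wise process) exactly where needed to keep the relevant fail-prone sets inside $\CF_i^*\cap\CF_j^*$.
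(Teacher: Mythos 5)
Your overall route is essentially the paper's: use the commit rule to obtain a quorum $Q_1\in\CQ_i$ whose round-$\op{round(w,4)}$ vertices all have strong paths to $v$, use the round-change rule (line~\ref{line:receive_quorum}) to argue that any later vertex, in particular the future leader $u$, has strong paths to the round-$\op{round(w,4)}$ vertices of a quorum $Q_2$ of some process, and then intersect $Q_1$ and $Q_2$ via quorum consistency and concatenate paths. The paper does this in one shot, by iterating the ``strong edges cover a quorum of the previous round'' observation down from $\op{round(w',1)}$ to $\op{round(w,4)}$, rather than by your wave-by-wave induction with a strengthened invariant; that organizational difference is harmless (and in your inductive step the renewed appeal to quorum consistency is superfluous, since once \emph{all} previous-round vertices reach $v$ any single strong edge suffices).

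The one step that would fail as written is your handling of the intersection. You claim $Q\cap Q_1$ contains a \emph{correct} process, and you propose to justify this because the committing process $p_i$ and the creator of $u$ are ``both relevant via the maximal guild.'' But the creator of a future leader vertex is selected by the common coin and may be faulty or naive, and the insertion check (line~\ref{adag:q-previous-round}) only requires $u$'s strong edges to cover a quorum of \emph{some} process. Hence $F$ need not belong to the intersection of the two relevant fail-prone collections, and quorum consistency does not yield a correct process in $Q\cap Q_1$; Lemma~7 of Alpos~et~al.\ does not rescue this either. Fortunately, correctness of the intersection process is not needed: consistency applied with $F_{ij}=\emptyset$ already gives $Q\cap Q_1\neq\emptyset$, the consistency of the asymmetric reliable broadcast guarantees that the (possibly Byzantine) intersection sender's round-$\op{round(w,4)}$ vertex is identical in $DAG_i$ and $DAG_j$, and Lemma~\ref{lem:rider-causalhistory} ensures the strong path from that vertex down to $v$ is also present in $DAG_j$. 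This is exactly how the paper's proof (implicitly) closes the argument; with that substitution your base case and induction go through, and the ``main obstacle'' you flag dissolves.
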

\begin{proof}

     For a process in the maximal guild $p_x$ and round $t$ it holds that any vertex (generated by an arbitrary process $p_y$) contained in any $DAG_x[t]$ has a strong path to the vertices produced by some quorum $Q' \in \CQ_y$ in $DAG_x[t-1]$. This follows from the fact that a process does not proceed to the next round until it has received enough messages for the current round (line~\ref{line:receive_quorum}). Therefore, any vertex in $DAG_x[t]$ will have a strong path to the vertices produced by some quorum $Q'' \in \CQ_z$ for some arbitrary process $p_z$ in any round $t''<t$.

    Given that process $p_i$ committed vertex $v$ there exists a quorum $Q_1 \in \CQ_i$ such that each $p \in Q_1$  produced a vertex  $v' \in DAG_i[\op{round(w, 4)}]$ and \op{strong\_path(v, v')} is true. 
    
    Consider any leader vertex $u$ of a future wave $w'>w$ and an arbitrary process in the maximal guild $p_j$. Denote by $Q_2$ the quorum such that $u$ has strong paths to the vertices produced by each $p'\in Q_2$ in \op{round(w, 4)}. By the quorum consistency property we know that there exists at least one process $p \in Q_1\cap Q_2$. Therefore, vertex $u$ has a strong path to the vertex of process $p'$ in \op{round(w, 4)} and this vertex has a strong path to vertex $v$ in \op{round(w, 1)}. This shows that any leader will have a strong path to any previously committed leader vertex.
\end{proof}

The algorithm makes progress only if the leader is committed by at least one wise process in the maximal guild. If this never happens, then no messages are delivered and liveness is lost. Therefore we need a way to calculate, in expectation, how many waves will go by before a leader is committed by at least one process. Recall that a leader, which is a vertex in the first round of the wave, is considered committed by a process if it has strong paths from the vertices created by one of its quorums in the last round of the wave. Lemma~\ref{lem:adag-bound1} will help us calculate the probability by identifying a minimum bound on the size of a set of vertices from the first round such that the commit rule holds for all of them. 

\begin{lemma}
    \label{lem:adag-bound1}
    Let $p_i$ be a process in the maximal guild that completes wave $w$. Then there are quorums $Q\in \CQ_m$ for some process $p_m$ in the maximal guild, $Q' \in \CQ_i$, a set $U \subseteq DAG_i[\op{round(w, 1)}]$ and a set $V \subseteq DAG_i[\op{round(w, 4)}]$ such that $U$ contains the vertices created by the processes in $Q$, $V$ contains the vertices created by the processes in $Q'$ and for all $ u \in U, v \in V$ \op{strong\_path(u, v)} is true. 
\end{lemma}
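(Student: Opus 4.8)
The plan is to read each wave of Algorithm~\ref{alg:adag} as an execution of Algorithm~\ref{alg:core} and transplant the common-core argument of Lemma~\ref{lem:commoncore} into the DAG, under the dictionary ``a round-$1$ vertex $=$ an input $x$'', ``a round-$2$ vertex $=$ a $\str{DistributeS}$ message whose strong edges encode the sender's $S$ set'', ``a round-$3$ vertex $=$ a $\str{DistributeT}$ message whose strong edges encode the $T$ set'', ``the round-change rule of line~\ref{line:receive_quorum} $=$ collecting messages from a quorum''. Under this reading the DAG analogues of Lemmas~\ref{lem:wise_sendt}--\ref{lem:distribute_t} hold. First I would build $V$: since $p_i$ completes wave $w$, the condition of line~\ref{line:receive_quorum} held when its round counter equaled $\op{round(w,4)}$, so there is a quorum $Q'\in\CQ_i$ with $Q'\models DAG_i[\op{round(w,4)}]$; take $V$ to be the round-$4$ vertices of wave $w$ in $DAG_i$ whose sources lie in $Q'$, which is precisely the set of vertices created by the processes of $Q'$.

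Next I would build $U$ from the gather's common core. By the DAG translation of Lemma~\ref{lem:distributes} --- reading the acknowledgment of a round-$2$ vertex as contingent on that vertex being admitted into the acknowledger's local DAG, the DAG counterpart of the containment guard on received $S$ sets in Algorithm~\ref{alg:core} --- there is a process $p_m\in\CG_{max}$ and a quorum $Q_m\in\CQ_m$ such that every correct $p\in Q_m$ has $p_m$'s round-$2$ vertex of wave $w$, call it $v_m^{(2)}$, in $DAG_p[\op{round(w,2)}]$ before it broadcasts its own round-$3$ vertex $v_p^{(3)}$; hence $v_p^{(3)}$ carries a strong edge to $v_m^{(2)}$. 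When $p_m$ created $v_m^{(2)}$ at line~\ref{line:newv2} it set its strong edges to $DAG_m[\op{round(w,1)}]$, which by line~\ref{line:receive_quorum} contains the round-$1$ vertices of some quorum $Q\in\CQ_m$; let $U$ be that set of round-$1$ vertices, so $U$ consists exactly of the vertices created by the processes of $Q$. Composing the strong edges $v_p^{(3)}\to v_m^{(2)}\to u$, every correct $p\in Q_m$ has a strong path from $v_p^{(3)}$ to every $u\in U$.

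The last step joins $V$ to $U$. Fix $v\in V$, created by some $p'\in Q'$. Since $p_i$ admitted $v$ into $DAG_i$, $v$ passed the admissibility test of line~\ref{adag:q-previous-round}, so the strong edges of $v$ contain the round-$3$ vertices of all processes of some quorum $Q''$. Applying quorum consistency to $Q''$ and $Q_m$ in its sharp form --- with the intersected fail-prone set taken to be the actual fault set $F$, so that the process obtained is correct --- yields a correct $p^{\ast}\in Q''\cap Q_m$, whence $v$ has a strong edge to $v_{p^{\ast}}^{(3)}$. Chaining with the previous step, $v$ has a strong path to every $u\in U$, i.e.\ \op{strong\_path(u, v)} holds for all $u\in U$; since $v\in V$ was arbitrary, this is the claim, with $Q\in\CQ_m$ (and $p_m$ in the maximal guild), $Q'\in\CQ_i$, and the sets $U,V$ just built. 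The inclusion $U\subseteq DAG_i[\op{round(w,1)}]$ is automatic: by line~\ref{adag:allreferenced} a vertex is admitted into a local DAG only once all of its edges are present, so the entire strong-reachability cone of any $v\in V$ --- in particular $v_{p^{\ast}}^{(3)}$, $v_m^{(2)}$ and all of $U$ --- lies in $DAG_i$.

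I expect the main obstacle to be obtaining a \emph{correct} process in the intersection $Q''\cap Q_m$ of the last step. In Lemma~\ref{lem:commoncore} the two intersected quorums both belong to guild members, so sharp consistency ($Q_i\cap Q_j\nsubseteq F$) applies with the actual fault set and returns a correct process for free; here, however, $Q''$ is certified by the quorum system of the process that created $v\in V$, which may lie outside the guild or even be faulty, so one must argue both that $v$'s strong edges still contain a genuine quorum --- via the admissibility test of line~\ref{adag:q-previous-round} --- and that this quorum meets $Q_m$ in a correct process. If the certifying process is not wise, plain consistency only yields $Q''\cap Q_m\neq\emptyset$, and a faulty process there need not have recorded $v_m^{(2)}$ as a strong edge of its round-$3$ vertex, breaking the chain; closing this gap is the real content of the proof. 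A companion point to discharge is that every round-$k$ vertex admitted into a local DAG references a full quorum of round-$(k-1)$ vertices --- so that all reachability chains above are well defined --- and that $v_m^{(2)}$ reaches the DAG of each correct $p\in Q_m$ early enough to be referenced, which is where the conditional-acknowledgment structure of Algorithm~\ref{alg:core} must be faithfully preserved in Algorithm~\ref{alg:adag}.
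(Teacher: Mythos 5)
Your proposal takes essentially the same route as the paper: its proof likewise reads each wave of Algorithm~\ref{alg:adag} as an execution of Algorithm~\ref{alg:core}, takes $Q\in\CQ_m$ to be the gather's common core (the round-1 vertices referenced by a guild member $p_m$, obtained through the analogues of Lemmas~\ref{lem:distributes} and~\ref{lem:distribute_t}) and $Q'\in\CQ_i$ to be the quorum whose round-4 vertices $p_i$ collected in order to finish the wave, and then concludes the strong paths from this correspondence. Your write-up is in fact more explicit than the paper's, which stops at the wave-equals-gather dictionary and simply asserts that all round-4 vertices of the wave have strong paths to the round-1 vertices of $Q$. The two points you flag at the end --- that a vertex of $V$ created by a faulty or naive process only certifies some quorum $Q''$ whose intersection with $Q_m$ is not guaranteed to contain a correct process, and that the $\str{Ack}$ of line~\ref{line:ack_vertex} is sent without the guard of line~\ref{line:receives} of Algorithm~\ref{alg:core} --- are not resolved in the paper's proof either; it implicitly restricts attention to processes ``that have executed the algorithm correctly'' and leaves the rest to the analogy. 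So you have not missed an idea that the paper supplies; the residual concern you articulate is an imprecision of the published argument itself rather than a step whose resolution you failed to find.
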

\begin{proof}

    Just as in the proof of Lemma 2 in the DAG-Rider protocol~\cite{DBLP:conf/podc/KeidarKNS21} we will show that each wave executes an asymmetric gather. This will show that there is a set of processes of bounded minimum size in the first round such that, if any of them is chosen as the leader, the algorithm will commit.

    The gather wave structure is set in the \emph{while} loop in Line~\ref{line:adag-while}. This section manages the change from round to round and from wave to wave, dictated by the value of the round counter $r$. We proceed to show how one wave of the algorithm follows the structure of the asymmetric gather Algorithm~\ref{alg:core}. 

    When $r\ \text{mod}\ 4$ is 1 then the algorithm behaves as during the first round of the gather. That is, it waits until $Q \models \op{DAG}[r]$ and then broadcasts a new vertex that references the vertices in $\op{DAG}[r]$. This is equivalent to sending the $\str{DistributeS}$ in the gather. 

    When a process has received acknowledgments back from one of its quorums (Line~\ref{line:ack_quorum}), it sends a $\str{Ready}$ message. This follows the same idea of the asymmetric gather. Upon receiving $\str{Ready}$ messages from a quorum (Line~\ref{line:ready_quorum}), it sends a $\str{Confirm}$ message. Once a process has received $\str{Confirm}$ messages from one of its quorums, it sets the value of variable $tReady$ to true. This allows it to continue processing vertices for the second round of the wave and is equivalent to the \emph{waiting} added to the asymmetric gather before sending the $T$ sets. 

    Once \op{tReady} is true, $r\ \text{mod}\ 4$ is 2, and the process has received vertices from one of its quorums it proceeds to create and broadcast a new vertex. This vertex references all vertices in DAG[r] and is equivalent to sending the $T$ sets in Algorithm~\ref{alg:core}. 

    When $r\ \text{mod}\ 4$ is 3 and the process has received vertices from one of its quorums, it creates and broadcasts a new vertex. This vertex is the equivalent of the $U$ set in Algorithm~\ref{alg:core}. As proved there, this set contains a common core. All other processes that have executed the algorithm correctly will have a common core in the causal history of their vertex for the current round. Therefore, once $r\ \text{mod}\ 4$ is 0 and the process has received vertices for this round coming from one of its quorums, we know that there exists a quorum $Q\in \CQ_m$ for some wise process $p_m$ such that all vertices belonging to the last round of the wave will contain strong paths to all vertices created by processes in $Q$ during the first round. This quorum $Q$ is the common core guaranteed by the gather algorithm. During a round each process waits to receive messages from at least one of their quorums, therefore we know that $Q' \models DAG_i[round(w, 4)]$ for some $Q' \in \CQ_i$. Since all these vertices have strong paths to the vertices created by processes in $Q$ in the first round we know that for all $ u \in U, v \in V$ \op{strong\_path(u, v)} is true. 
    \end{proof}

Now that we know there is a common core of processes for which the commit rule holds true, we can use this information to calculate the expected number of waves until a participant commits a leader. In the next lemma, we will use the notation $c(\mathbb{Q})$ to determine $\min_{Q\ \in \CQ_i, \forall p_i \in \mathcal{P}} |Q|$, i.e., the size of the smallest quorum contained in $\mathbb{Q}$.
\begin{lemma}
    The expected number of waves until the commit rule is met is at most $\frac{|\mathcal{P}|}{c(\mathbb{Q})}$.
    \label{lem:rider-expected}
\end{lemma}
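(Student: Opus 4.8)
The plan is to model the random choice of leaders as an i.i.d. sequence and compute the expectation of a geometric waiting time. From Lemma~\ref{lem:adag-bound1}, for any process $p_i$ in the maximal guild that completes wave $w$, there is a quorum $Q \in \CQ_m$ for some process $p_m$ in the maximal guild whose first-round vertices in wave $w$ all have strong paths coming from the vertices of a quorum $Q' \in \CQ_i$ in round $(w,4)$; hence, if the common coin selects as leader of wave $w$ any process in $Q$, then the commit rule is satisfied at $p_i$ (and, by Lemma~\ref{lem:adag-leaderpath}, this propagates to all processes in the maximal guild). Call wave $w$ \emph{successful} if $\op{chooseLeader}(w) \in Q$ for such a common-core quorum $Q$.

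First I would recall the properties of the asymmetric common coin of Alpos~\etal~\cite{DBLP:journals/dc/AlposCTZ24}: across waves the coin outputs are independent and, within each wave, the leader is (to the adversary, before the coin is revealed) uniformly distributed over $\mathcal{P}$, and all processes in the maximal guild see the same output. Since every quorum has size at least $c(\BQ) = \min_{p_i \in \CP}\min_{Q \in \CQ_i}|Q|$, the common-core quorum $Q$ associated with wave $w$ has $|Q| \ge c(\BQ)$, so
\[
\P[\text{wave } w \text{ is successful}] \;\ge\; \frac{|Q|}{|\CP|} \;\ge\; \frac{c(\BQ)}{|\CP|}.
\]
Because the coin is revealed only after enough processes have progressed (mirroring the DAG-Rider argument that the adversary cannot bias which vertices are in the common core before the coin value is known), the adversary cannot correlate the choice of $Q$ with the coin output; so these events are independent across waves, each with success probability at least $c(\BQ)/|\CP|$.

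Then the number of waves until the first successful one is dominated by a geometric random variable with parameter $p \ge c(\BQ)/|\CP|$, whose expectation is $1/p \le |\CP|/c(\BQ)$. I would conclude that the expected number of waves until the commit rule is met is at most $\frac{|\CP|}{c(\BQ)}$.

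The main obstacle is the independence argument: one must be careful that the common-core quorum $Q$ guaranteed by Lemma~\ref{lem:adag-bound1} is determined by the message-scheduling (adversarial) behavior \emph{before} the coin for that wave is opened, so that the probability the coin lands in $Q$ is genuinely at least $|Q|/|\CP|$ and uncorrelated with past waves. This is exactly the role played by the ``delay revealing the coin'' mechanism inherited from DAG-Rider (and the extra \str{Confirm} rounds of Algorithm~\ref{alg:core}); I would state explicitly that in any execution prefix up to the point where wave $w$'s coin is queried, the set $Q$ is fixed, and invoke the coin's unpredictability and independence to get the bound. The rest is the routine geometric-expectation calculation.
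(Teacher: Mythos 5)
Your proposal is correct and follows essentially the same argument as the paper: lower-bound the per-wave success probability by the probability that the coin-chosen leader lies in the common-core quorum $Q$ of Lemma~\ref{lem:adag-bound1}, giving at least $c(\mathbb{Q})/|\mathcal{P}|$, and then bound the geometric waiting time by $|\mathcal{P}|/c(\mathbb{Q})$. Your explicit treatment of the coin's unpredictability and independence across waves is a point the paper's short proof leaves implicit, but it does not change the route.
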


\begin{proof}
    The probability that one process commits a leader during a wave is lower bounded by the probability that the leader is a part of the common core quorum $Q$ for that wave. Therefore we obtain that the probability of a leader being committed is $\geq P(\text{leader} \in Q) \geq \frac{|Q|}{|\mathcal{P}|}\geq \frac{c(\mathbb{Q})}{|\mathcal{P}|}$. 

    The number of waves until the commit rule is met is geometrically distributed with success probability $\frac{c(\mathbb{Q})}{|\mathcal{P}|}$. Thus, the expected number of waves is bounded by $\frac{|\mathcal{P}|}{c(\mathbb{Q})}$. 
\end{proof}

This shows that, in expectation, each member of the maximal guild will commit messages frequently, with the time gaps between commits being of constant duration. The duration of the time gaps depends on the quorum system of each participant. 

The next three lemmas will prove simple properties of Algorithm \ref{alg:adag} that will be useful to prove the validity and agreement  properties of the algorithm. 

\begin{lemma}
    In any execution with a guild, if a process $p_i$ in the maximal guild adds a vertex $v$ to its $\op{DAG}_i$, then eventually all processes in the maximal guild will add $v$ to their $\op{DAG}$
    \label{lem:rider-allverticesadded}
\end{lemma}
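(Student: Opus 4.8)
The plan is to prove the statement by induction on the round number $r := \op{round}(v)$, carried out jointly with the auxiliary claim that every process in the maximal guild eventually advances its local round counter to a value at least $r$. The base case $r=1$ and the inductive step have the same shape, so I describe the step: assume both claims for all rounds below $r$, and suppose $p_i$, a process in the maximal guild (hence wise), inserts a round-$r$ vertex $v$ into $\op{DAG}_i$ at line~\ref{adag:add-new-vertex}.

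First I would show that every maximal-guild process eventually buffers $v$. Executing line~\ref{adag:add-new-vertex} presupposes that $p_i$ earlier placed $v$ into its \var{buffer}, which requires $p_i$ to have delivered $(v,r,p_j)$ via $\op{arb-deliver}$, where $p_j = v.source$. By the totality property of the asymmetric reliable broadcast of Alpos~\etal~\cite{DBLP:journals/dc/AlposCTZ24}, since a wise process delivered $(v,r,p_j)$, every process in the maximal guild eventually delivers it as well. The guard admitting $v$ to the buffer (line~\ref{adag:q-previous-round}) tests only $v.strongEdges$ against the fixed quorum system $\mathbb{Q}$, so it is satisfied at every process precisely when it is satisfied at $p_i$; hence every maximal-guild process eventually has $v$ in its buffer.

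Next I would show that the buffer-to-DAG guard at line~\ref{adag:allreferenced} is eventually satisfied everywhere in the maximal guild. That guard demands that every $v' \in v.strongEdges \cup v.weakEdges$ with $\op{round}(v') \ge 1$ already lie in the process's DAG (round-$0$ strong edges are the hard-coded genesis vertices and fall outside the test). Because $p_i$ itself added $v$, all such $v'$ were in $\op{DAG}_i$ at that instant, i.e.\ $p_i$ had already added each of them, and each has round $< r$. The induction hypothesis then gives that every maximal-guild process eventually adds each such $v'$. Together with the auxiliary claim that every maximal-guild process reaches round counter $\ge r$, so that the loop at line~\ref{line:adag-while} actually inspects $v$ (here $v.round \le r$), it follows that every maximal-guild process eventually executes line~\ref{adag:add-new-vertex} on $v$.

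The main obstacle is the auxiliary round-progression claim, and within it the round-$2$ case. Moving from round $r-1$ to round $r$ requires a quorum $Q \in \CQ_i$ with $Q \models \op{DAG}_i[r-1]$ (line~\ref{line:receive_quorum}). By the hypothesis for round $r-1$, all maximal-guild processes reach round $r-1$ and reliably broadcast their round-$(r-1)$ vertices; the predecessors of those vertices lie below round $r-1$ and are, by the hypothesis for smaller rounds, eventually shared, so these round-$(r-1)$ vertices are eventually added by all maximal-guild processes, and since the maximal guild contains a quorum for each of its members the guard at line~\ref{line:receive_quorum} eventually fires at each of them. When $r-1 \equiv 2 \pmod{4}$ there is the additional requirement that \var{tReady} be set, which is driven by the $\str{Ack}$/$\str{Ready}$/$\str{Confirm}$ cascade in Algorithm~\ref{alg:adag2}; that this cascade completes at every maximal-guild process is exactly the argument behind Lemmas~\ref{lem:wise_sendt}--\ref{lem:distribute_t} for Algorithm~\ref{alg:core}, transplanted to the DAG, so I would invoke it rather than reprove it. With the joint induction in place, the remaining work is routine bookkeeping over \var{buffer} and the $\op{DAG}$ arrays.
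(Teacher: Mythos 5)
Your proof is correct and follows the same skeleton as the paper's: induction on the round of $v$, totality of the asymmetric reliable broadcast to get delivery at every maximal-guild process, the observation that the admission checks at lines~\ref{adag:q-previous-round} and~\ref{adag:v.source}--\ref{adag:v.round} are deterministic local tests that succeed everywhere if they succeed at $p_i$, and the induction hypothesis to cover the referenced vertices required by line~\ref{adag:allreferenced}. Where you genuinely diverge is in strengthening the induction to a joint statement that also asserts round progression: that every maximal-guild process eventually advances its counter to at least $\op{round}(v)$, so that the buffered vertex is actually inspected by the loop at line~\ref{line:adag-while} and the guard at line~\ref{line:receive_quorum} (including the $\var{tReady}$ condition driven by the $\str{Ack}$/$\str{Ready}$/$\str{Confirm}$ cascade, which you correctly discharge by transplanting the arguments of Lemmas~\ref{lem:wise_sendt}--\ref{lem:distribute_t}) eventually fires. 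The paper's proof never addresses this: it argues only that the four admission conditions eventually hold, tacitly assuming that each process's local round counter grows without bound, a liveness fact that in the paper is only implicit in the wave/gather analysis of Lemma~\ref{lem:adag-bound1}. So your decomposition buys a self-contained argument that closes this gap, at the cost of a heavier induction; the paper's version is shorter but leans on unproved round-advancement. Note also that your choice of base case $r=1$ (treating $\var{DAG}[0]$ as hard-coded genesis outside the test at line~\ref{adag:allreferenced}) is the more faithful reading of the pseudocode than the paper's base case at round $0$.
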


\begin{proof}
    We will prove this by induction on the number of rounds. The base case is round 0. If a process $p_i$ in the maximal guild delivers a vertex and adds it to its $DAG_i$ (line \ref{adag:add-new-vertex}) this means that it passed all the conditions to be added, namely
    \begin{enumerate}
        \item It was delivered by the reliable broadcast (line \ref{adag:deliver-vertex})
        \item It contains references to a quorum $Q$ on a previous round (line \ref{adag:q-previous-round})
        \item It contains the correct values in $v.source$ and $v.round$ (lines \ref{adag:v.source} and \ref{adag:v.round})
        \item All the vertices referenced by $v$ must be in $DAG_i$ (line \ref{adag:allreferenced})
    \end{enumerate}

    From condition number 1 and the totality property of the asymmetric reliable broadcast we know that all other processes in the maximal guild will also deliver $v$. By conditions number 2 and 3 we know that these checks will also pass in the other processes in the maximal guild since all of them are local checks. Condition number 4 is satisfied since we are currently in round 0 and therefore $v$ cannot reference vertices in previous rounds. 

    We will assume that this holds for all vertices delivered in any round $r'<r$. We will then show that it also applies in round $r$.

    Let us suppose a process in the maximal guild $p_i$ has delivered and added to its $\op{DAG}_i$ a vertex $v$. We show that all other processes in the maximal guild will eventually add $v$ to their DAG. By the totality condition of the reliable broadcast, we obtain that all other members of the maximal guild will also eventually deliver it. We also know that conditions number 2 and 3 will be satisfied since these are local checks done at every process. Finally, we must show that all other processes in the maximal guild contain all the vertices referenced by $v$. We know that these vertices belong to rounds previous to $r$ and that they are contained in the DAG of process $p_i$. By the induction hypothesis, they will also be added to the DAG of all other processes in the maximal guild and therefore all of these processes will contain all vertices referenced by $v$. This proves that $v$ will also be added to the DAG of all other processes in the maximal guild. 
\end{proof}

\begin{lemma}
    In any execution with a guild, every vertex that is broadcast by a correct process is eventually added to the DAG of all processes in the maximal guild
    \label{lem:rider-broadcastedadded}
\end{lemma}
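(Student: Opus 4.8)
The plan is to prove this by induction on the round number $r$ of the vertex that the correct process broadcasts, closely following the structure of the analogous lemma for DAG-Rider but using the quorum-reference checks of Algorithms~\ref{alg:adag}--\ref{alg:adag3} in place of the threshold counting, and leaning on Lemma~\ref{lem:rider-allverticesadded} and on the validity and totality of the asymmetric reliable broadcast.

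First I would dispatch the base case $r=1$. A correct process $p$ calls \op{arb-broadcast} on its round-$1$ vertex $v$, so by the validity of the asymmetric reliable broadcast every process in the maximal guild eventually \op{arb-deliver}s $v$ (line~\ref{adag:deliver-vertex}). I then check the guards that control insertion into $\op{DAG}_p$: the fields $v.source$ and $v.round$ are set locally from the delivery event and are therefore correct (lines~\ref{adag:v.source}--\ref{adag:v.round}); the strong edges of $v$ equal the hardcoded genesis level $\op{DAG}_p[0]$, which is a quorum of $\CQ_p$, so the quorum-reference test at line~\ref{adag:q-previous-round} passes with $p$ as the source; and the guard at line~\ref{adag:allreferenced} only requires referenced vertices in rounds $k\ge 1$ to be present, while a round-$1$ vertex has strong edges only to round $0$ and no weak edges (\op{setWeakEdges} produces nothing when $\var{round}=1$), so it is vacuously satisfied. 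Hence every process in the maximal guild eventually adds $v$.

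For the inductive step I would assume the statement for all rounds below $r$ and additionally invoke Lemma~\ref{lem:rider-allverticesadded}. As in the base case, validity of the asymmetric reliable broadcast gives delivery of $v$ throughout the maximal guild, and the field checks and the quorum-reference check at line~\ref{adag:q-previous-round} pass because $p$ created $v$ only after $\op{DAG}_p[r-1]$ contained a quorum of $\CQ_p$ (line~\ref{line:receive_quorum}), and $v.strongEdges=\op{DAG}_p[r-1]$. The remaining work is to show the guard at line~\ref{adag:allreferenced} is eventually met at every maximal-guild process, i.e.\ that each $w\in v.strongEdges\cup v.weakEdges$ with $w.round\ge 1$ eventually reaches their DAGs. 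Each such $w$ is in $\op{DAG}_p$ at a round $\le r-1$, so $p$ \op{arb-deliver}ed it, and by the totality of the asymmetric reliable broadcast all maximal-guild processes eventually \op{arb-deliver} $w$ as well. If $w$ was broadcast by a correct process, the induction hypothesis at round $w.round<r$ finishes it; if $w$ was broadcast by a faulty process, I would recurse on $w$ (whose own references lie in strictly smaller rounds), using Lemma~\ref{lem:rider-allverticesadded} to propagate $w$ to the whole maximal guild once any one of its members has added it, with the recursion terminating at round $1$ exactly as in the base case. Once all references of $v$ in rounds $\ge 1$ are present everywhere in the maximal guild, the guard at line~\ref{adag:allreferenced} is satisfied and $v$ is added to all of their DAGs.

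The main obstacle I expect is precisely this handling of references of $v$ that originate at faulty senders: there is no uniform threshold to exploit, so I must argue that the totality of the asymmetric reliable broadcast (delivery to the entire maximal guild once one correctly-behaving member delivers) together with Lemma~\ref{lem:rider-allverticesadded} suffices to push such a reference, and transitively its own references, into every maximal-guild DAG. A secondary point to check carefully is the round-$2$ branch, where the creation of the next vertex is additionally gated by the $tReady$ flag and the \str{Ack}/\str{Ready}/\str{Confirm} machinery; this adds no new reference obligations (we still have $v.strongEdges=\op{DAG}_p[r-1]$ containing a quorum of $\CQ_p$), but I would verify that the gating never prevents a correct $p$ from eventually broadcasting its round-$2$ vertex, which follows from the same reasoning as Lemma~\ref{lem:distribute_t}.
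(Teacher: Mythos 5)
Your route is genuinely different from the paper's, and noticeably more detailed. The paper's proof is two sentences: by validity of the asymmetric reliable broadcast, $v$ is eventually delivered at \emph{some} maximal-guild process $p_j$, which adds it to $\op{DAG}_j$ ``since it is a valid vertex,'' and then Lemma~\ref{lem:rider-allverticesadded} propagates $v$ to every other member of the maximal guild. You instead run a fresh induction on the round of $v$, use validity to deliver $v$ at \emph{every} maximal-guild process, and then discharge the buffer condition of line~\ref{adag:allreferenced} yourself, re-deriving much of what Lemma~\ref{lem:rider-allverticesadded} already supplies and invoking that lemma only as an auxiliary tool for references with faulty senders. To your credit, this makes explicit the real crux of the statement---that the vertices referenced by $v$ must themselves reach the other DAGs---which the paper's phrase ``added to its own $\op{DAG}_j$ since it is a valid vertex'' passes over silently.

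However, the step you yourself flag as the main obstacle does not go through as written. For a reference $w$ whose sender is faulty you argue: the broadcaster $p$ of $v$ has \op{arb-deliver}ed $w$, hence by totality every maximal-guild process eventually \op{arb-deliver}s $w$. But $p$ is only assumed \emph{correct}; it may be naive, and the totality property of asymmetric reliable broadcast~\cite{DBLP:journals/dc/AlposCTZ24} is triggered by delivery at a \emph{wise} process, not at an arbitrary correct one---a naive process can deliver on the strength of a quorum lying entirely inside the actual faulty set, in which case no wise process need ever deliver $w$, so neither your recursion nor Lemma~\ref{lem:rider-allverticesadded} can be started from $p$'s delivery. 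Your parenthetical gloss of totality as ``delivery to the entire maximal guild once one correctly-behaving member delivers'' is therefore too strong unless ``member'' means member of the maximal guild, which $p$ need not be. The paper's proof sidesteps this exposure by anchoring everything at a maximal-guild process $p_j$ and handing propagation to Lemma~\ref{lem:rider-allverticesadded}, where totality does apply (though the paper is itself terse about why the buffer condition at $p_j$ is eventually satisfied). To close your argument you would need either to follow that structure---first establish insertion at one maximal-guild member and only then propagate---or to restrict the totality step for faulty-sender references to the case of a wise broadcaster and treat the naive-broadcaster case separately.
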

\begin{proof}
    When a correct process $p_i$ broadcasts a vertex $v$, it broadcasts a valid vertex. By the validity property of the asymmetric reliable broadcast, $v$ will be eventually delivered to some process in the maximal guild $p_j$ and added to its own $DAG_j$ since it is a valid vertex. By Lemma \ref{lem:rider-allverticesadded} this implies that all other processes in the maximal guild will add $v$ to their DAGs. 
\end{proof}

\begin{lemma}
    If a process in the maximal guild $p_i$ adds a vertex $v$ to its DAG then all of its causal history is already in its DAG
    \label{lem:rider-causalhistory}
\end{lemma}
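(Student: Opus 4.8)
The plan is to prove the claim by induction on the round number of $v$, with the guard on line~\ref{adag:allreferenced} doing the real work. First I would fix the relevant notion: the \emph{causal history} of a vertex $v$ (relative to $p_i$'s local DAG) is the set of all vertices reachable from $v$ by following strong and weak edges. Every edge out of $v$ leads to a vertex of a strictly smaller round --- strong edges to round $\var{round}-1$ and weak edges to rounds $1,\dots,\var{round}-2$, by \op{createNewVertex} and \op{setWeakEdges} --- so the causal history is finite and layered by round, which is exactly what makes an induction on rounds well-founded.

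Next I would record the monotonicity invariant: apart from the hardcoded round-$0$ vertices placed in $\op{DAG}_i[0]$ at initialization, the only line that inserts a vertex into $\op{DAG}_i$ is line~\ref{adag:add-new-vertex}, which is reached only after the test on line~\ref{adag:allreferenced} succeeds, and no line ever removes a vertex from $\op{DAG}_i$. Consequently, at the instant $v$ is inserted into $\op{DAG}_i$, every vertex referenced by $v$ (through $v.\var{strongEdges}$ or $v.\var{weakEdges}$) is already present in $\op{DAG}_i$, and it remains present afterwards.

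For the induction itself: in the base case $v$ is a round-$1$ vertex, whose only references are the hardcoded round-$0$ vertices sitting in $\op{DAG}_i[0]$; hence its causal history --- $\{v\}$ together with those round-$0$ vertices --- is contained in $\op{DAG}_i$. For the step, assume the statement holds for every vertex of round $<r$ that $p_i$ has added, and let $v$ be a round-$r$ vertex that $p_i$ adds. By the monotonicity invariant, each direct predecessor $v'$ of $v$ already lies in $\op{DAG}_i$, and each such $v'$ has round $<r$. By the induction hypothesis, the causal history of $v'$ was already in $\op{DAG}_i$ when $v'$ was added, and hence, by monotonicity, is still there when $v$ is added. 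Since the causal history of $v$ is the union of $\{v'\}$ and the causal history of $v'$ over all direct predecessors $v'$ of $v$, it is contained in $\op{DAG}_i$, which completes the induction.

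The part I expect to require the most care is bookkeeping rather than mathematics: ensuring the argument treats weak edges on the same footing as strong ones, verifying that weak edges can only point to strictly earlier rounds so the recursion actually terminates, and handling round $0$ consistently with the $\bigcup_{k\geq 1}$ quantifier appearing in line~\ref{adag:allreferenced}. The structural guard supplies the one-step property, and everything beyond that is a routine closure-under-reachability induction.
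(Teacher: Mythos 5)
Your proposal is correct and follows essentially the same route as the paper's proof: induction on the round of $v$, with the guard on line~\ref{adag:allreferenced} supplying the one-step property and the induction hypothesis closing the causal history of each referenced vertex. You are somewhat more careful than the paper — making the monotonicity of $\op{DAG}_i$ explicit and anchoring the base case at round $1$ against the hardcoded round-$0$ vertices — but these are refinements of the same argument, not a different one.
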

\begin{proof}
    We will prove this by induction on the number of rounds. For the base case $r=0$ we know that a vertex $v$ is added to $DAG_i$ only if all of the vertices it references are already in $DAG_i$ (line~\ref{adag:allreferenced}). Since we are in the first round this means that $v$ cannot reference vertices in previous rounds and therefore it holds that all of its causal history is already in $DAG_i$ and therefore $v$ is added to the DAG (line \ref{adag:add-new-vertex}).

    We assume this holds for all rounds $r'<r$. We now prove that it will also be true in round $r$. Lets suppose process $p_i$ delivers a vertex $v$ and adds it to its $DAG_i$. Line~\ref{adag:allreferenced} guarantees that $v$ is added only if the vertices towards which it has links are already in $DAG_i$. By the induction hypothesis, we know that all these vertices already have all their causal history in $DAG_i$. Therefore all of $v$'s causal history is also in $DAG_i$.
\end{proof}

The following lemma proves that Algorithm \ref{alg:adag} satisfies the first property of Byzantine atomic broadcast: total order. 
\begin{lemma}
    Algorithm \ref{alg:adag} satisfies the total order property of the asymmetric Byzantine atomic broadcast problem.
    \label{lem:adag-to}
\end{lemma}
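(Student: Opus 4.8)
The plan is to show that any two processes $p,p'$ in the maximal guild $\CG_{max}$ produce the same sequence of \op{aa-deliver} outputs (at any finite time the output of one is a prefix of that of the other, and both converge to a common sequence), which is exactly the total-order property of Definition~\ref{def:core}; throughout we work, as the definition requires, in an execution with a non-empty guild. Since blocks are emitted only inside \op{orderVertices}, which pops a stack of \emph{decided} leader vertices and, for each, delivers the not-yet-delivered vertices reachable from it, it suffices to establish two facts: (i) every process in $\CG_{max}$ decides the same sequence $v_1, v_2, \dots$ of leader vertices, ordered by wave; and (ii) for each $k$, the set of vertices delivered on account of $v_k$, together with their deterministic internal order, is identical at $p$ and at $p'$.

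For (i), I would first invoke the agreement property of the asymmetric common coin of Alpos~\etal~\cite{DBLP:journals/dc/AlposCTZ24}: every process in $\CG_{max}$ that reaches \op{waveReady} for a wave $w$ computes the same $p_j = \op{chooseLeader}(w)$, and hence considers the same round-$1$ vertex of $p_j$ (if present) as the candidate leader of wave $w$. Next I would argue, by induction on the committed leaders in wave order, that the commit decision is eventually consistent. Suppose $p_i \in \CG_{max}$ is the first process to commit the leader $v_k$ of a wave $w_k$. By Lemma~\ref{lem:adag-leaderpath}, every later leader vertex $u$ of a wave $w > w_k$ has a strong path to $v_k$, and by Lemma~\ref{lem:rider-causalhistory} the entire causal history of $u$ — in particular $v_k$ — lies in a process's DAG as soon as $u$ does. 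By liveness (Lemma~\ref{lem:rider-expected}, combined with Lemma~\ref{lem:rider-allverticesadded} to propagate vertices within $\CG_{max}$), every $p_j \in \CG_{max}$ eventually commits a leader of a wave exceeding $w_k$; at that point the back-scan in \op{waveReady} running from \var{decidedWave}$+1$ reaches wave $w_k$, finds $v_k \in DAG_j$ with a strong path, and pushes it onto the stack. Since \var{decidedWave} is monotone and updated only after a successful commit, and the stack is popped in increasing wave order, every process in $\CG_{max}$ decides exactly $v_1, v_2, \dots$, in that order.

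For (ii), I would fix a decided leader $v_k$ and compare the iteration of \op{orderVertices} that handles it at $p$ and at $p'$. The set $\{v' : \op{path}(v_k, v')\}$ is precisely the causal history of $v_k$; because the strong and weak edges of every vertex are frozen once the vertex is reliably broadcast, this set depends only on $v_k$ and is therefore the same at $p$ and $p'$, and by Lemma~\ref{lem:rider-causalhistory} it is already entirely contained in the DAG at the moment $v_k$ is committed — so any further vertices that $p$ or $p'$ happens to hold are unreachable from $v_k$ and irrelevant. It remains to check that the subtracted set \var{deliveredVertices} also agrees at the two processes: this follows by induction on $k$, since when $v_k$ is processed \var{deliveredVertices} equals the union of the causal histories of $v_1, \dots, v_{k-1}$, which by the inductive hypothesis and the previous sentence is the same at $p$ and $p'$. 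Hence \var{verticesToDeliver} for $v_k$ is one and the same set at both processes, and since it is output ``in some deterministic order'', the emitted subsequence of blocks coincides. Concatenating over $k = 1, 2, \dots$ gives that $p$ and $p'$ deliver the same blocks in the same order; in particular, if some process in $\CG_{max}$ delivers $m$ before $m'$, then every process in $\CG_{max}$ that delivers both does so in the same order.

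I expect the main obstacle to be part (ii). Processes commit the same leaders, but at different times and with genuinely different local DAGs, so one must rule out that a fast process ever orders a vertex that a slower process in $\CG_{max}$ has not yet seen, or conversely that the slower process later ``catches up'' with extra vertices that reorder the delivery. The resolution is Lemma~\ref{lem:rider-causalhistory}: what is delivered for a leader is its \emph{fixed} causal history, which is fully materialised at commit time, so the emitted set never depends on how far ahead a given process happens to be. A secondary point requiring care is the bookkeeping of \var{deliveredVertices} and the fact that the \op{waveReady} back-scan never skips over a wave already decided by a correct process — this is where Lemma~\ref{lem:adag-leaderpath} is used a second time, since strong paths from all future leaders keep every earlier committed leader inside every subsequent causal history.
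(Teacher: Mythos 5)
Your proof is correct and follows essentially the same route as the paper's: identical leaders per wave via the coin's matching property and reliable-broadcast agreement, commitment in increasing wave order with Lemma~\ref{lem:adag-leaderpath} ensuring no committed leader is skipped by the back-scan, and identical deterministic delivery of each leader's causal history via Lemma~\ref{lem:rider-causalhistory}. Your explicit handling of the \var{deliveredVertices} bookkeeping and the back-scan induction is more detailed than the paper's terse argument, but the substance is the same.
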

\begin{proof}
    We start by proving that for all processes in the maximal guild each wave $w$ can have at most one leader $v$. For $v$ to be the wave leader it must be returned by the common coin. The matching property of the asymmetric common coin guarantees that all processes in the maximal guild will obtain the same coin value and by the agreement property of the asymmetric reliable broadcast. We also know that vertices are committed in an increasing wave number (line \ref{adag:committed-in-order}). If a process in the maximal guild commits some vertex $v$, Lemma \ref{lem:adag-leaderpath} guarantees that there will be paths towards it from any vertex $u$ in future waves that might be committed. 

    By combining these claims we know that if two processes in the maximal guild commit the same wave leader, they do so in the same order. 

    After committing a wave vertex leader $v$ a process $p_i$ atomically delivers all of $v$'s causal history in some deterministic order that is identical for all processes (line \ref{adag:order-vertices}). We also know that a vertex is added only after all of its causal history is already in the DAG as per Lemma \ref{lem:rider-causalhistory}. Since all processes in the maximal guild commit the same leader, in the same order, and with the same causal histories we can conclude that all processes in the maximal guild deliver the blocks in the same order. 
\end{proof}

In the next two lemmas we prove that Algorithm \ref{alg:adag} satisfies the missing properties to obtain an asymmetric Byzantine atomic broadcast.

\begin{lemma}
    Algorithm \ref{alg:adag}  satisfies the agreement property of the asymmetric Byzantine atomic broadcast problem
    \label{lem:adag-agreement}
\end{lemma}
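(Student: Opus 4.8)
The plan is to reduce agreement to the commit-history structure of Algorithm~\ref{alg:adag}, combining the probabilistic commit guarantee of Lemma~\ref{lem:rider-expected} with the future-leader reachability of Lemma~\ref{lem:adag-leaderpath}. Suppose a process $p_i$ in the maximal guild delivers a block $b$. Inspecting \op{waveReady} and \op{orderVertices}, this can only happen inside a call to \op{orderVertices} triggered by a commit round in which $p_i$ committed, via the commit rule of line~\ref{line:commitrule}, the leader vertex $v_w$ of some wave $w$, with $b=\hat v.\var{block}$ for a vertex $\hat v$ satisfying $\op{path}(v_w,\hat v)$ (if $\hat v$ was reached from a leader pushed during walk-back, take $v_w$ to be instead the top-committed leader of that round, whose strong path composes with the remaining path to $\hat v$). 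Since $\hat v\in\op{DAG}_i$, Lemma~\ref{lem:rider-causalhistory} shows the entire causal history of $v_w$, in particular $\hat v$, lies in $\op{DAG}_i$. Hence it suffices to prove that every other process $p_j$ in the maximal guild eventually pushes $v_w$ onto its own \var{leadersStack}: once $p_j$ pops $v_w$ in \op{orderVertices}, it delivers every vertex $v'$ with $\op{path}(v_w,v')$ that is in $\op{DAG}_j$ and not yet in \var{deliveredVertices}, and by Lemma~\ref{lem:rider-causalhistory} the causal history of $v_w$ is contained in $\op{DAG}_j$, so $\hat v$ is delivered then (or was delivered earlier) and $p_j$ delivers $b$.

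I would first record two liveness facts. (i) Every process in the maximal guild completes every wave: arguing by induction on the round counter and using that each wave mirrors Algorithm~\ref{alg:core} (Lemma~\ref{lem:adag-bound1}), all maximal-guild processes broadcast their round-$r$ vertex and emit the \str{Confirm} messages needed to set \var{tReady} exactly as in Lemmas~\ref{lem:wise_sendt} and~\ref{lem:distribute_t}; asymmetric reliable broadcast delivers these vertices to all maximal-guild processes; and since the guild contains a quorum for each of its members, every such process eventually sees a quorum of round-$r$ vertices (and a quorum of \str{Confirm} messages when $r\bmod 4=2$) and advances. (ii) With probability $1$, $p_j$ commits the leader of some wave $w''>w$: in every wave it completes, Lemma~\ref{lem:adag-bound1} gives a common-core quorum $Q$ for which the commit rule at $p_j$ holds whenever the coin selects a leader in $Q$, an event of probability at least $|Q|/|\mathcal{P}|\ge c(\mathbb{Q})/|\mathcal{P}|>0$ by the matching and unpredictability of the asymmetric common coin, independent across waves; so, exactly as in Lemma~\ref{lem:rider-expected}, it occurs in some wave after $w$ almost surely.

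It remains to run the walk-back argument. If $p_j$ ever commits wave $w$ directly then it pushes $v_w$ by the coin's matching property and we are done, so assume not. Since $\var{decidedWave}$ at $p_j$ is non-decreasing and starts at $0$, consider the first commit by $p_j$ of a wave $w''>w$ (it exists almost surely by (ii)); at that moment $\var{decidedWave}_j=d$, and $d<w$ because this is the first commit past $w$ and $p_j$ never committed wave $w$, so the loop in \op{waveReady} iterating $w'=w''-1,\dots,d+1$ reaches $w'=w$. When it does, the running anchor $v$ is the leader of the smallest wave $w_k$ with $w<w_k\le w''$ pushed so far (or $v_{w''}$ itself), $v\in\op{DAG}_j$ (it was pushed, hence returned by \op{getWaveVertexLeader}), and — since $p_i$ committed $v_w$ via the commit rule and $w_k>w$ — Lemma~\ref{lem:adag-leaderpath} gives $\op{strongPath}(v,v_w)$; thus $v_w$ lies in the causal history of $v$, so $v_w\in\op{DAG}_j$ at round $\op{round}(w,1)$ by Lemma~\ref{lem:rider-causalhistory}, and by matching $\op{getWaveVertexLeader}(w)$ returns exactly $v_w$. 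Hence the loop condition fires at $w'=w$ and $v_w$ is pushed, which by the first paragraph yields that $p_j$ delivers $b$. The main obstacle is this step, which must simultaneously guarantee (a) that the commit loop's wave range covers $w$ — handled by monotonicity of $\var{decidedWave}$ and taking the first commit past $w$; (b) that the running anchor always retains a strong path to $v_w$ so the push fires — a repeated application of Lemma~\ref{lem:adag-leaderpath}, using that strong paths compose; and (c) that $v_w$ and $\hat v$ are genuinely present in $\op{DAG}_j$ at that time — supplied by the causal-history containment of Lemma~\ref{lem:rider-causalhistory}; the probabilistic ingredient (ii) is then routine given Lemma~\ref{lem:rider-expected}.
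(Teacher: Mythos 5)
Your proof is correct and follows essentially the same route as the paper's: a maximal-guild process that delivers $b$ does so from the causal history of a commit-rule-committed leader, every other guild member eventually commits a later wave (the probabilistic argument behind Lemma~\ref{lem:rider-expected}), and Lemma~\ref{lem:adag-leaderpath} plus causal-history containment (Lemma~\ref{lem:rider-causalhistory}) force that later leader's walk-back to reach and deliver $b$. You simply spell out the details the paper leaves implicit (wave-completion liveness, the $\var{decidedWave}$ bookkeeping, and anchoring on the commit-rule leader so Lemma~\ref{lem:adag-leaderpath} applies), which is a faithful elaboration rather than a different argument.
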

\begin{proof}
    If a process in the maximum guild outputs $a\_deliver(b, r, p_k)$ (line \ref{adag:deliver-block}) it means that $b$ is a block in some vertex $u$. This vertex $u$ was in the causal history of the leader $v$ of some wave $w$. According to Lemma \ref{lem:rider-expected}, every process in the maximal guild that has not yet committed vertex $v$ will eventually reach a wave $w'>w$ in which the commit rule is satisfied. By Lemma \ref{lem:adag-leaderpath} we know that the leader of wave $w'$ will have a path towards $v$ and therefore will process its uncommitted causal history, including block $b$. 
\end{proof}

\begin{lemma}
    Algorithm \ref{alg:adag} satisfies the validity property of the asymmetric Byzantine agreement problem
    \label{lem:adag-validity}
\end{lemma}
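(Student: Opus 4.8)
The plan is to reduce the statement to a single-process claim and then invoke agreement. By Definition~\ref{def:core}, validity asks that in every execution with a non-empty guild, whenever a correct process $\op{aa-broadcast}$s a block $b$, every process in $\CG_{max}$ eventually $\op{aa-deliver}$s $b$ with probability $1$. By Lemma~\ref{lem:adag-agreement} it suffices to show that \emph{some} process in $\CG_{max}$ eventually $\op{aa-deliver}$s $b$, since agreement then carries the delivery to the whole maximal guild. The argument has three steps: (i) the broadcaster eventually packs $b$ into a vertex $v^{*}$ that it $\op{arb-broadcast}$s; (ii) $v^{*}$ eventually lies in the DAG of every process in $\CG_{max}$, after which every new vertex created by a maximal-guild process reaches $v^{*}$; and (iii) with probability $1$ some \emph{committed} wave has a leader vertex that reaches $v^{*}$, so that $v^{*}.\var{block}=b$ is output during $\op{orderVertices}$.

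Steps (i)--(ii) are the routine part. A correct process $p_i$ that $\op{aa-broadcast}$s $b$ enqueues $b$ in $\var{blocksToPropose}$. Since each wave of Algorithm~\ref{alg:adag} reproduces the round structure of the asymmetric gather Algorithm~\ref{alg:core} (as made explicit in the proof of Lemma~\ref{lem:adag-bound1}), and that structure is live for the maximal guild by Lemmas~\ref{lem:wise_sendt}--\ref{lem:distribute_t}, every member of $\CG_{max}$ completes every round; in particular $p_i$ keeps invoking $\op{createNewVertex}$ and eventually dequeues $b$ into a vertex $v^{*}$ of some round $r^{*}$. At that point $\var{DAG}[r^{*}-1]$ already covers a quorum of $p_i$ (the round-change rule), so $v^{*}$ is a valid vertex and $p_i$ $\op{arb-broadcast}$s it. By Lemma~\ref{lem:rider-broadcastedadded}, $v^{*}$ is eventually added to the DAG of every process in $\CG_{max}$; fix a round $R$ such that every member of $\CG_{max}$ already has $v^{*}$ in its DAG before it creates its round-$R$ vertex. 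Then for every round $r>\max(R,\,r^{*}+2)$, any vertex $w$ that a maximal-guild process creates in round $r$ has a (strong or weak) path to $v^{*}$, because $\op{setWeakEdges}$ inserts a weak edge from $w$ to every not-yet-reachable vertex of rounds $\le r-2$ present in the creator's DAG, and $v^{*}$ is such a vertex.

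Step (iii) is the heart of the matter. By Lemma~\ref{lem:adag-bound1}, in every wave $w$ there is a common-core quorum $Q_w\in\CQ_m$ of some $p_m\in\CG_{max}$ such that, whenever the wave leader happens to be a process of $Q_w$, the commit rule holds and the wave is committed by a process $p'\in\CG_{max}$ that also has the leader vertex in its DAG. Moreover $Q_w$ meets $\CG_{max}$: by closure $p_m$ has a quorum $Q_m\subseteq\CG_{max}$, and by quorum consistency (Definition~\ref{def:abqs}) applied with $i=j=m$ and $F_{mm}=F\in\CF_m^{*}$ (valid since $p_m$ is wise) we get $Q_w\cap Q_m\not\subseteq F$, so $Q_w\cap Q_m$ contains a correct process, which lies in $Q_m\subseteq\CG_{max}$. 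As the wave leader is drawn by a fresh common coin, in each wave the leader lands in $Q_w\cap\CG_{max}$ with probability at least $1/|\mathcal{P}|>0$, independently of the past; by Borel--Cantelli, with probability $1$ infinitely many waves have this property. Fix one such wave $w^{*}$ whose first round is larger than $\max(R,\,r^{*}+2)$. Its leader $u$ is a vertex produced by a maximal-guild process, created after $v^{*}$ was already in that producer's DAG, so $\op{path}(u,v^{*})$ holds by step (ii); and $u$ is committed by some $p'\in\CG_{max}$, which has $u$ in its DAG and, by Lemma~\ref{lem:rider-causalhistory}, also has all of $u$'s causal history in its DAG, in particular $v^{*}$. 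The $\op{orderVertices}$ pass triggered by committing $u$ then outputs $\op{aa-deliver}$ for every vertex reachable from $u$ not already delivered, so $p'$ outputs $\op{aa-deliver}(v^{*}.\var{block})=\op{aa-deliver}(b)$ (and if $p'$ had already delivered $v^{*}$ in an earlier wave we are done anyway). Finally, Lemma~\ref{lem:adag-agreement} propagates this delivery to every process in $\CG_{max}$.

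I expect step (iii) to be the main obstacle. The difficulty is to marry the probabilistic commit guarantee (Lemmas~\ref{lem:rider-expected} and~\ref{lem:adag-bound1}) with the \emph{structural} requirement that the committed wave's leader be a member of $\CG_{max}$, so that the leader vertex actually references $v^{*}$ through a weak edge; the key sub-claim is that the common-core quorum of Lemma~\ref{lem:adag-bound1} always meets the maximal guild, which I plan to get from quorum consistency against the wise process $p_m$ together with closure of the guild. A minor additional point is the progress of the broadcaster in step (i); for a member of $\CG_{max}$ this is precisely the per-wave liveness that Algorithm~\ref{alg:adag} inherits from Algorithm~\ref{alg:core}.
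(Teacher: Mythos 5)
Your proof is correct and follows essentially the same route as the paper's: Lemma~\ref{lem:rider-broadcastedadded} places the broadcaster's vertex in the DAG of every maximal-guild process, the strong and weak edges then make every later vertex created by a guild member reach it, and a future committed wave leader therefore causes it to be delivered. You additionally spell out what the paper compresses into the single sentence ``a future wave leader will have a path towards $v$'' --- namely that, with probability 1, some sufficiently late committed wave has its leader in the common-core quorum intersected with the maximal guild (so the leader vertex genuinely references $v^{*}$), after which Lemma~\ref{lem:adag-agreement} spreads the delivery --- which is a sound, and in fact more careful, elaboration of the same argument.
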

\begin{proof}
    Lemma \ref{lem:rider-broadcastedadded} guarantees that any vertex $v$ broadcasted by a correct process will be added to the DAG of all processes in the maximal guild. When a vertex is broadcasted the algorithm (procedure \op{set\_weak\_edges} and line \ref{adag:set-strong-edges}) guarantees  that it has a path to all vertices in previous rounds. Therefore a future wave leader will have a path towards $v$, which guarantees that it will be delivered. 
\end{proof}

Based on Lemmas \ref{lem:adag-agreement}, \ref{lem:adag-to} and \ref{lem:adag-validity} we can conclude that Algorithm \ref{alg:adag} meets the criteria for an asymmetric Byzantine atomic broadcast algorithm. Thus, we have proven that it solves the asymmetric Byzantine atomic broadcast problem.

\subsection{Relation to Other DAG-Based Protocols}

Algorithm~\ref{alg:adag} was developed as an asymmetric translation of the DAG-Rider protocol~\cite{DBLP:conf/podc/KeidarKNS21}. In this sense, we take its three main components (Reliable Broadcast, Common Coin, and Gather) and replace them with asymmetric versions. The asymmetric gather is introduced in this work after showing that existing approaches are unsuitable for the asymmetric setting. We show in Section~\ref{sec:aGather} the impossibility of maintaining the elegant 4-round structure of the symmetric gather used by DAG-Rider. Instead, we introduce a modified asymmetric gather protocol that overcomes the challenges posed by the usage of asymmetric quorums. This is of direct relevance for obtaining asymmetric versions of other DAG-based protocols that also use a common core. 

Many DAG-based consensus algorithms~\cite{DBLP:conf/podc/KeidarKNS21, DBLP:conf/eurosys/DanezisKSS22, DBLP:conf/ccs/SpiegelmanGSK22, DBLP:conf/wdag/KeidarNPS23} use commit rules that rely on the existence of a common core. In this sense, our work addresses the crucial problem of obtaining the common core in a constant number of rounds in the asymmetric setting. This is of importance to maintain a low latency. We also show how, due to the properties of asymmetric quorums, maintaining the elegant symmetric approach (like in Algorithm~\ref{alg:asymgather-try}) would prevent us from having a constant latency. The common core tools developed here do not apply to protocols such as Mysticeti~\cite{DBLP:conf/ndss/BabelCDKKKST25}, which forego these techniques to further reduce latency. 

The communication protocol used to construct the DAG is another factor that differentiates DAG-Rider from other DAG-based protocols. Reliable Broadcast requires multiple round-trips to broadcast each block between processes, which leads to an increase in the latency of DAG-Rider. Protocols like Mysticeti or Cordial Miners abandon the idea of having a reliable channel, while Tusk~\cite{DBLP:conf/eurosys/DanezisKSS22}, Shoal~\cite{DBLP:journals/corr/abs-2306-03058}, and Bullshark ~\cite{DBLP:conf/ccs/SpiegelmanGSK22} replace reliable broadcast with Narwhal~\cite{DBLP:conf/eurosys/DanezisKSS22}. Relevant asymmetric equivalents would have to be developed in order to translate the aforementioned protocols to the asymmetric model. In particular, future work could address the problem of redesigning Narwhal for the asymmetric setting. Given the separation between the mempool and consensus layer employed by Tusk and Bullshark, this would also help to develop asymmetric versions of these two. In addition, this would lead to the development of a practical asymmetric DAG-based consensus. Asymmetric DAG-Rider (like its symmetric counterpart) requires unbounded memory in order to provide fairness, which makes it unfit for a practical system.

\section{Conclusion}
\label{sec:conclusion}
This work introduced the first DAG-based consensus protocol in the asymmetric-trust model, exploring the unaddressed relationship between asymmetric quorums and DAG-based consensus. Specifically, we built upon and generalized DAG-Rider. In particular, we have shown that standard techniques commonly used in this domain, such as common core primitives, were unsuitable for the asymmetric setting and required modifications to ensure a sound protocol. To this end, we have introduced the first asymmetric gather protocol, which operates in a constant number of rounds. Several DAG-based consensus protocols have emerged in recent years, some of which are used in real-world systems like blockchains. We believe the techniques introduced in this paper can also be applied to other DAG-based consensus protocols to adapt them to the asymmetric setting.

\begin{acks}
This work has been funded by the Swiss National Science Foundation (SNSF)
under grant agreement Nr\@.~219403 (Emerging Consensus), by the Initiative for
Cryptocurrencies and Contracts (IC3), and by the Cryptographic Foundations for
Digital Society, CryptoDigi, DFF Research Project 2, Grant ID
10.46540/3103-00077B.
\end{acks}

\clearpage

\appendix

\section{Proof of Lemma~\ref{lem:agather-no-common-core}}
\label{ap:gather-counterexample}
In this section we prove Lemma~\ref{lem:agather-no-common-core}, which states that Algorithm~\ref{alg:asymgather-try} does not satisfy the common core primitive. Using a counterexample quorum system, we show that the straightforward asymmetric gather implementation based on the quorum-replacement idea of Alpos\etal~\cite{DBLP:journals/dc/AlposCTZ24} does not achieve a common core after three rounds of communication. For this, we show that after the execution of the algorithm, there exists no common core within the output sets of each participant. We use the quorum system defined in Figure~\ref{fig:ce-failprone}, which contains only one quorum for each process. In this execution we will assume that all processes are correct, therefore wise, and that there are no failures during the execution. As all processes are wise, the maximal guild is composed by all the 30 processes. 

As mentioned in Section~\ref{sec:aGather}, a common core $S^+$ is a set composed of the initial values proposed by the members of a quorum $Q$ for some process $p_i$. In this counterexample we will execute the algorithm to obtain the output sets of each process and show that there is not one single set $S^+$ that is contained in the outputs of all processes. 

During each round, every process will wait to receive messages from one of its quorums, which it will then combine into a new set that will be distributed during the next round. We make use of three different figures to guide the reader through the counterexample and we use the terminology from Algorithm~\ref{alg:asymgather-try}. 

Figure~\ref{fig:counterexample-1} shows the $S$ sets of each process formed by the received values during the first round. Each row represents a process, and the blue columns show the processes from which sets were received. All the received messages will be combined into a set that will be distributed during the next round.

Figure~\ref{fig:counterexample-2} shows all the processes from which each process has received messages after the second communication round, also denoted $T$. That is, the union of the $S$ sets sent by the members of the quorum of each process. For example, process 1 obtains its $T$ set as the union of the $S$ sets of processes 1, 2, 3, 4, 5, and 16, the members of its only quorum. All this values are combined in a set denoted $T$, which is distributed during the next and final round. 

Figure~\ref{fig:counterexample-3} shows all the processes from which each process has received messages after receiving $T$ sets from its quorum. The union of all these values form the $U$ sets. These are the sets delivered by each process at the end of the execution of the algorithm. Note that there is no common core among the $U$ sets of each process. That is, there is no single $S$ set that is contained in all $U$ sets. An easy way to observe this is to note that all $S$ sets contain at least one process in the range $[16, 30]$ while the $U$ sets portrayed in Figure~\ref{fig:counterexample-3} are all missing at least one message from a process in the aforementioned range. This means that no $S$ set is contained in the output sets of all processes. 

It is possible to prove that, due to the consistency property of asymmetric quorums, the common core property can be obtained in $\log n -1$ rounds of the type of communication used here. That is, after a logarithmic number of rounds of receiving messages from a quorum and sending messages, there will be a common core in the $U$ sets. This gives an asymmetric gather with a higher execution time, which is not adequate for our purposes. This motivates the development of Algorithm~\ref{alg:core}, a constant-round asymmetric gather.

In Listing~\ref{lst:python-counterexample} we provide a Python script to verify the results. It consists of three rounds of creating and distributing sets. During the first round, each process creates the $S$ set by merging the initial values from the processes in its quorum. 

During the second and third rounds the processes create the $T$ and $U$ sets in the same way, merging the values in the $S$ and $T$ sets, respectively, from the processes in its quorum. 

In the end, the $U$ sets of all processes are checked to see if there exists some $S$ set that is present in the output sets of all processes. The variable \emph{quorums\_per\_process} will contain the $S$ sets contained in the $U$ sets of each process. In the final \emph{for} loop we check if there is some $S$ set that is contained in the $U$ sets of all processes. The variable \emph{all\_candidates} will contain the $S$ sets contained in all final $U$ sets. As there are none, it will be empty.

\clearpage
\begin{lstlisting}[language=Python, caption=Python script to verify the absence of a common core in Algorithm~\ref{alg:asymgather-try}, label={lst:python-counterexample}, basicstyle=\ttfamily\small, breaklines=true, keywordstyle=\color{blue}, commentstyle=\color{purple}, stringstyle=\color{red}]
### counterexample with 30 processes
quorums = {}
#quorums of each process, as in figure 1
quorums[1] = {1, 2, 3, 4, 5, 16}
quorums[2] = {1, 6, 7, 8, 9, 17}
quorums[3] = {1, 2, 3, 4, 5, 18}
quorums[4] = {1, 6, 7, 8, 9, 19}
quorums[5] = {2, 6, 10, 11, 12, 20}
quorums[6] = {4, 8, 11, 13, 15, 21}
quorums[7] = {4, 8, 11, 13, 15, 22}
quorums[8] = {5, 9, 12, 14, 15, 23}
quorums[9] = {5, 9, 12, 14, 15, 24}
quorums[10] = {4, 8, 11, 13, 15, 25}
quorums[11] = {1, 6, 7, 8, 9, 26}
quorums[12] = {2, 6, 10, 11, 12, 27}
quorums[13] = {3, 7, 10, 13, 14, 28}
quorums[14] = {3, 7, 10, 13, 14, 29}
quorums[15] = {5, 9, 12, 14, 15, 30}
quorums[16] = {1, 2, 3, 4, 5, 16}
quorums[17] = {1, 2, 3, 4, 5, 16}
quorums[18] = {1, 2, 3, 4, 5, 16}
quorums[19] = {1, 2, 3, 4, 5, 16}
quorums[20] = {1, 6, 7, 8, 9, 27}
quorums[21] = {1, 6, 7, 8, 9, 27}
quorums[22] = {1, 6, 7, 8, 9, 20}
quorums[23] = {2, 6, 10, 11, 12, 30}
quorums[24] = {2, 6, 10, 11, 12, 30}
quorums[25] = {1, 6, 7, 8, 9, 22}
quorums[26] = {1, 2, 3, 4, 5, 16}
quorums[27] = {1, 6, 7, 8, 9, 27}
quorums[28] = {1, 2, 3, 4, 5, 16}
quorums[29] = {1, 2, 3, 4, 5, 29}
quorums[30] = {2, 6, 10, 11, 12, 30}


# computing the first round sets: the s sets
s_sets = {}
for i in quorums:
    seen = []
    for j in quorums[i]:
        seen.append(j)
    s_sets[i] = set(seen)

# computing the second round sets: the t sets
t_sets = {}
for i in quorums:
    seen = set()
    for j in quorums[i]:
        seen = seen.union(s_sets[j])
    t_sets[i] = seen

# computing the third round sets: the sets u
u_sets = {}
for i in quorums:
    seen = set()
    for j in quorums[i]:
        seen = seen.union(t_sets[j])
    u_sets[i] = seen


s_sets_per_process = {}

# for the u set of each process, check which s sets it contains
for i in u_sets:
    contains = []
    # check all s sets
    for j in s_sets:
        if(s_sets[j].issubset(u_sets[i])):
            contains.append(j)
    s_sets_per_process[i] = set(contains)

all_candidates = set([i for i in range(1, 31)])

# check if there is some s set in the u set of all processes
for i in s_sets_per_process:
    all_candidates = all_candidates.intersection(s_sets_per_process[i])
# if all_candidates = set() then there is no s set contained 
# in the u set of all processes
print(all_candidates)

\end{lstlisting}

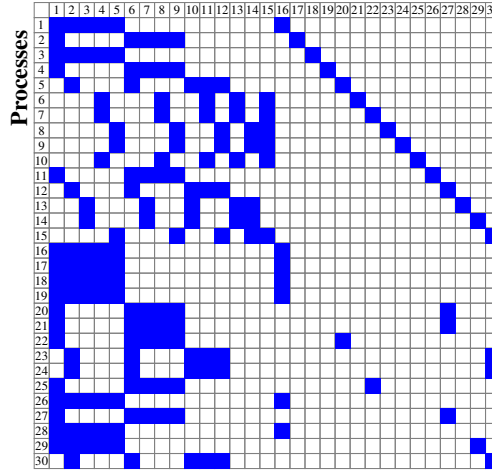
\begin{figure}[ht]
\centering
\begin{tikzpicture}[scale=0.2]
    \def\rows{30}
    \def\cols{30}

    \foreach \x in {0,...,\cols} {
        \foreach \y in {0,...,\rows} {
            \draw[gray] (\x, \y) rectangle (\x+1, \y+1);
        }
    }

    \foreach \x in {1,...,30} {
        \pgfmathsetmacro{\result}{int(31- \x)} 
        \node[align=center, outer sep=0pt] at (\x + 0.5, 30.5) {\fontsize{6}{0}\selectfont \textcolor{black}{\scalebox{.8}{\x}}};
    }
    \foreach \y in {1,...,30} {
        \pgfmathsetmacro{\result}{int(31- \y)} 
        \node[align=center, text width=1cm] at (0.5, \y - 0.5) {\fontsize{6}{0}\selectfont \textcolor{black}{\scalebox{.8}{\result}}};
    }

\def\yy{1}
\pgfmathsetmacro{\y}{int(30- \yy)}
\foreach \x in {16, 1, 2, 3, 4, 5} {
    \fill[\scolor, fill opacity=\sopacity] (\x, \y) rectangle (\x+1, \y+1);
}
\def\yy{2}
\pgfmathsetmacro{\y}{int(30- \yy)}
\foreach \x in {1, 17, 6, 7, 8, 9} {
    \fill[\scolor, fill opacity=\sopacity] (\x, \y) rectangle (\x+1, \y+1);
}
\def\yy{3}
\pgfmathsetmacro{\y}{int(30- \yy)}
\foreach \x in {1, 2, 3, 4, 5, 18} {
    \fill[\scolor, fill opacity=\sopacity] (\x, \y) rectangle (\x+1, \y+1);
}
\def\yy{4}
\pgfmathsetmacro{\y}{int(30- \yy)}
\foreach \x in {1, 19, 6, 7, 8, 9} {
    \fill[\scolor, fill opacity=\sopacity] (\x, \y) rectangle (\x+1, \y+1);
}
\def\yy{5}
\pgfmathsetmacro{\y}{int(30- \yy)}
\foreach \x in {2, 20, 6, 10, 11, 12} {
    \fill[\scolor, fill opacity=\sopacity] (\x, \y) rectangle (\x+1, \y+1);
}
\def\yy{6}
\pgfmathsetmacro{\y}{int(30- \yy)}
\foreach \x in {4, 21, 8, 11, 13, 15} {
    \fill[\scolor, fill opacity=\sopacity] (\x, \y) rectangle (\x+1, \y+1);
}
\def\yy{7}
\pgfmathsetmacro{\y}{int(30- \yy)}
\foreach \x in {4, 22, 8, 11, 13, 15} {
    \fill[\scolor, fill opacity=\sopacity] (\x, \y) rectangle (\x+1, \y+1);
}
\def\yy{8}
\pgfmathsetmacro{\y}{int(30- \yy)}
\foreach \x in {5, 23, 9, 12, 14, 15} {
    \fill[\scolor, fill opacity=\sopacity] (\x, \y) rectangle (\x+1, \y+1);
}
\def\yy{9}
\pgfmathsetmacro{\y}{int(30- \yy)}
\foreach \x in {5, 24, 9, 12, 14, 15} {
    \fill[\scolor, fill opacity=\sopacity] (\x, \y) rectangle (\x+1, \y+1);
}
\def\yy{10}
\pgfmathsetmacro{\y}{int(30- \yy)}
\foreach \x in {4, 8, 25, 11, 13, 15} {
    \fill[\scolor, fill opacity=\sopacity] (\x, \y) rectangle (\x+1, \y+1);
}
\def\yy{11}
\pgfmathsetmacro{\y}{int(30- \yy)}
\foreach \x in {1, 6, 7, 8, 9, 26} {
    \fill[\scolor, fill opacity=\sopacity] (\x, \y) rectangle (\x+1, \y+1);
}
\def\yy{12}
\pgfmathsetmacro{\y}{int(30- \yy)}
\foreach \x in {2, 6, 27, 10, 11, 12} {
    \fill[\scolor, fill opacity=\sopacity] (\x, \y) rectangle (\x+1, \y+1);
}
\def\yy{13}
\pgfmathsetmacro{\y}{int(30- \yy)}
\foreach \x in {3, 7, 10, 28, 13, 14} {
    \fill[\scolor, fill opacity=\sopacity] (\x, \y) rectangle (\x+1, \y+1);
}
\def\yy{14}
\pgfmathsetmacro{\y}{int(30- \yy)}
\foreach \x in {29, 3, 7, 10, 13, 14} {
    \fill[\scolor, fill opacity=\sopacity] (\x, \y) rectangle (\x+1, \y+1);
}
\def\yy{15}
\pgfmathsetmacro{\y}{int(30- \yy)}
\foreach \x in {5, 30, 9, 12, 14, 15} {
    \fill[\scolor, fill opacity=\sopacity] (\x, \y) rectangle (\x+1, \y+1);
}
\def\yy{16}
\pgfmathsetmacro{\y}{int(30- \yy)}
\foreach \x in {16, 1, 2, 3, 4, 5} {
    \fill[\scolor, fill opacity=\sopacity] (\x, \y) rectangle (\x+1, \y+1);
}
\def\yy{17}
\pgfmathsetmacro{\y}{int(30- \yy)}
\foreach \x in {16, 1, 2, 3, 4, 5} {
    \fill[\scolor, fill opacity=\sopacity] (\x, \y) rectangle (\x+1, \y+1);
}
\def\yy{18}
\pgfmathsetmacro{\y}{int(30- \yy)}
\foreach \x in {16, 1, 2, 3, 4, 5} {
    \fill[\scolor, fill opacity=\sopacity] (\x, \y) rectangle (\x+1, \y+1);
}
\def\yy{19}
\pgfmathsetmacro{\y}{int(30- \yy)}
\foreach \x in {16, 1, 2, 3, 4, 5} {
    \fill[\scolor, fill opacity=\sopacity] (\x, \y) rectangle (\x+1, \y+1);
}
\def\yy{20}
\pgfmathsetmacro{\y}{int(30- \yy)}
\foreach \x in {1, 6, 7, 8, 9, 27} {
    \fill[\scolor, fill opacity=\sopacity] (\x, \y) rectangle (\x+1, \y+1);
}
\def\yy{21}
\pgfmathsetmacro{\y}{int(30- \yy)}
\foreach \x in {1, 6, 7, 8, 9, 27} {
    \fill[\scolor, fill opacity=\sopacity] (\x, \y) rectangle (\x+1, \y+1);
}
\def\yy{22}
\pgfmathsetmacro{\y}{int(30- \yy)}
\foreach \x in {1, 20, 6, 7, 8, 9} {
    \fill[\scolor, fill opacity=\sopacity] (\x, \y) rectangle (\x+1, \y+1);
}
\def\yy{23}
\pgfmathsetmacro{\y}{int(30- \yy)}
\foreach \x in {2, 6, 10, 11, 12, 30} {
    \fill[\scolor, fill opacity=\sopacity] (\x, \y) rectangle (\x+1, \y+1);
}
\def\yy{24}
\pgfmathsetmacro{\y}{int(30- \yy)}
\foreach \x in {2, 6, 10, 11, 12, 30} {
    \fill[\scolor, fill opacity=\sopacity] (\x, \y) rectangle (\x+1, \y+1);
}
\def\yy{25}
\pgfmathsetmacro{\y}{int(30- \yy)}
\foreach \x in {1, 6, 7, 8, 9, 22} {
    \fill[\scolor, fill opacity=\sopacity] (\x, \y) rectangle (\x+1, \y+1);
}
\def\yy{26}
\pgfmathsetmacro{\y}{int(30- \yy)}
\foreach \x in {16, 1, 2, 3, 4, 5} {
    \fill[\scolor, fill opacity=\sopacity] (\x, \y) rectangle (\x+1, \y+1);
}
\def\yy{27}
\pgfmathsetmacro{\y}{int(30- \yy)}
\foreach \x in {1, 6, 7, 8, 9, 27} {
    \fill[\scolor, fill opacity=\sopacity] (\x, \y) rectangle (\x+1, \y+1);
}
\def\yy{28}
\pgfmathsetmacro{\y}{int(30- \yy)}
\foreach \x in {16, 1, 2, 3, 4, 5} {
    \fill[\scolor, fill opacity=\sopacity] (\x, \y) rectangle (\x+1, \y+1);
}
\def\yy{29}
\pgfmathsetmacro{\y}{int(30- \yy)}
\foreach \x in {1, 2, 3, 4, 5, 29} {
    \fill[\scolor, fill opacity=\sopacity] (\x, \y) rectangle (\x+1, \y+1);
}
\def\yy{30}
\pgfmathsetmacro{\y}{int(30- \yy)}
\foreach \x in {2, 6, 10, 11, 12, 30} {
    \fill[\scolor, fill opacity=\sopacity] (\x, \y) rectangle (\x+1, \y+1);
}

\node at (13, 32) {\textbf{Messages received after one round of receiving messages}}; %
\node[rotate=90] at (-1, 26) {\textbf{Processes}}; %

\end{tikzpicture}
    \caption{Values possessed by each process after one round of hearing messages coming from its quorums. Also denoted as $S$ sets }
    \label{fig:counterexample-1}
\end{figure}

\begin{figure}[ht]
\centering
\begin{tikzpicture}[scale=0.2]
    \def\rows{30}
    \def\cols{30}

    \foreach \x in {0,...,\cols} {
        \foreach \y in {0,...,\rows} {
            \draw[gray] (\x, \y) rectangle (\x+1, \y+1);
        }
    }

    \foreach \x in {1,...,30} {
        \pgfmathsetmacro{\result}{int(31- \x)} 
        \node[align=center, outer sep=0pt] at (\x + 0.5, 30.5) {\fontsize{6}{0}\selectfont \textcolor{black}{\scalebox{.8}{\x}}};
    }
    \foreach \y in {1,...,30} {
        \pgfmathsetmacro{\result}{int(31- \y)} 
        \node[align=center, text width=1cm] at (0.5, \y - 0.5) {\fontsize{6}{0}\selectfont \textcolor{black}{\scalebox{.8}{\result}}};
    }

\def\yy{1}
\pgfmathsetmacro{\y}{int(30- \yy)}
\foreach \x in {1, 2, 3, 4, 5, 6, 7, 8, 9, 10, 11, 12, 16, 17, 18, 19, 20} {
    \fill[\scolor, fill opacity=\sopacity] (\x, \y) rectangle (\x+1, \y+1);
}
\def\yy{2}
\pgfmathsetmacro{\y}{int(30- \yy)}
\foreach \x in {1, 2, 3, 4, 5, 8, 9, 11, 12, 13, 14, 15, 16, 21, 22, 23, 24} {
    \fill[\scolor, fill opacity=\sopacity] (\x, \y) rectangle (\x+1, \y+1);
}
\def\yy{3}
\pgfmathsetmacro{\y}{int(30- \yy)}
\foreach \x in {1, 2, 3, 4, 5, 6, 7, 8, 9, 10, 11, 12, 16, 17, 18, 19, 20} {
    \fill[\scolor, fill opacity=\sopacity] (\x, \y) rectangle (\x+1, \y+1);
}
\def\yy{4}
\pgfmathsetmacro{\y}{int(30- \yy)}
\foreach \x in {1, 2, 3, 4, 5, 8, 9, 11, 12, 13, 14, 15, 16, 21, 22, 23, 24} {
    \fill[\scolor, fill opacity=\sopacity] (\x, \y) rectangle (\x+1, \y+1);
}
\def\yy{5}
\pgfmathsetmacro{\y}{int(30- \yy)}
\foreach \x in {1, 2, 4, 6, 7, 8, 9, 10, 11, 12, 13, 15, 17, 21, 25, 26, 27} {
    \fill[\scolor, fill opacity=\sopacity] (\x, \y) rectangle (\x+1, \y+1);
}
\def\yy{6}
\pgfmathsetmacro{\y}{int(30- \yy)}
\foreach \x in {1, 3, 5, 6, 7, 8, 9, 10, 12, 13, 14, 15, 19, 23, 26, 27, 28, 30} {
    \fill[\scolor, fill opacity=\sopacity] (\x, \y) rectangle (\x+1, \y+1);
}
\def\yy{7}
\pgfmathsetmacro{\y}{int(30- \yy)}
\foreach \x in {1, 3, 5, 6, 7, 8, 9, 10, 12, 13, 14, 15, 19, 20, 23, 26, 28, 30} {
    \fill[\scolor, fill opacity=\sopacity] (\x, \y) rectangle (\x+1, \y+1);
}
\def\yy{8}
\pgfmathsetmacro{\y}{int(30- \yy)}
\foreach \x in {2, 3, 5, 6, 7, 9, 10, 11, 12, 13, 14, 15, 20, 24, 27, 29, 30} {
    \fill[\scolor, fill opacity=\sopacity] (\x, \y) rectangle (\x+1, \y+1);
}
\def\yy{9}
\pgfmathsetmacro{\y}{int(30- \yy)}
\foreach \x in {2, 3, 5, 6, 7, 9, 10, 11, 12, 13, 14, 15, 20, 24, 27, 29, 30} {
    \fill[\scolor, fill opacity=\sopacity] (\x, \y) rectangle (\x+1, \y+1);
}
\def\yy{10}
\pgfmathsetmacro{\y}{int(30- \yy)}
\foreach \x in {1, 3, 5, 6, 7, 8, 9, 10, 12, 13, 14, 15, 19, 22, 23, 26, 28, 30} {
    \fill[\scolor, fill opacity=\sopacity] (\x, \y) rectangle (\x+1, \y+1);
}
\def\yy{11}
\pgfmathsetmacro{\y}{int(30- \yy)}
\foreach \x in {1, 2, 3, 4, 5, 8, 9, 11, 12, 13, 14, 15, 16, 21, 22, 23, 24} {
    \fill[\scolor, fill opacity=\sopacity] (\x, \y) rectangle (\x+1, \y+1);
}
\def\yy{12}
\pgfmathsetmacro{\y}{int(30- \yy)}
\foreach \x in {1, 2, 4, 6, 7, 8, 9, 10, 11, 12, 13, 15, 17, 21, 25, 26, 27} {
    \fill[\scolor, fill opacity=\sopacity] (\x, \y) rectangle (\x+1, \y+1);
}
\def\yy{13}
\pgfmathsetmacro{\y}{int(30- \yy)}
\foreach \x in {1, 2, 3, 4, 5, 7, 8, 10, 11, 13, 14, 15, 16, 18, 22, 25, 28, 29} {
    \fill[\scolor, fill opacity=\sopacity] (\x, \y) rectangle (\x+1, \y+1);
}
\def\yy{14}
\pgfmathsetmacro{\y}{int(30- \yy)}
\foreach \x in {1, 2, 3, 4, 5, 7, 8, 10, 11, 13, 14, 15, 18, 22, 25, 28, 29} {
    \fill[\scolor, fill opacity=\sopacity] (\x, \y) rectangle (\x+1, \y+1);
}
\def\yy{15}
\pgfmathsetmacro{\y}{int(30- \yy)}
\foreach \x in {2, 3, 5, 6, 7, 9, 10, 11, 12, 13, 14, 15, 20, 24, 27, 29, 30} {
    \fill[\scolor, fill opacity=\sopacity] (\x, \y) rectangle (\x+1, \y+1);
}
\def\yy{16}
\pgfmathsetmacro{\y}{int(30- \yy)}
\foreach \x in {1, 2, 3, 4, 5, 6, 7, 8, 9, 10, 11, 12, 16, 17, 18, 19, 20} {
    \fill[\scolor, fill opacity=\sopacity] (\x, \y) rectangle (\x+1, \y+1);
}
\def\yy{17}
\pgfmathsetmacro{\y}{int(30- \yy)}
\foreach \x in {1, 2, 3, 4, 5, 6, 7, 8, 9, 10, 11, 12, 16, 17, 18, 19, 20} {
    \fill[\scolor, fill opacity=\sopacity] (\x, \y) rectangle (\x+1, \y+1);
}
\def\yy{18}
\pgfmathsetmacro{\y}{int(30- \yy)}
\foreach \x in {1, 2, 3, 4, 5, 6, 7, 8, 9, 10, 11, 12, 16, 17, 18, 19, 20} {
    \fill[\scolor, fill opacity=\sopacity] (\x, \y) rectangle (\x+1, \y+1);
}
\def\yy{19}
\pgfmathsetmacro{\y}{int(30- \yy)}
\foreach \x in {1, 2, 3, 4, 5, 6, 7, 8, 9, 10, 11, 12, 16, 17, 18, 19, 20} {
    \fill[\scolor, fill opacity=\sopacity] (\x, \y) rectangle (\x+1, \y+1);
}
\def\yy{20}
\pgfmathsetmacro{\y}{int(30- \yy)}
\foreach \x in {1, 2, 3, 4, 5, 6, 7, 8, 9, 11, 12, 13, 14, 15, 16, 21, 22, 23, 24, 27} {
    \fill[\scolor, fill opacity=\sopacity] (\x, \y) rectangle (\x+1, \y+1);
}
\def\yy{21}
\pgfmathsetmacro{\y}{int(30- \yy)}
\foreach \x in {1, 2, 3, 4, 5, 6, 7, 8, 9, 11, 12, 13, 14, 15, 16, 21, 22, 23, 24, 27} {
    \fill[\scolor, fill opacity=\sopacity] (\x, \y) rectangle (\x+1, \y+1);
}
\def\yy{22}
\pgfmathsetmacro{\y}{int(30- \yy)}
\foreach \x in {1, 2, 3, 4, 5, 6, 7, 8, 9, 11, 12, 13, 14, 15, 16, 21, 22, 23, 24, 27} {
    \fill[\scolor, fill opacity=\sopacity] (\x, \y) rectangle (\x+1, \y+1);
}
\def\yy{23}
\pgfmathsetmacro{\y}{int(30- \yy)}
\foreach \x in {1, 2, 4, 6, 7, 8, 9, 10, 11, 12, 13, 15, 17, 21, 25, 26, 27, 30} {
    \fill[\scolor, fill opacity=\sopacity] (\x, \y) rectangle (\x+1, \y+1);
}
\def\yy{24}
\pgfmathsetmacro{\y}{int(30- \yy)}
\foreach \x in {1, 2, 4, 6, 7, 8, 9, 10, 11, 12, 13, 15, 17, 21, 25, 26, 27, 30} {
    \fill[\scolor, fill opacity=\sopacity] (\x, \y) rectangle (\x+1, \y+1);
}
\def\yy{25}
\pgfmathsetmacro{\y}{int(30- \yy)}
\foreach \x in {1, 2, 3, 4, 5, 6, 7, 8, 9, 11, 12, 13, 14, 15, 16, 20, 21, 22, 23, 24} {
    \fill[\scolor, fill opacity=\sopacity] (\x, \y) rectangle (\x+1, \y+1);
}
\def\yy{26}
\pgfmathsetmacro{\y}{int(30- \yy)}
\foreach \x in {1, 2, 3, 4, 5, 6, 7, 8, 9, 10, 11, 12, 16, 17, 18, 19, 20} {
    \fill[\scolor, fill opacity=\sopacity] (\x, \y) rectangle (\x+1, \y+1);
}
\def\yy{27}
\pgfmathsetmacro{\y}{int(30- \yy)}
\foreach \x in {1, 2, 3, 4, 5, 6, 7, 8, 9, 11, 12, 13, 14, 15, 16, 21, 22, 23, 24, 27} {
    \fill[\scolor, fill opacity=\sopacity] (\x, \y) rectangle (\x+1, \y+1);
}
\def\yy{28}
\pgfmathsetmacro{\y}{int(30- \yy)}
\foreach \x in {1, 2, 3, 4, 5, 6, 7, 8, 9, 10, 11, 12, 16, 17, 18, 19, 20} {
    \fill[\scolor, fill opacity=\sopacity] (\x, \y) rectangle (\x+1, \y+1);
}
\def\yy{29}
\pgfmathsetmacro{\y}{int(30- \yy)}
\foreach \x in {1, 2, 3, 4, 5, 6, 7, 8, 9, 10, 11, 12, 16, 17, 18, 19, 20, 29} {
    \fill[\scolor, fill opacity=\sopacity] (\x, \y) rectangle (\x+1, \y+1);
}
\def\yy{30}
\pgfmathsetmacro{\y}{int(30- \yy)}
\foreach \x in {1, 2, 4, 6, 7, 8, 9, 10, 11, 12, 13, 15, 17, 21, 25, 26, 27, 30} {
    \fill[\scolor, fill opacity=\sopacity] (\x, \y) rectangle (\x+1, \y+1);
}

\node at (13, 32) {\textbf{Messages received after two rounds of receiving messages}}; %
\node[rotate=90] at (-1, 26) {\textbf{Processes}}; %

\end{tikzpicture}
    \caption{Values possessed by each process after the second round of receiving messages from one of their $P \setminus F_i$. Also denoted as $T$ sets}
    \label{fig:counterexample-2}
\end{figure}
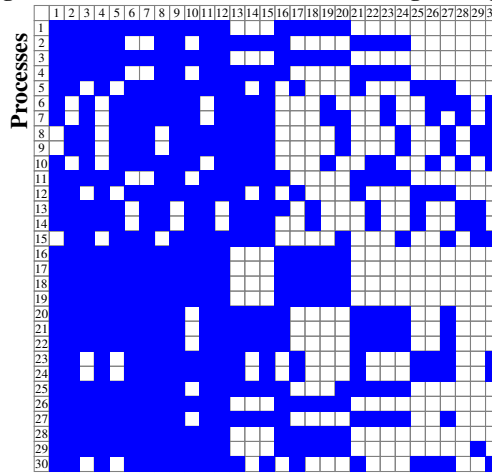

\begin{figure}[ht]
\centering
\begin{tikzpicture}[scale=0.2]
    \def\rows{30}
    \def\cols{30}

    \foreach \x in {0,...,\cols} {
        \foreach \y in {0,...,\rows} {
            \draw[gray] (\x, \y) rectangle (\x+1, \y+1);
        }
    }

    \foreach \x in {1,...,30} {
        \pgfmathsetmacro{\result}{int(31- \x)} 
        \node[align=center, outer sep=0pt] at (\x + 0.5, 30.5) {\fontsize{6}{0}\selectfont \textcolor{black}{\scalebox{.8}{\x}}};
    }
    \foreach \y in {1,...,30} {
        \pgfmathsetmacro{\result}{int(31- \y)} 
        \node[align=center, text width=1cm] at (0.5, \y - 0.5) {\fontsize{6}{0}\selectfont \textcolor{black}{\scalebox{.8}{\result}}};
    }

\def\yy{1}
\pgfmathsetmacro{\y}{int(30- \yy)}
\foreach \x in {1, 2, 3, 4, 5, 6, 7, 8, 9, 10, 11, 12, 13, 14, 15, 16, 17, 18, 19, 20, 21, 22, 23, 24, 25, 26, 27} {
    \fill[\scolor, fill opacity=\sopacity] (\x, \y) rectangle (\x+1, \y+1);
}
\def\yy{2}
\pgfmathsetmacro{\y}{int(30- \yy)}
\foreach \x in {1, 2, 3, 4, 5, 6, 7, 8, 9, 10, 11, 12, 13, 14, 15, 16, 17, 18, 19, 20, 23, 24, 26, 27, 28, 29, 30} {
    \fill[\scolor, fill opacity=\sopacity] (\x, \y) rectangle (\x+1, \y+1);
}
\def\yy{3}
\pgfmathsetmacro{\y}{int(30- \yy)}
\foreach \x in {1, 2, 3, 4, 5, 6, 7, 8, 9, 10, 11, 12, 13, 14, 15, 16, 17, 18, 19, 20, 21, 22, 23, 24, 25, 26, 27} {
    \fill[\scolor, fill opacity=\sopacity] (\x, \y) rectangle (\x+1, \y+1);
}
\def\yy{4}
\pgfmathsetmacro{\y}{int(30- \yy)}
\foreach \x in {1, 2, 3, 4, 5, 6, 7, 8, 9, 10, 11, 12, 13, 14, 15, 16, 17, 18, 19, 20, 23, 24, 26, 27, 28, 29, 30} {
    \fill[\scolor, fill opacity=\sopacity] (\x, \y) rectangle (\x+1, \y+1);
}
\def\yy{5}
\pgfmathsetmacro{\y}{int(30- \yy)}
\foreach \x in {1, 2, 3, 4, 5, 6, 7, 8, 9, 10, 11, 12, 13, 14, 15, 16, 17, 19, 21, 22, 23, 24, 25, 26, 27, 28, 30} {
    \fill[\scolor, fill opacity=\sopacity] (\x, \y) rectangle (\x+1, \y+1);
}
\def\yy{6}
\pgfmathsetmacro{\y}{int(30- \yy)}
\foreach \x in {1, 2, 3, 4, 5, 6, 7, 8, 9, 10, 11, 12, 13, 14, 15, 16, 18, 20, 21, 22, 23, 24, 25, 27, 28, 29, 30} {
    \fill[\scolor, fill opacity=\sopacity] (\x, \y) rectangle (\x+1, \y+1);
}
\def\yy{7}
\pgfmathsetmacro{\y}{int(30- \yy)}
\foreach \x in {1, 2, 3, 4, 5, 6, 7, 8, 9, 10, 11, 12, 13, 14, 15, 16, 18, 20, 21, 22, 23, 24, 25, 27, 28, 29, 30} {
    \fill[\scolor, fill opacity=\sopacity] (\x, \y) rectangle (\x+1, \y+1);
}
\def\yy{8}
\pgfmathsetmacro{\y}{int(30- \yy)}
\foreach \x in {1, 2, 3, 4, 5, 6, 7, 8, 9, 10, 11, 12, 13, 14, 15, 17, 18, 20, 21, 22, 24, 25, 26, 27, 28, 29, 30} {
    \fill[\scolor, fill opacity=\sopacity] (\x, \y) rectangle (\x+1, \y+1);
}
\def\yy{9}
\pgfmathsetmacro{\y}{int(30- \yy)}
\foreach \x in {1, 2, 3, 4, 5, 6, 7, 8, 9, 10, 11, 12, 13, 14, 15, 17, 18, 20, 21, 22, 24, 25, 26, 27, 28, 29, 30} {
    \fill[\scolor, fill opacity=\sopacity] (\x, \y) rectangle (\x+1, \y+1);
}
\def\yy{10}
\pgfmathsetmacro{\y}{int(30- \yy)}
\foreach \x in {1, 2, 3, 4, 5, 6, 7, 8, 9, 10, 11, 12, 13, 14, 15, 16, 18, 20, 21, 22, 23, 24, 25, 27, 28, 29, 30} {
    \fill[\scolor, fill opacity=\sopacity] (\x, \y) rectangle (\x+1, \y+1);
}
\def\yy{11}
\pgfmathsetmacro{\y}{int(30- \yy)}
\foreach \x in {1, 2, 3, 4, 5, 6, 7, 8, 9, 10, 11, 12, 13, 14, 15, 16, 17, 18, 19, 20, 23, 24, 26, 27, 28, 29, 30} {
    \fill[\scolor, fill opacity=\sopacity] (\x, \y) rectangle (\x+1, \y+1);
}
\def\yy{12}
\pgfmathsetmacro{\y}{int(30- \yy)}
\foreach \x in {1, 2, 3, 4, 5, 6, 7, 8, 9, 10, 11, 12, 13, 14, 15, 16, 17, 19, 21, 22, 23, 24, 25, 26, 27, 28, 30} {
    \fill[\scolor, fill opacity=\sopacity] (\x, \y) rectangle (\x+1, \y+1);
}
\def\yy{13}
\pgfmathsetmacro{\y}{int(30- \yy)}
\foreach \x in {1, 2, 3, 4, 5, 6, 7, 8, 9, 10, 11, 12, 13, 14, 15, 16, 17, 18, 19, 20, 22, 23, 25, 26, 28, 29, 30} {
    \fill[\scolor, fill opacity=\sopacity] (\x, \y) rectangle (\x+1, \y+1);
}
\def\yy{14}
\pgfmathsetmacro{\y}{int(30- \yy)}
\foreach \x in {1, 2, 3, 4, 5, 6, 7, 8, 9, 10, 11, 12, 13, 14, 15, 16, 17, 18, 19, 20, 22, 23, 25, 26, 28, 29, 30} {
    \fill[\scolor, fill opacity=\sopacity] (\x, \y) rectangle (\x+1, \y+1);
}
\def\yy{15}
\pgfmathsetmacro{\y}{int(30- \yy)}
\foreach \x in {1, 2, 3, 4, 5, 6, 7, 8, 9, 10, 11, 12, 13, 14, 15, 17, 18, 20, 21, 22, 24, 25, 26, 27, 28, 29, 30} {
    \fill[\scolor, fill opacity=\sopacity] (\x, \y) rectangle (\x+1, \y+1);
}
\def\yy{16}
\pgfmathsetmacro{\y}{int(30- \yy)}
\foreach \x in {1, 2, 3, 4, 5, 6, 7, 8, 9, 10, 11, 12, 13, 14, 15, 16, 17, 18, 19, 20, 21, 22, 23, 24, 25, 26, 27} {
    \fill[\scolor, fill opacity=\sopacity] (\x, \y) rectangle (\x+1, \y+1);
}
\def\yy{17}
\pgfmathsetmacro{\y}{int(30- \yy)}
\foreach \x in {1, 2, 3, 4, 5, 6, 7, 8, 9, 10, 11, 12, 13, 14, 15, 16, 17, 18, 19, 20, 21, 22, 23, 24, 25, 26, 27} {
    \fill[\scolor, fill opacity=\sopacity] (\x, \y) rectangle (\x+1, \y+1);
}
\def\yy{18}
\pgfmathsetmacro{\y}{int(30- \yy)}
\foreach \x in {1, 2, 3, 4, 5, 6, 7, 8, 9, 10, 11, 12, 13, 14, 15, 16, 17, 18, 19, 20, 21, 22, 23, 24, 25, 26, 27} {
    \fill[\scolor, fill opacity=\sopacity] (\x, \y) rectangle (\x+1, \y+1);
}
\def\yy{19}
\pgfmathsetmacro{\y}{int(30- \yy)}
\foreach \x in {1, 2, 3, 4, 5, 6, 7, 8, 9, 10, 11, 12, 13, 14, 15, 16, 17, 18, 19, 20, 21, 22, 23, 24, 25, 26, 27} {
    \fill[\scolor, fill opacity=\sopacity] (\x, \y) rectangle (\x+1, \y+1);
}
\def\yy{20}
\pgfmathsetmacro{\y}{int(30- \yy)}
\foreach \x in {1, 2, 3, 4, 5, 6, 7, 8, 9, 10, 11, 12, 13, 14, 15, 16, 17, 18, 19, 20, 21, 22, 23, 24, 26, 27, 28, 29, 30} {
    \fill[\scolor, fill opacity=\sopacity] (\x, \y) rectangle (\x+1, \y+1);
}
\def\yy{21}
\pgfmathsetmacro{\y}{int(30- \yy)}
\foreach \x in {1, 2, 3, 4, 5, 6, 7, 8, 9, 10, 11, 12, 13, 14, 15, 16, 17, 18, 19, 20, 21, 22, 23, 24, 26, 27, 28, 29, 30} {
    \fill[\scolor, fill opacity=\sopacity] (\x, \y) rectangle (\x+1, \y+1);
}
\def\yy{22}
\pgfmathsetmacro{\y}{int(30- \yy)}
\foreach \x in {1, 2, 3, 4, 5, 6, 7, 8, 9, 10, 11, 12, 13, 14, 15, 16, 17, 18, 19, 20, 21, 22, 23, 24, 26, 27, 28, 29, 30} {
    \fill[\scolor, fill opacity=\sopacity] (\x, \y) rectangle (\x+1, \y+1);
}
\def\yy{23}
\pgfmathsetmacro{\y}{int(30- \yy)}
\foreach \x in {1, 2, 3, 4, 5, 6, 7, 8, 9, 10, 11, 12, 13, 14, 15, 16, 17, 19, 21, 22, 23, 24, 25, 26, 27, 28, 30} {
    \fill[\scolor, fill opacity=\sopacity] (\x, \y) rectangle (\x+1, \y+1);
}
\def\yy{24}
\pgfmathsetmacro{\y}{int(30- \yy)}
\foreach \x in {1, 2, 3, 4, 5, 6, 7, 8, 9, 10, 11, 12, 13, 14, 15, 16, 17, 19, 21, 22, 23, 24, 25, 26, 27, 28, 30} {
    \fill[\scolor, fill opacity=\sopacity] (\x, \y) rectangle (\x+1, \y+1);
}
\def\yy{25}
\pgfmathsetmacro{\y}{int(30- \yy)}
\foreach \x in {1, 2, 3, 4, 5, 6, 7, 8, 9, 10, 11, 12, 13, 14, 15, 16, 17, 18, 19, 20, 21, 22, 23, 24, 26, 27, 28, 29, 30} {
    \fill[\scolor, fill opacity=\sopacity] (\x, \y) rectangle (\x+1, \y+1);
}
\def\yy{26}
\pgfmathsetmacro{\y}{int(30- \yy)}
\foreach \x in {1, 2, 3, 4, 5, 6, 7, 8, 9, 10, 11, 12, 13, 14, 15, 16, 17, 18, 19, 20, 21, 22, 23, 24, 25, 26, 27} {
    \fill[\scolor, fill opacity=\sopacity] (\x, \y) rectangle (\x+1, \y+1);
}
\def\yy{27}
\pgfmathsetmacro{\y}{int(30- \yy)}
\foreach \x in {1, 2, 3, 4, 5, 6, 7, 8, 9, 10, 11, 12, 13, 14, 15, 16, 17, 18, 19, 20, 21, 22, 23, 24, 26, 27, 28, 29, 30} {
    \fill[\scolor, fill opacity=\sopacity] (\x, \y) rectangle (\x+1, \y+1);
}
\def\yy{28}
\pgfmathsetmacro{\y}{int(30- \yy)}
\foreach \x in {1, 2, 3, 4, 5, 6, 7, 8, 9, 10, 11, 12, 13, 14, 15, 16, 17, 18, 19, 20, 21, 22, 23, 24, 25, 26, 27} {
    \fill[\scolor, fill opacity=\sopacity] (\x, \y) rectangle (\x+1, \y+1);
}
\def\yy{29}
\pgfmathsetmacro{\y}{int(30- \yy)}
\foreach \x in {1, 2, 3, 4, 5, 6, 7, 8, 9, 10, 11, 12, 13, 14, 15, 16, 17, 18, 19, 20, 21, 22, 23, 24, 25, 26, 27, 29} {
    \fill[\scolor, fill opacity=\sopacity] (\x, \y) rectangle (\x+1, \y+1);
}
\def\yy{30}
\pgfmathsetmacro{\y}{int(30- \yy)}
\foreach \x in {1, 2, 3, 4, 5, 6, 7, 8, 9, 10, 11, 12, 13, 14, 15, 16, 17, 19, 21, 22, 23, 24, 25, 26, 27, 28, 30} {
    \fill[\scolor, fill opacity=\sopacity] (\x, \y) rectangle (\x+1, \y+1);
}

\node at (13, 32) {\textbf{Messages received after three rounds of receiving messages}}; %
\node[rotate=90] at (-1, 26) {\textbf{Processes}}; %

\end{tikzpicture}
    \caption{Values possessed by each process after the third round of receiving messages from the processes in $P \setminus F_i$. Also denotes as $U$ sets. Observe that there is no set $S^+$ (where $S^+$ is a quorum of a wise process) that is received by all processes. This happens because all quorums of all processes contain at least one element in the range $[16, 30]$ and as can be observed in the picture, all processes are missing at least one element from this range in their received values. }
    \label{fig:counterexample-3}
\end{figure}
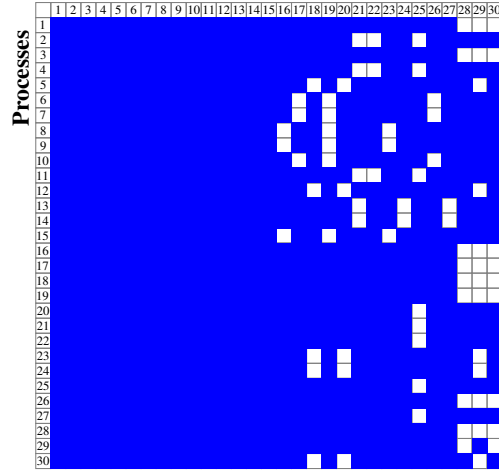

\clearpage

\bibliography{references, dblpbibtex}
\bibliographystyle{ACM-Reference-Format}

\end{document}